\def\full{1}
\title{\textsc{\textbf{
      Title
    }}
}
\author{
  Author
}
\date{\today}
\newtheorem{theorem}{Theorem}[section]
\newtheorem{proposition}[theorem]{Proposition}
\newtheorem{lemma}[theorem]{Lemma}
\newtheorem{corollary}[theorem]{Corollary}
\newtheorem{conjecture}[theorem]{Conjecture}
\newtheorem{fact}[theorem]{Fact}
\theoremstyle{definition}
\newtheorem{definition}[theorem]{Definition}
\newtheorem{remark}[theorem]{Remark}
\newcommand{\nfrac}{\nicefrac}
\newcommand{\half}{\nfrac12}
\newcommand{\mper}{\,.}
\newcommand{\mcom}{\,,}
\definecolor{DSgray}{cmyk}{0,0,0,0.7}
\definecolor{DSred}{cmyk}{0,0.7,0,0.7}
\newcommand{\Brac}[1]{\left[#1 \right]}
\newcommand{\poly}{\mathrm{poly}}
\newcommand{\vol}{\mathrm{vol}}
\newcommand{\dist}{\mathit{dist}}
\renewcommand{\leq}{\leqslant}
\renewcommand{\le}{\leqslant}
\renewcommand{\geq}{\geqslant}
\renewcommand{\ge}{\geqslant}
\newcommand{\Z}{\mathbb Z}
\newcommand{\R}{\mathbb R}
\newcommand{\Esymb}{\mathbb{E}}
\newcommand{\Psymb}{\mathbb{P}}
\DeclareMathOperator*{\E}{\Esymb}
\DeclareMathOperator*{\ProbOp}{\Psymb r}
\renewcommand{\Pr}{\ProbOp}
\newcommand{\Ex}[1]{\E\Brac{#1}}
\renewcommand{\epsilon}{\varepsilon}
\newcommand{\e}{\epsilon}
\newcommand{\eps}{\epsilon}
\newcommand{\sse}{\subseteq}
\newcommand{\onlyproc}[1]{}
\newcommand{\onlyproc}[1]{#1}
\begin{document}

\sloppy

% paper macros

\newenvironment{enumerate*}%
  {\begin{enumerate}%
    \setlength{\itemsep}{0pt}%
    \setlength{\parskip}{0pt}}%
  {\end{enumerate}}

\newcommand{\dee}{\mathcal{D}}
\newcommand{\Range}{\mathcal{R}}
\newcommand\rd{\mathrm{d}}
\newcommand\Rn{\ensuremath{\Re^n}}
\newcommand\Rk{\ensuremath{\Re^k}}
\newcommand\Rd{\ensuremath{\Re^d}}
\renewcommand\R{\Re}
\newcommand\GammaDist{\mathrm{Gamma}}
\newcommand\err{\mathrm{err}}
\renewcommand\vol{\mathrm{Vol}}
\newcommand{\vollb}{\mathrm{VolLB}}
\newcommand{\Gvollb}{\mathrm{GVolLB}}

\title{\bf On the Geometry of Differential Privacy}
\author{Moritz Hardt\thanks{Department of Computer Science, Princeton
University, \texttt{mhardt@cs.princeton.edu}. 
Supported by NSF grants CCF-0426582 and CCF-0832797.
Part of this work has been done while the
author visited Microsoft Research Silicon Valley. 
}
\and Kunal Talwar\thanks{Microsoft Research Silicon Valley, \texttt{kunal@microsoft.com}.}}

\maketitle
\thispagestyle{empty}

\begin{abstract}
We consider the noise complexity of differentially private mechanisms in the
setting where the user asks $d$ linear queries $f\colon\Rn\to\Re$
non-adaptively. Here, the database is represented by a vector in $\Rn$ and
proximity between databases is measured in the $\ell_1$-metric.

We show that the noise complexity is determined by two geometric
parameters associated with the set of queries.
We use this connection to give tight upper and lower bounds on the noise
complexity for any $d \leq n$. We show that for $d$ random linear queries of
sensitivity~$1$, it is necessary and sufficient to add $\ell_2$-error
$\Theta(\min\{d\sqrt{d}/\epsilon,d\sqrt{\log (n/d)}/\epsilon\})$ to achieve
$\epsilon$-differential privacy.  Assuming the truth of a deep conjecture from
convex geometry, known as the Hyperplane conjecture, we can extend our results to arbitrary linear queries giving nearly matching upper and lower bounds.

Our bound translates to error $O(\min\{d/\epsilon,\sqrt{d\log
(n/d)}/\epsilon\})$ per answer. The best previous upper bound (Laplacian
mechanism) gives a bound of $O(\min\{d/\eps,\sqrt{n}/\epsilon\})$ per answer,
while the best known lower bound was $\Omega(\sqrt{d}/\epsilon)$.
In contrast, our lower bound is strong enough to separate the concept of
differential privacy from the notion of approximate differential
privacy where an upper bound of $O(\sqrt{d}/\epsilon)$ can be achieved.
\end{abstract}

\vfill
\pagebreak
\section{Introduction}

The problem of Privacy-preserving data analysis has attracted a lot of
attention in recent years. Several databases, e.g. those held by the Census
Bureau, contain private data provided by individuals, and protecting the
privacy of those individuals is an important concern. {\em Differential Privacy}
is a rigorous notion of privacy that allows statistical analysis of sensitive
data while providing strong privacy guarantees even in the presence of an
adversary armed with arbitrary auxiliary information.
We refer the reader to the survey of Dwork~\cite{Dwork08} and
the references therein for further motivation and background information.

We consider the following general setting: A \emph{database} is represented by a 
vector $x\in \Re^n$. The \emph{queries} that the analyst may ask are \emph{linear}
combinations of the entries of $x$. More precisely, a multidimensional query is
a map $F\colon \Re^n \rightarrow \Re^d$, and we will restrict ourselves to
linear maps $F$ with coefficients in the interval $[-1,1]$.
Thus $F$ is a $d\times n$ matrix with entries in $[-1,1]$. In this work, we
assume throughout that $d \leq n$. A mechanism is a randomized algorithm which
holds a database $x\in\Rn$, receives a query $F\colon\Rn\to\Rd$ and answers
with some $a\in\Rd.$ Informally, we say a mechanism satisfies
differential privacy in this setting if the densities of the output distributions
on inputs $x, x' \in \Re^n$ with $\|x-x'\|_1\leq 1$ are point wise within
an $\exp(\eps)$ multiplicative factor of each other.
Here and in the following, $\eps>0$ is a parameter that
measures the strength of the privacy guarantee (smaller $\eps$ being a
stronger guarantee). The \emph{error} of a mechanism is the expected Euclidean
distance between the correct answer $Fx$ and the actual answer $a.$

In this work, we use methods from convex geometry to determine a nearly optimal
trade-off between privacy and error. We will see a lower bound on how much
error any differentially private mechanism must add. And we present a
mechanism whose error nearly matches this lower bound.

As mentioned, the above setup is fairly general.
To illustrate it and facilitate comparison
with previous work, we will describe some specific instantiations below.

\paragraph{Histograms.}
Suppose we have a database $y \in [n]^N$, containing
private information about $N$ individuals.
We can think of each individual as belonging to one
of $n$ types. The database $y$ can then naturally be translated to
a histogram $x \in \Re^n$, i.e., $x_i$ counts the number of individuals
of type $i$. Note that in the definition of differential privacy,
we require the mechanism to be defined for all $x \in \Re^n$
and demand that the output distributions be close whenever
$\|x-x'\|_1 \leq 1$. This is a stronger requirement than asserting
this property only for integer vectors $x$ and $x'$. It only makes our
upper bounds stronger. For the lower bounds, this strengthening allows us to ignore the discretization issues that would arise in the usual definition. However, our lower bounds can be extended for the usual
definition for small enough~$\eps$ and large enough $N$ 
(see Appendix~\ref{sec:lbhamming}). 
%We note that better mechanisms do exist for small $N$.
Now, our upper bound holds for any linear query on the histogram.
This includes some well-studied and natural classes of
queries. For instance, \emph{contingency tables} (see, e.g.,
\cite{BarakCDKMT07}) are linear queries on the histogram.

\paragraph{Private bits.}
In the setting looked at by Dinur and Nissim~\cite{DinurN03}, the database $y
\in \{0,1\}^N$ consists of one private bit for each individual and each
query ask for the number of $1$'s amongst a (random) subset on $[N]$. Given $d$
such queries, one can define $n \leq 2^d$ types of individuals, depending on the
subset of the queries that ask about an individual. The vector $y$ then maps
to a histogram $x$ in the natural way with $x_i$ denoting the number of
individuals of type $i$ with their private bit set to $1$. Our results then
imply a lower bound of $\Omega(d/\eps)$ per answer for any
$\eps$-differentially private mechanism. This improves on the
$\Omega(\sqrt{d})$ bound for $d=N$ from~\cite{DinurN03} for a weaker privacy
definition (blatant non-privacy). A closely related rephrasing is to imagine each individual having $d$
private $\{0,1\}$ attributes so that $n=2^d$. The $d$ queries that ask for the
$1$-way marginals of the input naturally map to a matrix $F$ and
Theorem~\ref{thm:volume-lower} implies a lower bound of $\Omega(d/\eps)$
noise per marginal for such queries.

One can also look at $x$ itself as a database where each individuals
private data is in $[0,1]$; in this setting the dimension of the data $n$
equals the number of individuals $N$. Our results lead to better upper bounds
for this setting.

Finally, there are settings such as the recent work of~\cite{McSherryM09} on private recommendation systems, where the private data is transformed with a stability guarantee so that nearby databases get mapped to vectors at
$\ell_1$ distance at most~$1$.

\subsection{Our results}
We relate the noise complexity of differentially private mechanisms to
some geometric properties of the image of the unit $\ell_1$-ball, denoted
$B_1^n$, when applying the linear mapping $F$. We will denote the resulting 
convex polytope by $K=FB_1^n.$ Our first result lower bounds the noise any
$\epsilon$-differentially private mechanism must add in terms of the volume
of~$K$.
\begin{theorem}\label{thm:volume-lower}
Let $\eps>0$ and suppose $F\colon\Rn\to\Rd$ is a linear map.
Then, every $\eps$-private mechanism $M$ has error at least
%\begin{equation}
$\Omega(\eps^{-1}d\sqrt{d}\cdot\vol(K)^{1/d})$
%\end{equation}
where $K=FB_1^n$.
\end{theorem}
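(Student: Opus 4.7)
The plan is a packing argument. I will produce a family of databases $x_1, \dots, x_N \in \R^n$ that are close in $\ell_1$ but whose images under $F$ are well separated in $\ell_2$, and then play off the privacy constraint against the error constraint.

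First I will choose a length scale $t > 0$ to be optimized later, and look for a maximal subset $\{y_1, \dots, y_N\} \subseteq (t/2)K$ whose pairwise $\ell_2$-distance is at least $2r$, for a separation parameter $r$ to be optimized as well. Maximality implies that the balls $B_2(y_i, 2r)$ cover $(t/2)K$, which gives the packing lower bound $N \geq \vol((t/2)K) / \bigl((2r)^d\vol(B_2^d)\bigr) = (t/4r)^d \cdot \vol(K)/\vol(B_2^d)$. Since $F B_1^n = K$, I can lift each $y_i$ to some $x_i \in (t/2) B_1^n$ with $Fx_i = y_i$, so that $\|x_i - x_j\|_1 \leq t$ for all $i,j$.

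Next I will use the privacy constraint together with a Markov-type argument on the error. If the mechanism $M$ has expected $\ell_2$-error at most $r/2$, then Markov's inequality gives $\Pr[\,\|M(x_i) - Fx_i\|_2 \leq r\,] \geq 1/2$. The balls $B_2(Fx_i, r)$ are pairwise disjoint by construction. Chaining $\epsilon$-differential privacy along a path of $\ell_1$-length $\leq t$ between any two $x_i, x_j$ (``group privacy'') gives that the densities on inputs $x_i$ and $x_j$ are pointwise within a factor of $e^{\epsilon t}$. Hence for any fixed $j$,
\[
1 \;\geq\; \sum_{i=1}^N \Pr\bigl[M(x_j) \in B_2(Fx_i, r)\bigr] \;\geq\; \sum_{i=1}^N e^{-\epsilon t} \Pr\bigl[M(x_i) \in B_2(Fx_i, r)\bigr] \;\geq\; \tfrac{N}{2} e^{-\epsilon t},
\]
so $N \leq 2 e^{\epsilon t}$. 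Combining with the packing lower bound on $N$ yields $(t/4r)^d \cdot \vol(K)/\vol(B_2^d) \leq 2 e^{\epsilon t}$.

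Finally I will optimize by setting $t = \Theta(d/\epsilon)$ so that $\epsilon t = \Theta(d)$, and solve for $r$. Taking $d$-th roots and using the standard estimate $\vol(B_2^d)^{1/d} = \Theta(1/\sqrt d)$, the displayed inequality rearranges to $r \geq \Omega(d/\epsilon) \cdot \sqrt d \cdot \vol(K)^{1/d} = \Omega(\epsilon^{-1} d\sqrt d \cdot \vol(K)^{1/d})$, which gives the claimed lower bound on the error (after undoing the factor of $1/2$ from the Markov step). The main technical point to be careful about is the chaining step for privacy in the continuous setting, i.e.\ going from the hypothesis for $\|x-x'\|_1 \leq 1$ to a factor $e^{\epsilon t}$ for $\|x-x'\|_1 \leq t$ with possibly non-integer $t$; this only costs a constant factor $e^\epsilon$ and does not affect the asymptotics. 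Everything else is a clean interplay between the packing/covering numbers of the polytope $K$ and the privacy-imposed ``indistinguishability radius'' $\epsilon t$.
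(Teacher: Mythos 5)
Your proposal is correct and follows essentially the same route as the paper: a volumetric packing inside a scaled copy of $K$, lifting the packing points to preimages of bounded $\ell_1$-norm, Markov's inequality to localize the output distribution, group privacy to compare measures across the packing, and a disjointness-of-balls count. The paper organizes this as a proof by contradiction with $\lambda = d/2\eps$ fixed in advance (and invokes the packing fact as a separate lemma giving an $\exp(d)$-size packing), whereas you carry the scale $t$ and separation $r$ as free parameters and optimize at the end, but the underlying argument is the same.
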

Recall, the term \emph{error} refers to the expected Euclidean distance between
the output of the mechanism and the correct answer to the query $F$.

We then describe a differentially private mechanism whose error depends on the
expected $\ell_2$ norm of a randomly chosen point in $K$. Our mechanism is an
instantiation of the exponential mechanism~\cite{McSherryTa07} with the score
function defined by the (negative of the) norm $\|\cdot\|_K$, that is the
norm which has $K$ as its unit ball. Hence, we will refer to this mechanism as 
the $K$-norm mechanism. Note that as the definition of this norm
depends on the query~$F,$ so does the output of our mechanism. 

\begin{theorem}\label{thm:upper1}
Let $\eps>0$ and suppose $F\colon\Rn\to\Rd$ is a linear map with $K=FB_1^n.$
Then, the $K$-norm mechanism is $\eps$-differentially private and has error
at most
%\begin{equation}
$O(\eps^{-1}d\E_{z\in K}\|z\|_2).$
%\end{equation}
\end{theorem}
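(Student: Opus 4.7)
The plan is to define the $K$-norm mechanism explicitly and then verify privacy and utility separately. On input $x\in\Rn$ and query $F$, the mechanism outputs $a\in\Rd$ sampled from the density $p_x(a) = Z^{-1}\exp(-\eps\,\norm{a-Fx}_K)$, where $\norm{\cdot}_K$ is the Minkowski functional of the symmetric convex body $K = FB_1^n$ and $Z = \int_{\Rd}\exp(-\eps\,\norm{y}_K)\,dy$. A crucial point, immediate from translation invariance of Lebesgue measure, is that $Z$ does not depend on $x$.

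For privacy, for neighboring $x,x'\in\Rn$ with $\norm{x-x'}_1\leq 1$ the vector $F(x-x')$ is the $F$-image of a point in $B_1^n$, hence lies in $K$, so $\norm{F(x-x')}_K\leq 1$. The triangle inequality for $\norm{\cdot}_K$ then gives $\abs{\norm{a-Fx}_K-\norm{a-Fx'}_K}\leq 1$ pointwise in $a$, so $p_x(a)/p_{x'}(a)\leq e^{\eps}$ since the common normalizer $Z$ cancels.

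For utility, by the change of variables $a\mapsto a-Fx$ it suffices to bound $\E\normt{Y}$ where $Y$ has density proportional to $\exp(-\eps\,\norm{y}_K)$ on $\Rd$. The plan is to switch to polar coordinates adapted to $\norm{\cdot}_K$: writing $y=tz$ with $t=\norm{y}_K\geq 0$ and $z\in\partial K$, we have $dy = t^{d-1}\,dt\,d\sigma_K(z)$ for the cone surface measure $\sigma_K$ on $\partial K$, which satisfies $\sigma_K(\partial K)=d\,\vol(K)$. This factorizes the radial and angular integrations; the radial integrals are Gamma integrals producing $d!/\eps^{d+1}$ in the numerator and $d!\,\vol(K)/\eps^d$ in the denominator, leaving $\E\normt{Y} = (\eps\,\vol(K))^{-1}\int_{\partial K}\normt{z}\,d\sigma_K(z)$.

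To close the loop, applying the same polar decomposition to $\int_K\normt{z}\,dz$ yields the identity $\int_K\normt{z}\,dz = \tfrac{1}{d+1}\int_{\partial K}\normt{z}\,d\sigma_K(z)$. Substituting gives $\E\normt{Y} = \tfrac{d+1}{\eps}\,\E_{z\in K}\normt{z}$, which is the claimed $O(\eps^{-1} d\,\E_{z\in K}\normt{z})$ bound. The only non-routine ingredient is the polar change of variables for a general symmetric convex body, which is a standard fact of convex geometry, so I do not anticipate a serious obstacle beyond verifying that the cone measure is well defined.
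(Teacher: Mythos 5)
Your proof is correct and follows essentially the same route as the paper: privacy via the triangle inequality for $\|\cdot\|_K$ together with $Fx\in K$, and the error bound via a radial/angular factorization of the density $\propto e^{-\eps\|y\|_K}$ reducing to Gamma integrals, yielding the exact constant $\frac{d+1}{\eps}$. The only difference is in how the factorization is packaged: you integrate in polar coordinates over $\partial K$ with the cone measure and then convert $\int_{\partial K}\|z\|\,d\sigma_K$ back to $\int_K\|z\|\,dz$, whereas the paper writes the output as $Fx+rz$ with $r\sim\GammaDist(d+1,\eps^{-1})$ and $z$ uniform in $K$ (so its $r$ is not $\|a-Fx\|_K$, but the mixture reproduces the same density). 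The paper's packaging is not just a proof device — it is literally the sampling algorithm in Figure~\ref{fig:km}, which reduces implementation to approximate uniform sampling from $K$ and is reused in the efficiency section — so if you adopt your polar decomposition you would still want to record the paper's equivalent form for the algorithmic part.
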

As it turns out, when $F$ is a random Bernoulli $\pm1$ matrix our upper bound
matches the lower bound up to constant factors. In this case, $K$ is a random
polytope and its volume and average Euclidean norm have been determined rather
recently. Specifically, we apply a volume lower bound of Litvak et
al.~\cite{LitvakPaRuJa05}, and an upper bound on the average Euclidean norm due
to Klartag and Kozma~\cite{KlartagKo09}. Quantitatively,
we obtain the following theorem.

\begin{theorem}\label{thm:random1}
Let $\eps>0$ and $d\le n/2.$ Then, for almost all matrices $F\in\{-1,1\}^{d\times n}$,
\begin{enumerate}
\item
any $\eps$-differentially private mechanism $M$ has error
$\Omega(d/\eps)\cdot\min\{\sqrt{d},\sqrt{\log(n/d)}\}$.
\item
the $K$-norm mechanism is $\eps$-differentially private
with error
$O(d/\eps)\cdot\min\{\sqrt{d},\sqrt{\log(n/d)}\}.$
\end{enumerate}
\end{theorem}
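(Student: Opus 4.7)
The plan is to combine Theorem~\ref{thm:volume-lower} and Theorem~\ref{thm:upper1} with sharp geometric estimates on $K = FB_1^n$ in the special case that $F$ is a random Bernoulli matrix. In that case $K$ is the symmetric convex hull of the $n$ columns of $F$, each an i.i.d.\ uniform point in $\{-1,1\}^d$; both theorems reduce the proof of Theorem~\ref{thm:random1} to controlling a single geometric parameter of this random polytope.

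For the lower bound~(1), by Theorem~\ref{thm:volume-lower} it suffices to lower bound $\vol(K)^{1/d}$ for a typical $F$. I would quote the result of Litvak, Pajor, Rudelson, and Tomczak-Jaegermann~\cite{LitvakPaRuJa05}, which determines the volume of such random $\pm 1$ polytopes up to constants: with probability $1-o(1)$ over $F$,
$$\vol(K)^{1/d} \;=\; \Omega\bigl(\min\{1,\sqrt{\log(n/d)/d}\}\bigr).$$
Plugging this into Theorem~\ref{thm:volume-lower} and simplifying $\sqrt{d}\cdot\min\{1,\sqrt{\log(n/d)/d}\} = \min\{\sqrt{d},\sqrt{\log(n/d)}\}$ yields exactly the claimed lower bound of $\Omega(d/\eps)\cdot\min\{\sqrt{d},\sqrt{\log(n/d)}\}$.

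For the upper bound~(2), by Theorem~\ref{thm:upper1} it suffices to upper bound $\E_{z\in K}\|z\|_2$, the expected Euclidean norm of a uniform point in $K$. This is precisely the quantity controlled by the theorem of Klartag and Kozma~\cite{KlartagKo09} on random $\pm 1$ polytopes, which gives, with probability $1-o(1)$ over $F$,
$$\E_{z\in K}\|z\|_2 \;=\; O\bigl(\min\{\sqrt{d},\sqrt{\log(n/d)}\}\bigr).$$
Substituting this into Theorem~\ref{thm:upper1} produces the matching upper bound.

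A union bound on the two failure events, each of probability $o(1)$, shows that both conclusions hold simultaneously for almost all $F \in \{-1,1\}^{d\times n}$. The reduction itself is mechanical; the real content lies in the two cited estimates from asymptotic convex geometry, which I would invoke as black boxes. The only point worth verifying carefully is that both results apply throughout the entire regime $d\leq n/2$ with absolute constants, so that the implicit constants in the statement of Theorem~\ref{thm:random1} do not depend on $d$ or $n$; this is indeed the form in which~\cite{LitvakPaRuJa05,KlartagKo09} are stated.
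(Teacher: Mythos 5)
Your proposal follows the paper's proof exactly: the lower bound is Theorem~\ref{thm:volume-lower} combined with the volume estimate of Litvak--Pajor--Rudelson--Tomczak-Jaegermann, and the upper bound is Theorem~\ref{thm:upper1} combined with the Klartag--Kozma bound on $\E_{z\in K}\|z\|_2$, with a union bound over the two high-probability events. The only cosmetic difference is that the paper quotes sharper (exponentially small) failure probabilities from both references, but your weaker $1-o(1)$ phrasing is still correct.
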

We remark that Litvak et al.~also give an explicit construction of a mapping~$F$
realizing the lower bound.

More generally, we can relate our upper and lower bounds whenever the body $K$
is in \emph{approximately isotropic position}. Informally, this condition implies that
$\E_{z \in K}\|z\|\sim\sqrt{d}\cdot\vol(K)^{1/d}L_K.$ Here, $L_K$ denotes the
so-called \emph{isotropic constant} which is defined in Section~\ref{sec:isotropic}.
\begin{theorem}
Let $\eps>0$ and suppose $F\colon\Rn\to\Rd$ is a linear map such that
$K=FB_1^n$ is in approximately isotropic position.
Then, the $K$-norm mechanism is $\eps$-differentially private with error at most
%\begin{equation}
$O(\eps^{-1}d\sqrt{d}\cdot \vol(K)^{1/d} L_K),$
%\end{equation}
where $L_K$ denotes the isotropic constant of $K$.
\end{theorem}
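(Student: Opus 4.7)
The plan is to derive this bound as an immediate consequence of Theorem~\ref{thm:upper1} together with the defining property of isotropic position. Recall that Theorem~\ref{thm:upper1} already establishes that the $K$-norm mechanism is $\eps$-differentially private with error at most $O(\eps^{-1} d\,\E_{z\in K}\|z\|_2)$. Thus the entire task reduces to showing that, for a body $K$ in approximately isotropic position in $\R^d$, one has
\[
\E_{z\in K}\|z\|_2 \;\le\; O\!\left(\sqrt{d}\cdot\vol(K)^{1/d}\cdot L_K\right).
\]

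The key step is to exploit the isotropic condition. First I would reduce to the volume-normalized case: set $\tilde K = K/\vol(K)^{1/d}$, so $\vol(\tilde K)=1$ and $\E_{z\in K}\|z\|_2 = \vol(K)^{1/d}\cdot\E_{\tilde z\in\tilde K}\|\tilde z\|_2$. Since $\tilde K$ is also (approximately) in isotropic position, the definition of the isotropic constant gives
\[
\E_{\tilde z\in\tilde K}\|\tilde z\|_2^2 \;=\; \frac{1}{\vol(\tilde K)}\int_{\tilde K}\|\tilde z\|_2^2\, d\tilde z \;=\; d\, L_K^2
\]
up to the constant factor hidden in the word ``approximately.'' Applying Cauchy--Schwarz (or equivalently Jensen's inequality for the concave map $t\mapsto\sqrt{t}$) then yields $\E_{\tilde z\in\tilde K}\|\tilde z\|_2 \le \sqrt{d}\,L_K$, so that $\E_{z\in K}\|z\|_2 \le \sqrt{d}\cdot\vol(K)^{1/d}\cdot L_K$.

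Combining this estimate with the upper bound from Theorem~\ref{thm:upper1} gives total error at most $O(\eps^{-1} d\sqrt{d}\cdot\vol(K)^{1/d}\,L_K)$, which is exactly the claimed bound. I do not expect a genuine obstacle here: the main work was already done in Theorem~\ref{thm:upper1}, and the remaining content is precisely the standard variance identity characterizing the isotropic constant. The only mild subtlety is bookkeeping around the word ``approximately,'' i.e.\ verifying that a body whose covariance matrix is a constant-factor perturbation of the isotropic one still satisfies the second-moment identity up to constants; this follows immediately from the fact that the trace of the covariance matrix, which equals $\E\|z\|_2^2$ (after centering), changes only by a constant factor under such a perturbation.
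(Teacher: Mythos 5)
Your proposal is correct and follows essentially the same route as the paper's proof: invoke the error bound $O(\eps^{-1}d\,\E_{z\in K}\|z\|_2)$ from the $K$-norm mechanism theorem, sum the approximate-isotropy inequality over an orthonormal basis to get $\E_{z\in K}\|z\|_2^2 \le O(d\,L_K^2\,\vol(K)^{2/d})$, and apply Jensen's inequality to pass to $\E\|z\|_2$. The volume normalization you introduce is harmless bookkeeping and does not change the argument in any substantive way.
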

Notice that the bound in the previous theorem differs from the lower bound by
a factor of $L_K.$ A central conjecture in convex geometry, sometimes
referred to as the ``Hyperplane Conjecture'' or ``Slicing Conjecture''
(see~\cite{KlartagKo09} for further information) states that $L_K=O(1).$

Unfortunately, in general the polytope $K$ could be very far from isotropic.
In this case, both our volume-based lower bound and the $K$-norm mechanism can be quite far from optimal. We
give a recursive variant of our mechanism and a natural generalization of our volume-based lower bound which are nearly optimal even if $K$ is non-isotropic.

\begin{theorem}\label{thm:non-isotropic}
Let $\eps>0$. Suppose $F\colon\Rn\to\Rd$ is a linear map. Further,
assume the Hyperplane Conjecture. Then, the mechanism introduced in
Section~\ref{sec:nonisotropic} is
$\eps$-differentially private and has error at most
%\[
$O(\log^{3/2} d) \cdot \Gvollb(K,\eps).$
%\]
where $\Gvollb(K,\eps)$ is a lower bound on the error of the optimal $\eps$-differentially private mechanism.
\end{theorem}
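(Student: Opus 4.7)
The plan is to reduce the arbitrary (possibly very non-isotropic) case to the approximately isotropic case of the previous theorem via an orthogonal decomposition of $\Rd$. The generalized lower bound $\Gvollb(K,\eps)$ should be read as the supremum, over orthogonal projections $P$, of the volume lower bound of Theorem~\ref{thm:volume-lower} applied to $PK$ on its image; this is a bona fide lower bound, because post-processing any $\eps$-private answer to $Fx$ by $P$ yields an $\eps$-private mechanism for $PF$ with no increase in error.

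\textbf{Step 1: Decompose $\Rd$ into near-isotropic pieces.} I would iteratively produce an orthogonal decomposition $\Rd = E_1 \oplus \cdots \oplus E_k$ by diagonalizing, at each stage, the covariance matrix of the uniform distribution on the residual body. At stage $i$, let $E_i$ be the subspace spanned by those eigenvectors whose eigenvalues lie within a constant factor of the largest remaining eigenvalue; the projection $P_{E_i}K$ is then approximately isotropic at that scale, while on the orthogonal complement every remaining eigenvalue has dropped by at least a constant factor. Geometric decay of the top eigenvalue implies that after $k = O(\log d)$ iterations the complement is negligibly small (and can be handled trivially), so the whole space is covered by $O(\log d)$ near-isotropic slabs.

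\textbf{Step 2: Mechanism and privacy.} For each $i$, apply the $K$-norm mechanism of Theorem~\ref{thm:upper1} to the query $P_{E_i}F$ on the body $P_{E_i}K$ with privacy budget $\eps/k$, using independent randomness across the $k$ pieces. Concatenate the outputs. Each $P_{E_i}F$ has $\ell_1\!\to\!\ell_2$ sensitivity at most that of $F$, and basic composition of $k$ mechanisms each of which is $(\eps/k)$-differentially private yields $\eps$-differential privacy overall.

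\textbf{Step 3: Error accounting.} Since the $E_i$ are mutually orthogonal, the squared Euclidean error of the combined mechanism is the sum of the squared errors on the pieces. By the approximately-isotropic form of Theorem~\ref{thm:upper1} together with the Hyperplane Conjecture ($L_{P_{E_i}K}=O(1)$), each piece contributes
\[
\Ex{\normt{\eta_i}^2} \;\le\; O\Paren{(k/\eps)^2 \, d_i^{3}\, \vol(P_{E_i}K)^{2/d_i}} \;\le\; O(k^2)\cdot \Gvollb(K,\eps)^2,
\]
because $\Gvollb(K,\eps)$ dominates the isotropic volume lower bound of every projection by definition. Summing over the $k = O(\log d)$ pieces gives total squared error $O(k^3)\cdot \Gvollb(K,\eps)^2$, and taking square roots produces the claimed $O(\log^{3/2} d)\cdot\Gvollb(K,\eps)$ bound.

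\textbf{Main obstacle.} The quantitatively delicate step is the decomposition in Step~1. One has to arrange the slabs $E_i$ so that simultaneously: (i) each projected body $P_{E_i}K$ is isotropic enough for the isotropic corollary of Theorem~\ref{thm:upper1} to apply with a constant loss, (ii) the Hyperplane Conjecture is invoked on each projection to kill the $L_{P_{E_i}K}$ factor, and (iii) the volume $\vol(P_{E_i}K)^{1/d_i}$ on each piece is genuinely charged to $\Gvollb(K,\eps)$, so that the only overhead in the final bound is the $k^{3/2}=\log^{3/2} d$ factor coming from budget splitting and the orthogonal summation. Verifying geometric eigenvalue decay across levels and controlling isotropic constants under projection are the two ingredients that determine this final overhead.
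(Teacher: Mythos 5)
Your high-level architecture matches the paper's: decompose $\Rd$ into $O(\log d)$ orthogonal pieces guided by the covariance eigenbasis of $K$, run the $K$-norm mechanism on each piece with a $1/\log d$ privacy budget, sum squared errors, and invoke the Hyperplane Conjecture to control isotropic constants. However, the two obstacles you flag at the end are exactly where your proposed choices break down, and the paper's proof makes different choices precisely to avoid them.

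\textbf{The number of levels is not $O(\log d)$.} You group eigenvalues into constant-factor bands and argue geometric decay of the top remaining eigenvalue. But the number of such bands is $\Theta(\log(\sigma_1/\sigma_d))$, which depends on the condition number of $M_K$ and can be arbitrarily large; the claim that the complement becomes ``negligibly small'' after $O(\log d)$ bands has no justification, since there is no absolute scale against which the remaining eigenvalues are small relative to $\Gvollb$. The paper sidesteps this entirely by always splitting the current eigenbasis into the top $\lfloor d/2\rfloor$ and bottom $\lceil d/2\rceil$ eigenvectors; dimension halving trivially yields $\log d$ levels regardless of eigenvalue spread.

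\textbf{Projections onto eigenvalue bands need not be approximately isotropic.} Your Step~3 applies the isotropic form of Theorem~\ref{thm:upper1} to $P_{E_i}K$, arguing that because the eigenvalues $\{\sigma_j : u_j \in E_i\}$ of $M_K$ are within a constant factor, $P_{E_i}K$ is $c$-approximately isotropic. This conflates two different covariances: the covariance of $P_{E_i}z$ for $z\sim\mathrm{Unif}(K)$ equals $P_{E_i}M_K P_{E_i}$, but the covariance of $z'\sim\mathrm{Unif}(P_{E_i}K)$ is a different matrix, and nothing forces its eigenvalues to be balanced. The paper deliberately avoids this by \emph{not} projecting $K$ before sampling: it runs ${\bf KM}$ on the full body $K$, outputs only $P_V a$, and controls $\E\|P_V a\|^2 = \Theta(d^2/\eps^2)\sum_{j>d'}\sigma_j$ directly via the eigenvalues of $M_K$. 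The lower-bound side is then handled by Proposition~\ref{prop:nivol}, which gives $\vol_{d'}(P_U K)^{1/(d-d')}\ge\Omega(\alpha_K)(\prod_{i\le d'}\sigma_i^{1/2})^{1/(d-d')}$ by transporting to isotropic position via $T=M_K^{-1/2}$ and invoking the Milman--Pajor section-volume estimate; this is the convex-geometric ingredient your outline lacks and that makes the comparison between $\E\|P_V a\|^2$ and $\Gvollb^2$ go through. To repair your proposal you would need both a dimension-based (not eigenvalue-ratio-based) termination guarantee and a lemma connecting projected volume to eigenvalues of $M_K$; with those two substitutions your accounting in Step~3 is essentially correct.
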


%In other words, we can determine the error of the optimal mechanism up to an $O(\log^{\frac{3}{2}} d)$ factor.

While we restricted our theorems to $F\in[-1,1]^{d\times n}$,
they apply more generally to any linear mapping $F.$

\paragraph{Efficient Mechanisms.}
Our mechanism is an instantiation of the exponential mechanism and involves
sampling random points from rather general high-dimensional convex bodies.
This is why our mechanism is not efficient as it is. However, we can use
rapidly mixing geometric random walks for the sampling step.
These random walks turn out to approach the uniform distribution in a metric
that is strong enough for our purposes. It will follow that
both of our mechanisms can be implemented in polynomial time.

\begin{theorem}
The mechanisms given in Theorem~\ref{thm:upper1} and
Theorem~\ref{thm:non-isotropic} can be implemented in time polynomial
in $n,1/\eps$ such that
the stated error bound remains the same up to constant factors,
and the mechanism achieves $\eps$-differential privacy.
\end{theorem}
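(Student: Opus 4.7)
The plan is to show that both mechanisms reduce to sampling a random point from a log-concave distribution supported on (or scaled versions of) the polytope $K=FB_1^n$, and that such sampling can be performed in time polynomial in $n$ and $1/\eps$ while preserving pure $\eps$-differential privacy.

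First I would note that $K$ admits an efficient weak membership and separation oracle: testing whether $y\in K$ reduces to the linear feasibility question ``does there exist $x\in\R^n$ with $Fx=y$ and $\|x\|_1\le 1$?'', which can be solved in time polynomial in $n$ by any standard LP algorithm. This is exactly the geometric access required by the classical machinery of convex-body algorithms. Separation then allows one to evaluate the norm $\|\cdot\|_K$ to any desired polynomial precision by binary search.

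Next, the $K$-norm mechanism of Theorem~\ref{thm:upper1} outputs $Fx+\eta$, where $\eta$ is drawn from the log-concave density proportional to $\exp(-\eps\|\eta\|_K/2)$. Sampling from such a density is a classical algorithmic problem for which polynomial-time solutions are known — e.g.\ the hit-and-run walk of Lov\'asz and Vempala or an appropriate grid walk — provided one has a membership oracle for $K$. For the recursive mechanism of Theorem~\ref{thm:non-isotropic}, each of the (polylogarithmically many) recursion levels again reduces to log-concave sampling over a convex body equipped with such an oracle, so the same toolkit applies uniformly at every level.

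The main obstacle is that these walks natively produce a sample whose law $\tilde\mu$ is within total variation distance $\delta$ of the target $\mu$, and this bound alone yields only approximate $(\eps,O(\delta))$-DP, not pure $\eps$-DP. To restore pure privacy I would upgrade the mixing guarantee to a \emph{pointwise multiplicative} one, which is achievable either by starting from a suitable warm distribution or by running the walk for a polynomial number of additional steps: the output density then satisfies $\tilde\mu(z)/\mu(z)\in[e^{-\eps'},e^{\eps'}]$ for every $z$. For neighboring databases $x,x'$ this yields
\[
  \frac{\tilde\mu_x(z)}{\tilde\mu_{x'}(z)}
  \;=\; \frac{\tilde\mu_x(z)}{\mu_x(z)}\cdot \frac{\mu_x(z)}{\mu_{x'}(z)}\cdot \frac{\mu_{x'}(z)}{\tilde\mu_{x'}(z)}
  \;\le\; e^{2\eps'+\eps},
\]
so choosing $\eps'=\Theta(\eps)$ and rescaling the privacy parameter by a constant factor preserves pure $\eps$-DP. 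The expected-error bounds survive because $\|\tilde\mu-\mu\|_{\mathrm{TV}}$ can be driven below any inverse polynomial while the Euclidean norm has subexponential tails under any log-concave density, so $\E_{\tilde\mu}\|z\|_2$ matches $\E_\mu\|z\|_2$ up to an additive inverse-polynomial error negligible against the bounds of Theorems~\ref{thm:upper1} and~\ref{thm:non-isotropic}. The overall running time is polynomial in $n$ and $1/\eps$, since each sample costs $\poly(n,1/\eps)$ oracle calls and the recursion depth in Theorem~\ref{thm:non-isotropic} is at most polylogarithmic in $d$.
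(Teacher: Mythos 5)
Your high-level plan matches the paper's: use a linear-programming separation oracle for $K=FB_1^n$, sample via a geometric random walk, and observe correctly that the usual total-variation mixing guarantee only gives $(\eps,\delta)$-approximate privacy, so you need a \emph{pointwise multiplicative} guarantee on the output density. Your three-term composition showing $\tilde\mu_x(z)/\tilde\mu_{x'}(z)\le e^{2\eps'+\eps}$ is exactly the content of the paper's Lemma~\ref{lem:infeps}. The paper realizes the $L_\infty$-closeness a bit more concretely than you do: it runs the \emph{Grid Walk}, drives the TV distance below $\eps(\beta/d)^d$ (which on the discrete grid converts to a relative $L_\infty$ bound, since every grid point carries at least $(\beta/d)^d$ of the stationary mass), and then smooths by adding a uniform perturbation over a small $\ell_\infty$-cube; your appeal to ``warm start or extra steps'' would need to be turned into an argument of this kind, since for a walk over a continuous body a TV bound never by itself controls densities pointwise.

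The genuine gap is in your treatment of the non-isotropic mechanism. You write that each recursion level ``again reduces to log-concave sampling over a convex body with such an oracle, so the same toolkit applies.'' But ${\bf NIM}$ needs, at each level, the eigendecomposition of the \emph{covariance matrix} $M_K$ of the current body, which cannot be computed exactly and which you never address. The paper devotes Section~\ref{sec:efficient} (``When $K$ is not isotropic'') to this: it estimates $\tilde M_K$ by drawing $O(d^4\log d)$ approximately uniform samples, notes that the resulting subspaces $\tilde U,\tilde V$ depend only on $F$ (not $x$) so privacy is unaffected by the choice, and then must re-derive the error bound with $\tilde M_K$ in place of $M_K$. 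This last step is nontrivial: it uses Wedin's perturbation bound to control $\|\tilde T - T\|_2$ where $T=M_K^{-1/2}$, then a ball-containment lemma to show $(1-\tfrac1d)TK\subseteq\tilde TK$, and finally a recomputation of the projected-volume lower bound $\vol_k(\tilde P\tilde TK)^{1/(d-k)}\ge\Omega(\alpha_K)$. Without some argument of this form, you have not shown that the error bound of Theorem~\ref{thm:non-isotropic} survives the approximation; your subexponential-tail argument only handles the error coming from imperfect sampling, not the error coming from projecting onto the wrong subspaces.
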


%Here, the notion of $\delta$-approximate $\eps$-differential privacy
%refers to a standard relaxation of differential privacy.
%
%The notable fact about this theorem is that the error of the mechanism is
%independent of the parameter $\delta$. It only affects the runtime.
%Previous mechanisms (e.g., the Gaussian
%mechanism) for approximate privacy had a logarithmic dependence on $1/\delta$
%in the error term.
%
%We also give a polynomial time variant of our mechanism of arbitrary (possibly
%non-isotropic) $K$ that satisfies $\delta$-approximate $\eps$-differentially
%privacy and whose error still satisfies the conclusion of
%theorem~\ref{thm:non-isotropic}.

We note that our lower bound $\Gvollb$ can also be approximated up to a 
constant factor. Together these results give
polynomial time computable upper and lower bounds on the error of
any differentially private mechanism, that are always within
an $O(\log^{3/2} d)$ of each other.

\begin{figure}
\begin{center}
\vspace{-3mm}
\begin{tabular}{|l|c|c|c|}
\hline
{\bf Mechanism} & {\bf $\ell_2$-error} & {\bf privacy} & reference \\
\hline
\hline
Laplacian noise & $\eps^{-1}d\sqrt{d}$ & $\eps$ & ~\cite{DworkMNS06}\\
%\hline
%Gaussian noise & $\eps^{-1}d\sqrt{\log(1/\delta)}$ & $(\eps,\delta)$ &~\cite{DworkKMMN06} \\
\hline
$K$-norm & $\eps^{-1}d\sqrt{\log(n/d)}$ & $\eps$ & this paper \\
\hline
\hline
lower bound & $\Omega(\eps^{-1}d)$ & $(\eps,\delta)$ & ~\cite{DinurN03}\\
\hline
lower bound & $\Omega(\eps^{-1}d)\min\{\sqrt{\log(n/d},\sqrt{d}\}$ &
$\eps$ & this paper \\
\hline
\end{tabular}
\end{center}
\vspace{-5mm}
\label{fig:summary}
\caption{Summary of results in comparison to best previous work 
for $d$ \emph{random} linear queries each of
sensitivity~$1$ where $1\le d\le n.$ \footnotesize{Note that informally the 
average per coordinate error is smaller 
than the stated bounds by a factor of $\sqrt{d}.$ Here,
$(\eps,\delta)$-differential privacy refers to a weaker approximate notion of
pricacy introduced later. Our lower bound does not apply to this notion.
}}
\end{figure}

Figure~\ref{fig:summary} summarizes our results. Note that we state our bounds
in terms of the total $\ell_2$ error, which informally is a $\sqrt{d}$ factor 
larger than the average per coordinate error.
\subsection{Previous Work}

Queries of the kind described above have (total) \emph{sensitivity} $d$, and 
hence the work of Dwork et al.~\cite{DworkMNS06} shows that adding Laplace noise with
parameter $d/\eps$
to each entry of $Fx$ ensures $\eps$-differential privacy. Moreover, adding
Laplace noise to the histogram $x$ itself leads to another private mechanism.
Thus such questions can be answered with noise $\min(d/\eps,\sqrt{n}/\eps,N)$
per entry of $Fx$. Some specific classes of queries can be answered with
smaller error. Nissim, Raskhodnikova and Smith~\cite{NissimRS07} show that one
can add noise proportional to a smoothed version of the {\em local
sensitivity} of the query, which can be much smaller than the global
sensitivity for some \emph{non-linear} queries. 
%Building up on~\cite{McSherryTa07,KasiviswanathanLNRS08}, 
Blum, Ligett and
Roth~\cite{BlumLR08} show that it is possible to release approximate counts
for all concepts in a concept class $C$ on $\{0,1\}^m$ with error $O((N^2 m
{\rm VCDim}(C)/\eps)^{\frac{1}{3}})$, where ${\rm VCDim}(C)$ is the VC
dimension of the concept class. Their bounds are incomparable to ours, and in
particular their improvements over the Laplacian mechanism kick in 
when the number of queries is larger than the size of the database (a
range of parameters we do not consider).
Feldman et al.~\cite{FeldmanFKN09} construct private core sets for the $k$-median
problem, enabling approximate computation of the $k$-median cost of any set of
$k$ facilities in $\Re^d$.  Private mechanisms with small error, for other
classes of queries have also been studied in several other works, see
e.g.~\cite{BlumDMN05,BarakCDKMT07,McSherryTa07,ChaudhuriM08,GuptaLMRT10}.

Dinur and Nissim~\cite{DinurN03} initiated the study of  lower bounds on the
amount of noise private mechanisms must add. They showed that any private
mechanism that answers $\tilde{O}(N)$ random subset sum queries about a set of
$N$ people each having a private bit must add noise $\Omega(\sqrt{N})$ to
avoid nearly full disclosure of the database ({\em blatant non-privacy}). This
implies that as one answers more and more questions, the amount of error
needed per answer must grow to provide any kind of privacy guarantee. These
results were strengthened by Dwork, McSherry and Talwar~\cite{DworkMT07}, and
by Dwork and Yekhanin~\cite{DworkY08}. However all these lower bounds protect
against blatant non-privacy and cannot go beyond noise larger than $\min(\sqrt{d},\sqrt{N})$
per answer, for $d$ queries. Kasiviswanathan, Rudelson and Smith~\cite{KasiviswanathanRS09} show lower
bounds of the same nature ($\min(\sqrt{d},\sqrt{N})$ for $d$ questions) for a more natural and useful class of questions. Their lower bounds also apply to $(\eps,\delta)$-differential privacy and are
tight when $\eps$ and $\delta$ are constant. For the case of $d=1$, Ghosh, Roughgarden and Sundararajan~\cite{GhoshRS09} show that adding Laplace noise is in fact optimal in a very general decision-theoretic framework, for any symmetric decreasing loss function. For the case that all sum queries need to be answered (i.e. all queries of the form $f_P(y) = \sum_{i=1}^N P(y_i)$ where $P$ is a $0$-$1$ predicate), Dwork et al.~\cite{DworkMNS06} show that any differentially private mechanism must add noise $\Omega(N)$. Rastogi et al.~\cite{RastogiHS07} show that half of such queries must have error $\Omega(\sqrt{N})$.  Blum, Ligett and Roth~\cite{BlumLR08} show that any differentially private mechanism answering all (real-valued) halfspace queries must add noise $\Omega(N)$.

\subsection{Overview and organization of the paper}
In this section we will give a broad overview of our proof and
outline the remainder of the paper.

%Though omitted from the short abstract, 
Section~\ref{sec:prelim} contains some preliminary facts and definitions. 
Specifically, we describe  
a linear program that defines the optimal mechanism for any set of queries. This
linear program (also studied in~\cite{GhoshRS09} for the one-dimensional case)
is exponential in size, but in principle, given any query and error function,
can be used to compute the best mechanism for the given set of queries.
Moreover, dual solutions to this linear program can be used to prove lower
bounds on the error. However, the asymptotic behavior of the optimum value of
these programs for multi-dimensional queries was not understood prior to this
work. Our lower bounds can be reinterpreted as dual solutions to the
linear program. The upper bounds give near optimal primal solutions. Also, our
results lead to a polynomial-time approximation algorithm for the optimum when
$F$ is linear.

We prove our lower bound in Section~\ref{sec:lower}.
Given a query $F\colon\Rd\to\Rd$, our lower bound depends on the
$d$-dimensional volume
of $K=FB_1^n.$ If the volume of $K$ is large, then a packing argument shows
that we can pack exponentially many points inside $K$ so that each pair of
points is far from each other. We then scale up $K$ by a suitable factor
$\lambda.$ By linearity, all points within $\lambda K$ have preimages under
$F$ that are still $\lambda$-close in $\ell_1$-distance. Hence, the definition 
of $\eps$-differential privacy (by transitivity) enforces some constraint 
between these preimages.  
We can combine these observations so as to show that any differentially
private mechanism $M$ will have to put significant probability mass in
exponentially many disjoint balls. This forces the mechanism to have large
expected error.  
%This argument can also be applied to any projection of $K$
%onto lower-dimensional subspaces and this leads to our generalized lower
%bound $\Gvollb$.

We then introduce the $K$-norm mechanism in Section~\ref{sec:knorm}. Our
mechanism computes $Fx$ and then adds a noise vector to $Fx.$ The key point
here is
that the noise vector is not independent of $F$ as in previous works.
Instead, informally speaking, the noise is tailored to the exact shape of 
$K=FB_1^n.$ This is accomplished by picking a particular noise vector $a$ with 
probability
proportional to $\exp(-\eps\|Fx-a\|_K).$ Here, $\|\cdot\|_K$ denotes the
(Minkowski) norm defined by $K$. While our mechanism depends upon the query~$F$,
it does \emph{not} depend on the particular database $x.$
We can analyze our mechanism in terms of the expected Euclidean distance from
the origin of a random point in $K$, i.e., $\E_{z\in K}\|z\|_2.$ Arguing
optimality of our mechanism hence boils down to relating $\E_{z\in K}\|z\|_2$
to the volume of $K$ which is the goal of the next section.

Indeed, using several results from convex geometry, we observe 
that our lower and upper bounds match up to constant factors when $F$ is 
drawn at random from $\{-1,1\}^{d\times n}$.
As it turns out the polytope $K$ can be interpreted as the symmetric convex 
hull of the row
vectors of $F.$ When $F$ is a random matrix, $K$ is a well-studied random
polytope. Some recent results on random polytopes
give us suitable lower bounds on the volume and upper bounds on the average
Euclidean norm.
More generally, 
our bounds are tight whenever
$K$ is in isotropic position (as pointed out in Section~\ref{sec:isotropic}).
This condition intuitively gives a relation
between volume and average distance from the origin. Our bounds are actually
only tight up to a factor of $L_K,$ the isotropic constant of $K.$ A
well-known conjecture from convex geometry, known as the Hyperplane Conjecture
or Slicing Conjecture, implies that $L_K=O(1).$

The problem is that when $F$ is not drawn at random, $K$ could be very far from isotropic. In this case, the $K$-norm mechanism by itself might actually perform
poorly. We thus give a recursive variant of the $K$-norm mechanism in
Section~\ref{sec:nonisotropic} which can handle non-isotropic bodies. Our
approach is based on analyzing the covariance matrix of~$K$ in order to
partition $K$ into parts on which our earlier mechanism performs well.
Assuming the Hyperplane conjecture, we derive bounds on the error of our
mechanism that are optimal to within polylogarithmic factors.

%While our mechanisms are not efficient as they are, we finally show how they
%can actually be implemented in polynomial time. Our result follow from efficient 
%sampling techniques for convex bodies using geometric random walks. 
%
The costly step in both of our mechanisms is sampling uniformly from 
high-dimensional convex bodies such as $K=FB_1^n$. 
To implement the sampling step efficiently, we will use
geometric random walks. It can be shown that these random walks 
approach the uniform distribution over $K$ in polynomial time. We will 
actually need convergence bounds in the relative $\ell_\infty$-metric, 
a metric strong enough to entail guarantees about exact
differential privacy rather than approximate differential privacy (to be
introduced later).

Some complications arise, since we need to repeat the privacy and optimality 
analysis of our mechanisms in the presence of approximation errors 
(such as an approximate covariance matrix and an approximate
separation oracle for $K$). The details 
can be found in Section~\ref{sec:efficient}.

%\Mnote{Rest of the 10-page version should go here. I would suggest after the
%bibliography we simply have the full paper from section 2 to the end. This
%will cause some redundancy, but I can imagine that some reviewers would
%prefer to read the full paper. In order to avoid double references, I suggest
%we prefix every label used in this part with, say, the letter 'A'. For
%
%We stated and analyzed the $K$-norm mechanism in the case where
%$F\colon\Rn\to\Rd$ is a linear mapping and $\Rd$ is endowed with the
%$\ell_1$-norm. However, our mechanism is well-defined in a much more general
%setting. Section~\ref{sec:generalize} points out some interesting
%generalizations.

%\paragraph{Remark.}
%This short abstract is followed by a full version of the
%paper starting with Section~\ref{sec:prelim}. 
%The reader may skip ahead to Section~\ref{sec:prelim} and continue 
%from there for a complete presentation of all proofs.

\paragraph{Acknowledgments.}
We would
like to thank Frank McSherry, Aaron Roth, Katrina Ligett, Indraneel Mukherjee,
Nikhil Srivastava for several useful discussions, and Adam Smith for
discussions and comments on a previous version of the paper.

\onlyproc{

\section{Lower bounds from volume estimates}
\label{Asec:lower}

In this section we show that lower bounds on the volume of the convex body
$FB_1^n\subseteq\Rd$ give rise to lower bounds on the error that any
private mechanism must have with respect to $F$. For that we will need a formal definition of private mechanism and its error.
\begin{definition}
A \emph{mechanism} $M$ is a family of probability measures
$M=\{\mu_x\colon x\in\Rn\}$
where each measure~$\mu_x$ is defined on $\Rd$.
A mechanism is called \emph{$\eps$-differentially private}, if for all $x,y\in\Rn$ such that
$\|x-y\|_1\le1$, we have
%\[
$\sup_{S\sse\Rd}\frac{\mu_x(S)}{\mu_y(S)}\le\exp(\eps),$
%\]
where the supremum runs over all measurable subsets $S\sse\Rd$.

For a mapping $F\colon\Rn\to\Rd,$ we define the \emph{$\ell_2$-error} of a 
mechanism $M$ as $\err(M,F)=\sup_{x\in\Rn}\E_{a\sim\mu_x}\|a-Fx\|_2.$
\end{definition}

\begin{definition}
A set of points $Y \subseteq \Re^d$ is called a {\em $r$-packing} if $\|y-y'\|_2 \geq r$ for any $y,y' \in Y, y \neq y'$.
\end{definition}

\begin{fact}\label{Afact:balls}
Let $K\subseteq\Rd$ such that $R=\vol(K)^{1/d}$.
Then, $K$ contains an $\Omega(R\sqrt{d})$-packing of size  at least $\exp(d)$.
\end{fact}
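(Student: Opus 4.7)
The plan is a standard volume-packing argument. I would set $r = c R \sqrt{d}$ for a small absolute constant $c > 0$ to be fixed at the end, and let $Y \subseteq K$ be a \emph{maximal} $r$-packing: a subset of $K$ with pairwise Euclidean distance at least $r$ that cannot be extended by any further point of $K$. This set is automatically an $r$-packing in the sense of the definition, and it is nonempty since a single point works.

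The key point is that maximality turns $Y$ into a covering: any $x \in K$ must be within Euclidean distance $< r$ of some $y \in Y$, for otherwise we could append $x$ to $Y$. Hence $K \subseteq \bigcup_{y \in Y} B(y, r)$ for $B(y, r)$ the Euclidean ball. Taking volumes I get the sphere-packing inequality
\[
\vol(K) \;\leq\; |Y| \cdot \vol(B(0, r)).
\]

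Next I would use the classical formula $\vol(B(0, r)) = \pi^{d/2} r^d / \Gamma(d/2 + 1)$ together with Stirling's approximation $\Gamma(d/2 + 1) = \Theta(\sqrt{d}) \cdot (d/2e)^{d/2}$ to get $\vol(B(0, r))^{1/d} = \Theta(r / \sqrt{d})$. Rearranging the packing inequality and raising to the $1/d$ power then gives
\[
|Y|^{1/d} \;\geq\; \frac{\vol(K)^{1/d}}{\vol(B(0, r))^{1/d}} \;=\; \Omega\!\left( \frac{R \sqrt{d}}{r} \right).
\]
Substituting $r = c R \sqrt{d}$ yields $|Y|^{1/d} \geq \Omega(1/c)$, which exceeds $e$ once $c$ is chosen as a small enough absolute constant. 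Then $|Y| \geq e^d = \exp(d)$, and by construction $Y$ is an $r$-packing with $r = \Omega(R \sqrt{d})$, as required.

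No step is a real obstacle; the only thing to keep straight is the Stirling estimate giving the $\sqrt{d}$ scaling in $\vol(B)^{1/d}$, which is exactly what converts the gap between the $\ell_1$- and $\ell_2$-balls into the $\sqrt{d}$ factor in the packing radius. The choice of the constant $c$ is then tuned in the last line to make $e$ appear in the exponent.
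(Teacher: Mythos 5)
Your proof is correct and follows essentially the same route as the paper: both compare $\vol(K)$ to a Euclidean ball via Stirling to get the $\sqrt{d}$ scaling, and both invoke the fact that a maximal packing is a covering, so volume comparison forces exponentially many packing points. The paper compresses these steps into two lines (phrasing the radius as that of the equal-volume ball and then taking a maximal $r/4$-packing), whereas you spell out the covering inequality and the choice of constant, but the underlying argument is identical.
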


\begin{proof}
Since $\vol(B_2^d)^{1/d}\sim\frac1{\sqrt{d}}$, the body $K$ has the volume of
a ball of radius $r \in \Omega(R\sqrt{d})$. Any maximal $\frac{r}{4}$-packing then has the desired property.
\end{proof}

\begin{theorem}\label{Athm:volume}
Let $\eps>0$ and suppose $F\colon\Rn\to\Rd$ is a linear map and let $K=FB_1^n$.
Then, every
$\eps$-differentially private mechanism $M$
must have
%\begin{equation}
$\err(M,F) \ge \Omega(\eps^{-1}d\sqrt{d}\cdot\vol(K)^{1/d}).$
%\end{equation}
\end{theorem}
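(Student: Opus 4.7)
The plan is to combine the packing fact with the standard group-privacy (transitivity) argument for differential privacy. Let $R=\vol(K)^{1/d}$. By Fact~\ref{Afact:balls}, inside $K$ there is a set of $M \ge e^d$ points at pairwise $\ell_2$-distance at least $r=\Omega(R\sqrt d)$. I will not use this packing in $K$ itself but in a rescaled copy $\lambda K$, where the scale $\lambda$ is to be chosen proportional to $d/\eps$. By linearity, $\lambda K = F(\lambda B_1^n)$, so every packing point $y_i$ has a preimage $z_i\in \lambda B_1^n$, i.e. $\|z_i\|_1\le\lambda$, hence $\|z_i-z_j\|_1\le 2\lambda$ for all $i,j$. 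The packing points in $\lambda K$ are pairwise at $\ell_2$-distance at least $\lambda r$.

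Next, suppose for contradiction that the mechanism $M$ has error strictly less than $\lambda r/4$ on every input $z_i$. By Markov's inequality applied to $\|a-Fz_i\|_2=\|a-y_i\|_2$, the measure $\mu_{z_i}$ assigns mass at least $1/2$ to the Euclidean ball $B_i$ of radius $\lambda r/2$ around $y_i$. Crucially, the balls $B_i$ are pairwise disjoint because their centers are at distance at least $\lambda r$. Now the $\eps$-differential privacy guarantee, applied along a path of length at most $2\lambda$ in the $\ell_1$-metric (group privacy / transitivity), gives
\[
\mu_{z_1}(B_i)\;\ge\;e^{-2\lambda\eps}\,\mu_{z_i}(B_i)\;\ge\;\tfrac12 e^{-2\lambda\eps}
\]
for every $i$. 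Summing over the $M\ge e^d$ disjoint balls $B_i$ and using $\mu_{z_1}(\R^d)\le 1$, I obtain $1\ge \tfrac12 e^{d}e^{-2\lambda\eps}$, i.e.\ $2\lambda\eps \ge d-\ln 2$.

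Contrapositively, if I pick $\lambda$ slightly below $(d-\ln 2)/(2\eps)$, say $\lambda = d/(3\eps)$, the inequality fails, so there must exist some $z_i$ on which the mechanism incurs error at least $\lambda r/4 = \Omega(\eps^{-1}d)\cdot\Omega(R\sqrt d) = \Omega(\eps^{-1}d\sqrt d\,\vol(K)^{1/d})$. Since $\err(M,F)$ is a supremum over $x\in\Rn$, this establishes the claimed bound.

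The only subtle step is the disjointness-plus-group-privacy packing argument; all the other ingredients (existence of the packing, preimage under $F$ having small $\ell_1$-norm, Markov's inequality) are routine. There is no real obstacle — the main thing to be careful about is choosing $\lambda$ so that the exponential blow-up $e^{2\lambda\eps}$ coming from group privacy is still smaller than the size $e^d$ of the packing, which is exactly what forces $\lambda=\Theta(d/\eps)$ and delivers the $d\sqrt d$ factor in the final bound.
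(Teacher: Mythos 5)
Your proposal is correct and follows essentially the same argument as the paper's own proof: scale $K$ by $\lambda=\Theta(d/\eps)$, take preimages of the packing points inside $\lambda B_1^n$, apply Markov's inequality to get disjoint balls of mass $\ge \tfrac12$, then use group privacy to force a contradiction with the packing size $e^d$. The only cosmetic difference is that the paper anchors the group-privacy step at $\mu_0$ (so the $\ell_1$-distance is $\lambda$ rather than your $2\lambda$), which yields slightly cleaner constants but is otherwise the same reasoning.
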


\begin{proof}
Let $\lambda\ge1$ be some scalar and put $R=\vol(K)^{1/d}$.
By Fact~\ref{Afact:balls} and our assumption, $\lambda K=\lambda FB_1^n$
contains an $\Omega(\lambda R\sqrt{d})$-packing $Y$ of size at least $\exp(d)$. Let $X\subseteq \Rn$ be a set of arbitrarily chosen preimages of $y \in Y$ so that $|X|=|Y|$ and $FX = Y$.
By linearity, $\lambda FB_1^n=F(\lambda B_1^n)$ and hence we may assume that
every $x\in X$ satisfies $\|x\|_1\le \lambda$.

We will now assume that $M=\{\mu_x\colon x\in\Rn\}$ is an $\eps$-differentially private
mechanism with error $cd\sqrt{d}R/\eps$ and lead this to a
contradiction for small enough $c>0$. For this we set $\lambda=d/2\eps$.
By the assumption on the error, Markov's inequality implies that for all $x\in X$,
we have $\mu_x(B_x)\ge\tfrac12,$
where $B_x$ is a ball of radius $2cd\sqrt{d}R/\eps = 4c\lambda R\sqrt{d}$ centered at $Fx$. Since $Y=FX$ is an $\Omega(\lambda R\sqrt{d})$-packing, the balls $\{B_x: x\in X\}$ are disjoint for small enough constant $c>0$.

Since $\|x\|_1\le \lambda$, it follows from the transitivity of 
$\e$-differential privacy (shown in Fact~\ref{fact:trans}) that
%\[
$\mu_0(B_x)
\ge\exp(-\eps \lambda)\mu_{x}(B_x)\ge\tfrac12\exp(-d/2).$
%\]
Since the balls $B_x$ are pairwise disjoint,
\begin{equation}
1\ge \mu_0(\cup_{x\in X}B_x)
=\sum_{x\in X}\mu_0(B_x)
\ge \exp(d)\tfrac12\exp(-d/2)
> 1
\end{equation}
for $d\geq 2$. We have thus obtained a contradiction.
\end{proof}

We denote by $\vollb(F,\eps))$ the lower bound resulting from the above theorem. In other words
\[
\vollb(F,\eps) = \eps^{-1} d\sqrt{d}\cdot\vol(FB_1^n)^{1/d}.
\]
Thus any $\eps$-differentially private mechanism must add noise
$\Omega(\vollb(F,\eps))$.

\section{The $K$-norm mechanism}
\label{Asec:knorm}

In this section we describe a new differentially private mechanism,
which we call the $K$-norm mechanism.

\begin{definition}[$K$-norm mechanism]
Given a linear map $F\colon\Rn\to\Rd$ and $\eps>0$, we let $K=FB_1^n$ and define
the mechanism ${\bf KM}(F,d,\eps)=\{\mu_x\colon x\in\Rn\}$ so that each measure
$\mu_x$ is given by the probability density function
\begin{equation}\label{eq:density}
f(a)=Z^{-1}\exp(-\eps\|Fx-a\|_K)
\end{equation}
defined over $\Rd.$ Here $Z$ denotes a normalization constant independent of $x.$
%\[
%Z=\int_{\Rd}\exp(-\eps\|Fx-a\|_K)\rd a= \Gamma(d+1)\vol(\eps^{-1} K).
%\]
\end{definition}
A more concrete view of the mechanism is provided by Figure~\ref{Afig:km}. In
Remark~\ref{rem:gamma} we justify this description by showing that it is
equivalent to the above definition.
%
%\begin{remark}\label{rem:gamma}
%We can sample from the distribution $\mu_x$ as follows:
%\begin{enumerate}
%\item Sample $r$ from the Gamma distribution with parameter $d+1$ and scale
%$\eps^{-1}$, denoted $\GammaDist(d+1,\eps^{-1})$. That is, $r$ is
%distributed as
%\[
%\Pr(r>R)=\frac{1}{\eps^{-d}\Gamma(d+1)}\int_R^\infty e^{-\eps t}t^d\rd t.
%\]
%\item Sample $a$ uniformly from $Fx+rK$.
%\end{enumerate}
%Indeed, if $\|a-Fx\|_K=R$, then the distribution of $a$ as above follows the
%probability density function
%\begin{equation}
%\label{eqn:densitycal}
%g(a)= \frac{1}{\eps^{-d}\Gamma(d+1)}
%\int_R^\infty \frac{e^{-\eps t}t^d}{\vol(tK)}\rd t
%=\frac{\int_R^\infty e^{-\eps t}\rd t}{\Gamma(d+1)\vol(\eps^{-1}K)}
%=\frac{e^{-\eps R}}{\Gamma(d+1)\vol(\eps^{-1}K)}\mcom
%\end{equation}
%which is in agreement with~(\ref{eq:density}). That is, $g(a)=f(a).$
%\end{remark}

\begin{figure}
\begin{center}
\fbox{
\begin{minipage}{.9\textwidth}
${\bf KM}(F,d,\eps)\colon$
\vspace{-3mm}
\begin{enumerate*}
\item Sample $z$ uniformly at random from $K=FB_1^n$ and sample
$r\sim\GammaDist(d+1,\eps^{-1}).$
\item Output $Fx+rz.$
\end{enumerate*}
\end{minipage}
}
\end{center}
\vspace{-5mm}
\caption{Description of the $d$-dimensional $K$-norm mechanism.
\footnotesize{Here, $\GammaDist(d+1,\eps^{-1})$ denotes the
$(d+1)$-dimensional Gamma distribution with scaling $\eps^{-1}.$}}
\label{Afig:km}
\end{figure}

The next theorem shows that the $K$-norm mechanism is indeed differentially
private. Moreover, we can express its error in terms of the \emph{expected
distance from the origin} of a random point in $K.$

\begin{theorem}\label{Athm:knorm}
Let $\eps>0$. Suppose $F\colon\Rn\to\Rd$ is a linear map and put $K=FB_1^n.$
Then, the mechanism ${\rm KM}(F,d,\eps)$ is
$\eps$-differentially private, and for every
$p>0$ achieves the error bound 
$\E_{a\sim \mu_x} \|Fx-a\| \leq 
\frac{d+1}{\eps}\E_{z\in K}\|z\|_2.$
\end{theorem}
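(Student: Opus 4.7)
The plan is to split the proof into two parts: verifying $\eps$-differential privacy, and computing the expected $\ell_2$ error. I will take the sampling description in Figure~\ref{Afig:km} for granted (it is justified in the referenced Remark~\ref{rem:gamma}), which will make the error computation essentially immediate.

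For privacy, the first observation I would make is that the normalization constant $Z = \int_{\Rd} \exp(-\eps \|Fx - a\|_K)\, da$ is independent of $x$ by translation invariance of Lebesgue measure; hence the notation $Z$ (rather than $Z_x$) is justified. For neighboring databases $x, y \in \Rn$ with $\|x - y\|_1 \leq 1$, the ratio of densities at any $a \in \Rd$ becomes
\begin{equation}
\frac{f_x(a)}{f_y(a)} \;=\; \exp\bigl(\eps(\|Fy - a\|_K - \|Fx - a\|_K)\bigr) \;\leq\; \exp\bigl(\eps\,\|F(y - x)\|_K\bigr),
\end{equation}
where the inequality is the reverse triangle inequality for the Minkowski norm $\|\cdot\|_K$. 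Since $x - y \in B_1^n$ and $K = FB_1^n$, we have $F(y - x) \in K$, which by definition of the norm gauged by $K$ means $\|F(y-x)\|_K \leq 1$. Thus the ratio is bounded by $\exp(\eps)$, establishing privacy.

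For the error bound, by the sampling description the noise $a - Fx$ equals $rz$ where $z$ is drawn uniformly from $K$ and $r \sim \GammaDist(d+1, \eps^{-1})$ is drawn independently. Hence
\begin{equation}
\E_{a \sim \mu_x}\|Fx - a\|_2 \;=\; \E[r \cdot \|z\|_2] \;=\; \E[r] \cdot \E_{z \in K}\|z\|_2 \;=\; \frac{d+1}{\eps} \cdot \E_{z \in K}\|z\|_2,
\end{equation}
where the middle equality uses independence of $r$ and $z$, and the final equality uses the fact that the mean of the Gamma distribution with shape $d+1$ and scale $\eps^{-1}$ is $(d+1)/\eps$. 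This matches the claimed bound.

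The only step that is not entirely routine is the equivalence between the density in~\eqref{eq:density} and the sampling procedure in Figure~\ref{Afig:km}. If needed, I would verify it directly by a change of variables: starting from the joint density of $(r, z)$ and applying the transformation $(r, z) \mapsto (r, u)$ with $u = rz$ (whose Jacobian is $r^{-d}$), then marginalizing out $r$ from $\|u\|_K$ to $\infty$, one recovers a density on $u$ proportional to $\exp(-\eps \|u\|_K)$. Since the Remark presumably contains this calculation, the main obstacle essentially disappears, and the two-line argument above suffices.
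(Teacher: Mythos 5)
Your proof is correct and follows essentially the same route as the paper's: for privacy, both use the triangle inequality for the Minkowski norm $\|\cdot\|_K$ together with $F B_1^n = K$ to bound the density ratio by $\exp(\eps)$ (the paper normalizes to the pair $(0,x)$ while you handle general neighbors $x,y$ directly, which is a cosmetic difference); for the error, both invoke the Gamma-times-uniform-on-$K$ sampling description and the first moment of $\GammaDist(d+1,\eps^{-1})$ to obtain $\frac{d+1}{\eps}\E_{z\in K}\|z\|_2$.
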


\begin{proof}
To argue the error bound, we follow the description in Figure~\ref{Afig:km}. 
Let $D=\GammaDist(d+1,1/\eps).$ For all~$x\in\Rn$,
%\begin{align*}
\begin{equation*}
\E_{a\sim \mu_x}\|Fx-a\|
= \E_{a\sim \mu_0} \|a\|
= \E_{r\sim D} \E_{a\in rK} \|a\|
= \E_{r\sim D} r \cdot \E_{z\in K} \|z\|
= \frac{\Gamma(d+2)}{\eps\Gamma(d+1)}\E_{z\in K}\|z\|\mper 
\end{equation*}
In the first step we used that noise of the mechanism does not depend on $x.$ 
In the second to last step we applied a formula for the first moment of the
Gamma distribution (see Fact~\ref{fact:gamma}).
Moreover, $\frac{\Gamma(d+2)}{\Gamma(d+1)}=d+1.$

Privacy follows from the fact that the mechanism is a special case of the
exponential mechanism~\cite{McSherryTa07}. For completeness, we repeat
the argument.

Suppose that $\|x\|_1\le1$. It suffices to show that for all $a\in\Rd$, the
densities of $\mu_0$ and $\mu_x$ are within multiplicative $\exp(\eps)$, i.e.,
\begin{align*}
\frac{Z^{-1}e^{-\eps\|a\|_K}}
{Z^{-1}e^{-\eps\|Fx-a\|_K}}
 = e^{\eps(\|Fx-a\|_K - \|a\|_K)}
 \le e^{\eps\|Fx\|_K}
 \le e^\eps.
\end{align*}
where in the first inequality we used the triangle inequality for
$\|\cdot\|_K$. In the second step we used
that $x\in B_1^n$ and hence $Fx\in FB_1^n=K$ which means $\|Fx\|_K\le1.$
Hence, the mechanism satisfies $\eps$-differential privacy.
\end{proof}

\section{Optimality for random queries and isotropic bodies}
\label{Asec:random}

In this section, we will show that our upper bound matches our lower bound
when~$F$ is a random query.
A key observation is that $FB_1^n$ is the \emph{symmetric} convex hull
of $n$ (random) points $\{v_1,\dots,v_n\}\sse\Rd$, i.e., the convex hull
of $\{\pm v_1,\dots,\pm v_n\}$, where $v_i \in \Rd$ is the $i$th column of $F$.
The symmetric convex hull of random points has been
studied extensively in the theory of random polytopes.
A recent result of Litvak, Pajor, Rudelson and
Tomczak-Jaegermann~\cite{LitvakPaRuJa05} gives
the following lower bound on the volume of the convex hull.

\begin{theorem}[\cite{LitvakPaRuJa05}]
Let $2d\le n\le 2^d$ and let $F$ denote a random $d\times n$ Bernoulli matrix.
Then,
\begin{equation}\label{Aeq:vol}
\textstyle
\vol(FB_1^n)^{1/d}\ge \Omega(1)\sqrt{\log(n/d)/d}\mcom
\end{equation}
with probability
$1-\exp(-\Omega(d^\beta n^{1-\beta}))$ for any $\beta\in(0,\frac12).$
Furthermore, there is an explicit construction of $n$ points in $\{-1,1\}^d$
whose convex hull achieves the same volume.
\end{theorem}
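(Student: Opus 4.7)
The plan is to pass to the polar body and invoke the Bourgain--Milman inequality. Writing the columns of $F$ as $v_1,\dots,v_n \in \{-1,1\}^d$, we have $K = FB_1^n = \mathrm{conv}(\pm v_1,\dots,\pm v_n)$, and its polar is $K^\circ = \{y \in \Re^d : \max_i |\langle y,v_i\rangle| \le 1\}$. For $0$-symmetric convex bodies, Bourgain--Milman gives $\vol(K)^{1/d} \cdot \vol(K^\circ)^{1/d} \ge c/d$ for an absolute constant $c>0$. Hence it suffices to show, with the claimed probability, the upper bound $\vol(K^\circ)^{1/d} = O(1/\sqrt{d\log(n/d)})$.

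I would establish this via the Euclidean-ball containment $K^\circ \subseteq r\cdot B_2^d$ with $r = c_0/\sqrt{\log(n/d)}$, which immediately yields $\vol(K^\circ)^{1/d} \le r\cdot \vol(B_2^d)^{1/d} = O(1/\sqrt{d\log(n/d)})$. Fix $\theta \in S^{d-1}$. The Rademacher sum $\langle\theta,v_i\rangle = \sum_j \theta_j (v_i)_j$ has variance $1$ and essentially Gaussian tails, so a Berry--Esseen (or direct large-deviation) estimate at the critical scale $t = \sqrt{\log(n/d)}$, together with the asymptotic $\Pr[|g|>t] \sim e^{-t^2/2}/(t\sqrt{2\pi})$ for a standard Gaussian $g$, gives $\Pr[|\langle r\theta,v_i\rangle|>1] \ge c\sqrt{d/n}/\sqrt{\log(n/d)}$ at $r = c_0/\sqrt{\log(n/d)}$. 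Independence of the columns $v_1,\dots,v_n$ then yields
\[
\Pr[r\theta \in K^\circ] \le \bigl(1 - c\sqrt{d/n}/\sqrt{\log(n/d)}\bigr)^n \le \exp\bigl(-\Omega(\sqrt{dn/\log(n/d)})\bigr).
\]
A $\tfrac14$-net $\cN$ of $S^{d-1}$ of cardinality at most $9^d$ combined with a union bound and the standard Lipschitz discretization (a point in $r\cdot S^{d-1}$ is within $r/4$ of some net vector, and the support function of $K^\circ$ is $1$-Lipschitz with respect to $\|\cdot\|_{K^\circ}$ on $rS^{d-1}$) upgrades this to $K^\circ \cap rS^{d-1} = \emptyset$, and therefore $K^\circ \subseteq rB_2^d$, with probability at least $1 - \exp(-\Omega(\sqrt{dn/\log(n/d)}) + O(d))$.

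The main difficulties will be, first, making the anti-concentration step quantitatively sharp: the factor $\sqrt{\log(n/d)}$ must be tracked through the Gaussian tail at $t = \sqrt{\log(n/d)}$, and Berry--Esseen error terms of order $1/\sqrt{d}$ must be controlled uniformly over the net; and second, improving the failure probability to $\exp(-\Omega(d^\beta n^{1-\beta}))$ for every $\beta \in (0,\tfrac12)$, which the single flat $\tfrac14$-net above does not quite achieve and which calls for a more delicate chaining/peeling scheme --- deploying nets at a sequence of dyadic scales and reusing the independence of coordinates within each $v_i$ to save the needed $d^{\beta} n^{1-\beta}$ versus $\sqrt{dn}$ factors. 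The explicit second part of the theorem is then obtained by replacing the random $F$ with a Hadamard-based block construction (a concatenation of scaled Hadamard blocks or a codeword matrix of a suitable binary code), for which the same containment $K^\circ \subseteq rB_2^d$ can be verified deterministically by bounding $\max_i |\langle y,v_i\rangle|$ using the minimum-distance/orthogonality properties of the code.
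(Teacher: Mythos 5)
The paper does not prove this theorem; it imports it verbatim from \cite{LitvakPaRuJa05}, so there is no internal argument to compare against. Your high-level plan (polarize, invoke Bourgain--Milman, then establish a containment for $K^\circ$) is reasonable in spirit, but the specific containment you propose, $K^\circ\subseteq rB_2^d$ with $r=c_0/\sqrt{\log(n/d)}<1$, is false, and that breaks the argument at its core. Each standard basis vector $e_j$ lies in $K^\circ$ deterministically: since $v_i\in\{-1,1\}^d$ one has $|\langle e_j,v_i\rangle|=1\le 1$ for every $i$, yet $\|e_j\|_2=1>r$. More generally, the anti-concentration estimate $\Pr[\,|\langle\theta,v_i\rangle|>\sqrt{\log(n/d)}/c_0\,]\ge c\sqrt{d/n}/\sqrt{\log(n/d)}$ cannot hold uniformly over $\theta\in S^{d-1}$: for spiky directions such as $\theta=e_j$ the Rademacher sum $\langle\theta,v_i\rangle$ takes only the values $\pm 1$, so the event your net argument is meant to exclude is in fact certain there. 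This is not a matter of chasing Berry--Esseen constants; the third-moment error term is of order $\|\theta\|_3^3$, which is $\Theta(1)$ precisely at the problematic directions, not $O(1/\sqrt d)$ as you assume.

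The correct containment, and the one established in \cite{LitvakPaRuJa05} (following Giannopoulos--Hartzoulaki), excises the spiky part of the sphere by intersecting with the cube: one proves that with the stated probability $K\supseteq c\,(\sqrt{\log(n/d)}\,B_2^d\cap B_\infty^d)$, or dually $K^\circ\subseteq C\,\conv(B_1^d\cup rB_2^d)$. The deterministic inclusion $B_1^d\subseteq K^\circ$ absorbs the coordinate-axis spikes, and the probabilistic anti-concentration argument is then only applied on the flat part of the sphere, where $\|\theta\|_\infty$ is small enough for the Gaussian comparison to be valid. Since $\vol(\sqrt{\log(n/d)}\,B_2^d\cap B_\infty^d)^{1/d}=\Omega(\sqrt{\log(n/d)/d})$ whenever $\log(n/d)=O(d)$, i.e.\ exactly in the stated range $n\le 2^d$, this yields the volume lower bound directly and makes Bourgain--Milman unnecessary. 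If you insist on the dual route you would need to prove the mixed-body containment $K^\circ\subseteq C\,\conv(B_1^d\cup rB_2^d)$ rather than the Euclidean-ball one, and restrict the net to the portion of the sphere where anti-concentration actually holds; as written, your proposal asserts a containment that is provably violated on every coordinate axis.
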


The bound in~(\ref{Aeq:vol}) is tight up to constant factors.
A well known result~\cite{BaranyFu88} shows that
the volume of the convex hull of any $n$ points on the sphere in $\Rd$
of radius $\sqrt{d}$ is bounded by
%
%\begin{theorem}[e.g., \cite{BaranyFu88}]
%\label{thm:uppervol}
%Let $K$ denote the convex hull of $n$ points on the surface
%$\sqrt{d}\mathbb{S}^{d-1}$. Then,
\begin{equation}
\vol(K)^{1/d}\le O(1)\sqrt{\log(n/d)/d}\mper
\end{equation}
%\end{theorem}
%
Notice, that in our case
%\[
$K=FB_1^n\sse B_\infty^d\sse\sqrt{d}B_2^d$
%\]
and in fact the vertices of $K$ are points on the $(d-1)$-dimensional sphere
of radius $\sqrt{d}$. However, equation~(\ref{Aeq:vol}) states that the
normalized volume of the random polytope $K$ will be proportional to the
volume of the Euclidean ball of radius $\sqrt{\log(n/d)}$ rather than
$\sqrt{d}.$
When $d\gg\log n$, this means that the volume of $K$ will be tiny compared to
the volume of the infinity ball~$B_\infty^d$.
Note that~(\ref{Aeq:vol}) does not apply when $d\le \log n.$ But in that case
it is easy to get a lower bound as follows: Simply consider $F$ corresponding
to all $n$ points of the boolean hypercube $\{-1,1\}^d.$ Now,
$FB_1^n\supseteq[-1,1]^d$ and hence $\vol(FB_1^n)^{1/d}\ge 2.$ This explains
why we get qualitatively different answers below and above $d=\log(n).$

%
%The next corollary summarizes all previous bounds.
%
%\begin{corollary}\label{cor:volume}
%Let $0<d\le n/2$. Then,
%for all $F\in[-1,1]^{d\times n}$, we have
%\[\textstyle
%\vol(FB_1^n)^{1/d}\le O(1)\min\left\{1,\sqrt{\frac{\log(n/d)}d}\right\}.
%\]
%When $F\in\{-1,1\}^{d\times n}$ is chosen at random,
%then with high probability,
%\[\textstyle
%\vol(FB_1^n)^{1/d}\ge
%\Omega(1)\min\left\{1,\sqrt{\frac{\log(n/d)}d}\right\}.
%\]
%\end{corollary}
%
By combining the volume lower bound with Theorem~\ref{Athm:volume},
we get the following lower bound on the error of private mechanisms.
\begin{theorem}\label{Athm:lower}
Let $\eps>0$ and $0<d\le n/2$. Then, for almost all
matrices $F\in\{-1,1\}^{d\times n}$, every
$\eps$-differentially private mechanism $M$
must have
%\begin{equation}
$\err(M,F)\ge\Omega(d/\eps)\cdot\min\left\{\sqrt{d},\sqrt{\log(n/d)}\right\}.$
%\end{equation}
\end{theorem}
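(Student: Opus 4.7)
The plan is to combine the volume-based lower bound of Theorem~\ref{Athm:volume} with the volume estimate on random $\pm 1$ polytopes, handling the two regimes $d\gtrsim \log n$ and $d\lesssim \log n$ separately.

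First I would invoke Theorem~\ref{Athm:volume} applied to $K=FB_1^n$: any $\eps$-differentially private mechanism $M$ satisfies
\begin{equation*}
\err(M,F)\ge \Omega\!\left(\eps^{-1} d\sqrt{d}\cdot \vol(K)^{1/d}\right)\mper
\end{equation*}
Since $K$ is precisely the symmetric convex hull of the $n$ random column vectors $\pm v_1,\dots,\pm v_n\in\{-1,1\}^d$ of $F$, this reduces the task to a lower bound on the volume of a random symmetric $\pm 1$ polytope. The final bound will follow by plugging in the two regime-dependent volume estimates and simplifying.

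For the regime $d\ge c\log n$ (equivalently $n\le 2^{d/c}$), I would directly apply the theorem of Litvak, Pajor, Rudelson and Tomczak-Jaegermann quoted earlier in this section: with probability $1-\exp(-\Omega(d^{\beta}n^{1-\beta}))$,
\begin{equation*}
\vol(FB_1^n)^{1/d}\ge \Omega(1)\sqrt{\log(n/d)/d}\mper
\end{equation*}
Substituting this into the volume lower bound yields $\err(M,F)\ge \Omega(\eps^{-1}d\sqrt{\log(n/d)})$, which equals $\Omega(\eps^{-1}d)\cdot\min\{\sqrt{d},\sqrt{\log(n/d)}\}$ in this regime since $\log(n/d)\le d$.

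For the complementary regime $d\le c\log n$, the Litvak bound does not apply, and I would use the elementary observation from the paragraph preceding the theorem: if $n\gg d\cdot 2^d$, then a standard coupon-collector/union-bound argument shows that a random $F\in\{-1,1\}^{d\times n}$ contains every vector of $\{-1,1\}^d$ as a column with probability $1-o(1)$. In that event $FB_1^n$ contains the symmetric hull of all $2^d$ sign vectors, which is exactly $[-1,1]^d$, giving $\vol(FB_1^n)^{1/d}\ge 2=\Omega(1)$. Substituting into the volume lower bound yields $\err(M,F)\ge \Omega(\eps^{-1}d\sqrt{d})$, and $\sqrt{d}\le \sqrt{\log(n/d)}$ in this regime, so we again obtain $\Omega(\eps^{-1}d)\cdot\min\{\sqrt{d},\sqrt{\log(n/d)}\}$.

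The two cases together cover the full range $0<d\le n/2$, so taking the minimum of the two bounds gives the statement. The only delicate point is the transition near $d\approx \log n$, where we must verify that one of the two arguments still delivers the correct asymptotic; since the two estimates agree up to constants at the boundary (both yield $\vol^{1/d}=\Omega(1)$ there), no separate interpolation is needed. I expect this boundary verification and the precise probability bounds for coverage of the hypercube to be the only bookkeeping steps of any length; everything else is a direct combination of previously stated results.
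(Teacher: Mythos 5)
Your overall strategy is the same as the paper's: invoke the volume lower bound from Theorem~\ref{Athm:volume}, then feed in volume estimates for the random symmetric polytope $K=FB_1^n$, splitting into the regimes $d\gtrsim\log n$ and $d\lesssim\log n$. The large-$d$ case via Litvak--Pajor--Rudelson--Tomczak-Jaegermann is exactly what the paper does, and your simplification $\sqrt{\log(n/d)}\le\sqrt d$ in that regime is correct. Supplementing the small-$d$ case with a coupon-collector argument (rather than the explicit all-hypercube-columns construction the paper gestures at) is a reasonable way to make the ``almost all $F$'' claim literal there.

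However, there is a genuine gap in the transition region that your final paragraph waves away rather than resolves. The Litvak theorem requires $2d\le n\le 2^d$, i.e.\ $d\ge\log_2 n$, while the coupon-collector bound $\Pr[\text{some sign pattern missing}]\le 2^d(1-2^{-d})^n\le e^{d-n2^{-d}}$ is $o(1)$ only when $n\gtrsim d\,2^d$, i.e.\ $d\lesssim\log_2 n-\log_2\log_2 n$. Whatever constant $c$ you fix, the band $2^d<n\lesssim d\,2^d$ (equivalently $\log_2 n-\log_2\log_2 n\lesssim d<\log_2 n$) is covered by \emph{neither} estimate; it is not true that the two ``agree at the boundary,'' because at the boundary neither applies. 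The clean repair is to observe that for $n>2^d$ you may restrict attention to a uniformly random subset of $n'=\min(n,2^d)$ columns $F'$ of $F$: $F'$ is again a random Bernoulli matrix to which Litvak applies (now with $n'=2^d$, giving $\vol(F'B_1^{n'})^{1/d}\ge\Omega(1)\sqrt{(d-\log_2 d)/d}=\Omega(1)$ for $d$ larger than an absolute constant), and since $F'B_1^{n'}\subseteq FB_1^n$ the volume only increases. This covers every $n>2^d$ in one stroke and makes the coupon-collector case redundant; alternatively, keep your coupon-collector argument for very small $d$ and use the column-restriction trick only in the band. Either way, please state the containment $F'B_1^{n'}\subseteq FB_1^n$ explicitly rather than asserting the two regimes patch up automatically.
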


\paragraph{A separation result.}
We use this paragraph to point out that our lower bound immediately implies a
separation between approximate differential privacy
(see Definition~\ref{def:epsdelta})
and exact differential privacy (as used throughout the paper).
The Gaussian mechanism (see Theorem~\ref{thm:gaussian}) gives a mechanism 
providing $\delta$-approximate
$\eps$-differential privacy with error $o(\eps^{-1}\sqrt{\log (n/d)})$ as long
as $\delta\ge 1/n^{o(1)}.$ Our lower bound  in Theorem~\ref{Athm:lower}
on the other hand states that the error of any $\eps$-differentially private
mechanism must be $\Omega(\eps^{-1}\sqrt{\log(n/d)})$
(assuming $d\gg\log(n)$). We get the strongest separation when $d\le\log(n)$ and
$\delta$ is constant. In this case, our lower bound is a factor $\sqrt{d}$
larger than the upper bound for approximate differential privacy.

\subsection{Upper bound on average Euclidean norm}

Klartag and Kozma~\cite{KlartagKo09}
recently gave a bound on the quantity $\E_{z\sim K}\|z\|$ when
$K=FB_1^n$ for random~$F.$

\begin{theorem}[\cite{KlartagKo09}]
Let $F$ be a random $d\times n$ Bernoulli matrix and put $K=FB_1^n$. Then,
there is a constant $C>0$ so that with probability greater than $1-Ce^{-O(n)}$,
$\E_{z\in K}\|z\|^2\le C\log(n/d).$
%\begin{equation}
%\frac1{\vol(K)}\int_{z \in K}\|z\|^2\rd z \le C\log(n/d).
%\end{equation}
\end{theorem}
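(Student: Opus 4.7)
The plan is to interpret $\E_{z\in K}\|z\|^2$ as the trace of the covariance matrix $\mathrm{Cov}(K)=(\vol K)^{-1}\int_K zz^\top\, dz$ of the uniform distribution on $K$, reducing the task to bounding $\mathrm{tr}(\mathrm{Cov}(K))$. My strategy has two parts: first show approximate isotropy, $\mathrm{Cov}(K)\approx \sigma^2 I$ with high probability, and then estimate the scalar $\sigma^2$.

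For the isotropy, I would exploit the symmetries of the distribution of $F$: its law is invariant under permuting the $d$ rows of $F$ and under flipping the sign of any row. These operations act on $K=FB_1^n$ by signed permutations of $\R^d$ and conjugate $\mathrm{Cov}(K)$ accordingly, so averaging forces $\E_F[\mathrm{Cov}(K)]$ to commute with every signed permutation and hence to be $\sigma^2 I$ for some scalar $\sigma^2$. Promoting this to a high-probability statement $\mathrm{Cov}(K)\approx\sigma^2 I$ with probability $1-e^{-\Omega(n)}$ requires a concentration argument for the nonlinear functional $F\mapsto\mathrm{Cov}(FB_1^n)$, which I would attack via bounded-difference concentration on the $\pm 1$ entries of $F$ together with a Lipschitz estimate for how small perturbations of a single entry move $\mathrm{Cov}(K)$.

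To estimate $\sigma^2$, I use the affine invariant $\det(\mathrm{Cov}(K))^{1/d}=L_K^2\cdot\vol(K)^{2/d}$ that defines the isotropic constant $L_K$. Approximate isotropy gives $\det(\mathrm{Cov}(K))^{1/d}\approx\sigma^2$, so $\sigma^2\approx L_K^2\vol(K)^{2/d}$. The Litvak-Pajor-Rudelson-Tomczak-Jaegermann lower bound from the previous subsection, together with the matching B\'ar\'any-F\"uredi upper bound, yields $\vol(K)^{2/d}=\Theta(\log(n/d)/d)$ with probability $1-e^{-\Omega(d^{\beta}n^{1-\beta})}$. Combined with $L_K=O(1)$ this gives $\sigma^2=O(\log(n/d)/d)$, and hence $\mathrm{tr}(\mathrm{Cov}(K))=d\sigma^2=O(\log(n/d))$. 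The hardest step, which carries all the genuine convex-geometric content, is establishing $L_K=O(1)$ for the random symmetric $\pm 1$ polytope $K$ — the hyperplane conjecture restricted to this class of random polytopes — which demands a careful analysis of the $\psi_2$-behavior of linear functionals on $K$; once that input is in hand, the symmetry and volume estimates above deliver the claimed bound routinely.
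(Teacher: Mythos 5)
Your decomposition of the statement as ``approximate isotropy $+$ $L_K=O(1)$ $+$ volume estimate'' is structurally faithful to what Klartag and Kozma actually establish (and indeed the present paper cites \cite{KlartagKo09} for the theorem rather than proving it). However, the plan misjudges where the difficulty sits, and the proposed mechanism for the isotropy step would not close the argument.

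The key logical point is that $L_K^2\vol(K)^{2/d}=\det(\mathrm{Cov}(K))^{1/d}$, while $\E_{z\in K}\|z\|^2=\mathrm{tr}(\mathrm{Cov}(K))$. By AM--GM, $\tfrac1d\mathrm{tr}(\mathrm{Cov}(K))\ge\det(\mathrm{Cov}(K))^{1/d}$, so the pieces you single out as hard---$L_K=O(1)$ and $\vol(K)^{2/d}=\Theta(\log(n/d)/d)$---only yield a \emph{lower} bound on the quantity to be controlled. The direction you actually need, an \emph{upper} bound on the trace, is equivalent to a uniform upper bound on the eigenvalues of $\mathrm{Cov}(K)$, which is precisely the approximate-isotropy claim you relegate to a ``routine concentration argument.'' That step carries the weight of the theorem; it is not a supporting lemma, and calling $L_K=O(1)$ the sole hard step inverts the dependency. (In fact, given $O(1)$-approximate isotropy and the volume bound, the trace bound follows and then $L_K=O(1)$ comes essentially for free via the determinant identity; the two claims are nearly interchangeable, not one-prerequisite-to-the-other.)

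As for the proposed mechanism: McDiarmid-type bounded-difference inequalities on the $dn$ sign entries will not give what you need. Flipping a single entry of $F$ moves one vertex of $K$ by $2$ in $\ell_2$, and the induced change in $\mathrm{Cov}(FB_1^n)$ has no obvious $O(1/n)$ control---it depends on whether that vertex is extremal in some direction, and removing or shifting an extremal vertex can change the body nonlocally. Even an optimistic Lipschitz accounting would have to beat the target scale $\sigma^2\sim\log(n/d)/d$ while producing tails of order $e^{-\Omega(n)}$; there is no indication the numbers work out, and the argument in \cite{KlartagKo09} does not take this route. It instead bounds $\E_{z\in K}\langle z,\theta\rangle^2$ direction by direction via tail estimates for linear functionals of the random vertices, and makes the estimate uniform in $\theta$. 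Your symmetry observation correctly identifies $\E_F[\mathrm{Cov}(K)]$ as a multiple of the identity, but that identifies the mean, not the fluctuations, and the fluctuation control is the genuine content.
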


An application of Jensen's inequality thus gives us the following corollary.
\begin{corollary}
Let $\eps>0$ and $0<d\le n/2$. Then, for almost all
matrices $F\in\{-1,1\}^{d\times n}$, the mechanism ${\rm KM}(F,d,\eps)$ is
$\eps$-differentially private with error at most
%\begin{equation}
$O(d/\eps)\cdot\min\left\{\sqrt{d},\sqrt{\log(n/d)}\right\}.$
%\end{equation}
\end{corollary}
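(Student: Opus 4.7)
My plan is to combine the $K$-norm mechanism guarantee (Theorem~\ref{Athm:knorm}) with the Klartag--Kozma bound on $\E_{z\in K}\|z\|_2^2$ via a single application of Jensen's inequality, and then pair this with a deterministic containment to obtain the minimum of the two terms. Privacy is automatic from Theorem~\ref{Athm:knorm} for every linear $F$, so the whole task reduces to controlling $\E_{z\in K}\|z\|_2$ when $F$ is a random $\pm 1$ matrix.

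For the $\sqrt{\log(n/d)}$ branch, I would apply Jensen to the convex function $t\mapsto t^2$ to conclude
\[
\E_{z\in K}\|z\|_2 \;\le\; \bigl(\E_{z\in K}\|z\|_2^2\bigr)^{1/2}.
\]
The preceding Klartag--Kozma theorem gives $\E_{z\in K}\|z\|_2^2 \le C\log(n/d)$ with probability at least $1-Ce^{-\Omega(n)}$ over the choice of Bernoulli $F$. Thus almost all such $F$ satisfy $\E_{z\in K}\|z\|_2 \le \sqrt{C\log(n/d)}$, and plugging into Theorem~\ref{Athm:knorm} yields error $O(d/\eps)\cdot\sqrt{\log(n/d)}$.

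For the $\sqrt{d}$ branch, I would observe that for any $F\in\{-1,1\}^{d\times n}$ the rows of $F$ have entries in $[-1,1]$, so $K=FB_1^n\subseteq[-1,1]^d\subseteq\sqrt{d}\,B_2^d$. Hence $\|z\|_2\le\sqrt{d}$ for \emph{every} $z\in K$, giving $\E_{z\in K}\|z\|_2\le\sqrt{d}$ deterministically and therefore error at most $O(d\sqrt{d}/\eps)$ for every Bernoulli $F$. Taking whichever of the two bounds is smaller produces the claimed $O(d/\eps)\cdot\min\{\sqrt d,\sqrt{\log(n/d)}\}$ guarantee, and the ``almost all'' qualifier is simply inherited from the probability estimate in the Klartag--Kozma theorem. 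There is no real obstacle here: once Theorem~\ref{Athm:knorm} and the Klartag--Kozma estimate are in hand, the only nontrivial step is invoking Jensen's inequality to pass from the second moment bound to a first moment bound.
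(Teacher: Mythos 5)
Your proposal is correct and matches the paper's (terse) argument: the paper simply invokes Jensen's inequality on the Klartag--Kozma second-moment bound to get $\E_{z\in K}\|z\|\le\sqrt{C\log(n/d)}$, plugs this into the error bound $\frac{d+1}{\eps}\E_{z\in K}\|z\|$ from the $K$-norm mechanism, and implicitly uses the deterministic containment $K=FB_1^n\sse B_\infty^d\sse\sqrt d\,B_2^d$ (noted earlier in the same section) for the $\sqrt d$ branch of the minimum. You have made each of these steps explicit, including the observation that privacy holds for every $F$ while the ``almost all'' qualifier comes only from the Klartag--Kozma probability estimate.
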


\subsection{Approximately isotropic bodies}
\label{Asec:isotropic}

The following definition is a relaxation of nearly isotropic position used
in literature (e.g.,~\cite{KannanLoSi97})
\begin{definition}[Approximately Isotropic Position]
We say a convex body $K \sse \Re^d$ is in $c$-\emph{approximately isotropic
position} if for every unit vector $v \in \Re^d$,
%\begin{equation}
%\frac1{\vol(K)}\int_K |\langle z,v\rangle|^2 \rd z
$\E_{z\in K}|\langle z,v\rangle|^2
\le
c^2L_K^2 \vol(K)^{\frac{2}{d}} \mper$
%\end{equation}
\end{definition}
%\begin{definition}[Approximately Isotropic Position]
%We say a convex body $K \sse \Re^d$ is in $c$-\emph{nearly isotropic
%position} if for every unit vector $v \in \Re^d$,
%\begin{equation}
%\frac1{c^2}L_K^2\vol(K)^{\frac{2}{d}}
%\le
%\frac1{\vol(K)}\int_K |\langle z,v\rangle|^2 \rd z
%\le
%c^2L_K^2 \vol(K)^{\frac{2}{d}} \mper
%\end{equation}
%\end{definition}

The results of Klartag and Kozma~\cite{KlartagKo09} referred to in the
previous section show that the symmetric convex hull $n$ random points from
the $d$-dimensional hypercube are in $O(1)$-approximately isotropic position and have
$L_K = O(1)$. More generally, the $K$-norm mechanism can be shown to be
approximately optimal whenever $K$ is nearly isotropic.

\begin{theorem}[Theorem~\ref{thm:upper1} restated]\label{thm:upper}
Let $\eps>0$. Suppose $F\colon\Rn\to\Rd$ is a linear map such that $K=FB_1^n$
is in $c$-approximately isotropic position. Then, the $K$-norm mechanism is
$\eps$-differentially private and has error at most
%\[
$O(c L_K)\cdot \vollb(F,\eps).$
%\]
\end{theorem}

%\begin{proof}
%By Theorem~\ref{thm:knorm}, the $K$-norm mechanism is $\eps$-differentially
%private and has error
%%\[
%$\frac{d+1}\eps\E_{z\sim K}\|z\|.$
%%\]
%By the definition of the approximately isotropic position, we have:
%%\begin{equation}
%$\E_{z\sim K} \|z\|^2
%\leq d \cdot c^2 L_K^2 \vol(K)^{2/d}.$
%%\end{equation}
%%
%By Jensen's inequality,
%\[
%\frac{d+1}\eps\E_{z \sim K}\|z\|
%\le\frac{d+1}\eps \sqrt{\E_{z\sim K}\|z\|^2}
%\le O(\eps^{-1}d\sqrt{d}\cdot\vol(K)^{1/d}cL_K).
%\]
%Plugging in the definition of $\vollb$ proves the result.
%\end{proof}

We can see that the previous upper bound is tight up to a factor of~$cL_K$.
Estimating $L_K$ for general convex bodies is a well-known open problem in
convex geometry.
The best known upper bound for a general convex body
$K\subseteq \Rd$ is $L_K\le O(d^{1/4})$ due to Klartag~\cite{Klartag06},
improving over the estimate $L_K\le
O(d^{1/4}\log d)$ of Bourgain from '91.
The conjecture is that $L_K=O(1)$.

\begin{conjecture}[Hyperplane Conjecture]
\label{Aconj:hyperplane}
There exists $C>0$ such that for every $d$ and every convex set
$K\sse\Rd$, $L_K<C$.
\end{conjecture}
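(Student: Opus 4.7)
The Hyperplane Conjecture is a notorious open problem in asymptotic convex geometry, so any ``proof plan'' I propose here is really a sketch of the approach that has yielded the best partial bounds rather than a full resolution. The strategy is to reformulate the conjecture in terms of log-concave probability measures, reduce to isotropic position, and then try to control the density at the origin through concentration arguments.

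First, I would extend $L_K$ to arbitrary log-concave densities $f$ on $\Rd$ by setting $L_f = (\sup f)^{1/d} \cdot \det(\mathrm{Cov}(f))^{1/(2d)}$, and verify that uniform boundedness of this extended functional is equivalent to the original conjecture. After applying the affine transformation that puts $K$ in isotropic position (identity covariance, unit volume), the problem reduces to showing $f_K(0) \le C^d$ for a universal constant $C$, where $f_K$ is the uniform density on $K$ and $L_K = f_K(0)^{1/d}$. Equivalently, via Brunn's concavity principle, every central hyperplane section of an isotropic convex body must have $(d-1)$-volume at least $1/C$.

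Second, I would attempt the Bourgain--Klartag route via $\psi_2$-estimates on linear functionals. The idea is that if a sufficiently large set of directions $\theta \in S^{d-1}$ has subgaussian marginals, i.e.\ $\|\langle \cdot, \theta\rangle\|_{\psi_2} \le A$, then the density at the origin is controlled in terms of $A$. Bourgain originally obtained $L_K = O(d^{1/4}\log d)$ along these lines, and Klartag removed the logarithm to reach $O(d^{1/4})$. Pushing this to $O(1)$ would require a sharp thin-shell inequality $\Var(\|X\|_2^2) = O(d)$ for $X$ uniform on an isotropic body; by the Eldan--Klartag reduction, such a thin-shell estimate implies the Hyperplane Conjecture.

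The main obstacle is precisely this concentration step. The thin-shell conjecture, together with its stronger cousin the Kannan--Lov\'asz--Simonovits conjecture, remains open in its sharp form, and all known approaches (including Eldan's stochastic localization) fall short of yielding dimension-free concentration for the Euclidean norm under a general log-concave measure. Proving the Hyperplane Conjecture thus hinges on qualitatively new ideas about concentration for log-concave distributions, which is why, despite decades of effort, only polynomial improvements over Bourgain's original bound are currently known.
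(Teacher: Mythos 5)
You are correct that this is an open problem, and the paper does not prove it either: it is stated as a conjecture (attributed to the convex-geometry literature) and used only as a hypothesis in Theorems~\ref{thm:non-isotropic} and the related results in Sections~\ref{sec:isotropic} and~\ref{sec:nonisotropic}. The only unconditional bound the paper invokes is Klartag's $L_K = O(d^{1/4})$ (cited as~\cite{Klartag06}), which feeds into the quantity $\alpha_K$ in Proposition~\ref{prop:nivol}. Your survey of the $\psi_2$/thin-shell route and its obstacles is accurate, but since no proof exists, there is nothing in the paper to compare it against; the appropriate answer here is simply to note that the statement is assumed, not proved.
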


Assuming this conjecture we get matching bounds for approximately isotropic
convex bodies.
\begin{theorem}
Let $\eps>0.$ Assuming the hyperplane conjecture, for every
$F\in[-1,1]^{d\times n}$ such that $K=FB_1^n$ is $c$-approximately isotropic,
the $K$-norm mechanism ${\rm KM}(F,d,\eps)$ is  $\eps$-differentially private with
error at most
%\begin{equation}
$O(c)\cdot \vollb(F,\eps) \leq
O(cd/\eps)\cdot\min\left\{\sqrt{d},\sqrt{\log(n/d)}\right\}\mper$
%\end{equation}
\end{theorem}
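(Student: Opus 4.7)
The plan is to derive this theorem as a direct corollary of Theorem~\ref{thm:upper}, the assumed hyperplane conjecture, and a pair of uniform volume estimates for the convex body $K = FB_1^n$.

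First I would apply Theorem~\ref{thm:upper}: since $K$ is $c$-approximately isotropic, the $K$-norm mechanism ${\rm KM}(F,d,\eps)$ is $\eps$-differentially private and has error at most $O(c L_K) \cdot \vollb(F,\eps)$. Invoking Conjecture~\ref{Aconj:hyperplane} applied to the convex body $K$ yields $L_K = O(1)$, which immediately gives the first inequality $O(c)\cdot\vollb(F,\eps)$.

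For the second inequality, I would bound $\vollb(F,\eps) = \eps^{-1} d\sqrt{d}\cdot\vol(K)^{1/d}$ in two complementary ways and take the minimum. Since every column $v_i$ of $F$ lies in $[-1,1]^d$, it satisfies $\|v_i\|_2 \le \sqrt{d}$, so the vertex set $\{\pm v_1,\dots,\pm v_n\}$ of $K$ is contained in $\sqrt{d}\,B_2^d$ and hence $K \subseteq \sqrt{d}\,B_2^d$. Combined with the standard estimate $\vol(B_2^d)^{1/d} = \Theta(1/\sqrt{d})$, this gives $\vol(K)^{1/d} \le O(1)$ and therefore $\vollb(F,\eps) \le O(d\sqrt{d}/\eps)$. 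On the other hand, the volume upper bound stated just after equation~(\ref{Aeq:vol}), applied to the $2n$ points $\{\pm v_i\}$ lying in the ball of radius $\sqrt{d}$, gives $\vol(K)^{1/d} \le O(\sqrt{\log(n/d)/d})$ and hence $\vollb(F,\eps) \le O(d\sqrt{\log(n/d)}/\eps)$. Taking the minimum of these two bounds produces $\vollb(F,\eps) \le O(d/\eps)\cdot\min\{\sqrt{d},\sqrt{\log(n/d)}\}$, which completes the proof.

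There is essentially no technical obstacle here: the statement is a clean bookkeeping corollary combining the isotropic-case upper bound already established, the conjectural $L_K = O(1)$ bound, and the two standard extremal estimates for the volume of a symmetric convex hull whose vertices lie in $\sqrt{d}\,B_2^d$. The only mild care needed is to verify that the volume upper bound, originally phrased for points on the sphere of radius $\sqrt{d}$, extends to points inside the corresponding ball by monotonicity of volume under taking convex hulls.
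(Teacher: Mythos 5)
Your proposal is correct and matches what the paper intends (it states this theorem without an explicit proof, treating it as a direct corollary of Theorem~\ref{thm:upper}, the hyperplane conjecture, and the volume estimates). The first inequality is exactly Theorem~\ref{thm:upper} with $L_K = O(1)$ substituted in; the second inequality follows from the two upper bounds on $\vol(K)^{1/d}$ that you give, and your containment observation $K = \conv(\pm v_1,\dots,\pm v_n) \subseteq \conv(\pm w_1,\dots,\pm w_n)$ for $w_i = v_i/\|v_i\|_K$ on the sphere correctly extends the B\'ar\'any--F\"uredi bound from sphere points to ball points, since the origin is in the symmetric hull. One trivial slip: you cite the \verb|onlyproc| label \verb|\ref{Aeq:vol}| and \verb|\ref{Aconj:hyperplane}| rather than the main-text labels \verb|\ref{eq:vol}| and \verb|\ref{conj:hyperplane}|, but the content referenced is identical.
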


\section{Non-isotropic bodies}
\label{Asec:nonisotropic}

While the mechanism of the previous sections is near-optimal for
near-isotropic queries, it can be far from optimal if $K$ is far from
isotropic. For example, suppose the matrix $F$ has random entries from
$\{+1,-1\}$ in the first row, and (say) from $\{\frac{1}{d^2},-\frac{1}{d^2}\}$ in the
remaining rows. While the Laplacian mechanism will add $O(\frac{1}{\epsilon})$
noise to the first co-ordinate of $Fx$, the $K$-norm mechanism will add noise
$O(d/\epsilon)$ to the first co-ordinate. Moreover, the volume lower
bound $\vollb$ is at most $O(\eps^{-1}\sqrt{d})$. Rotating $F$ by a random
rotation gives, w.h.p., a query for which the Laplacian mechanism adds
$\ell_2$ error $O(d/\eps)$. For such a body, the Laplacian and the $K$-norm
mechanisms, as well as the $\vollb$ are far from optimal.

In this section, we will design a recursive mechanism that can handle
such non-isotropic convex bodies. To this end, we will need to introduce a few more
notions from convex geometry.

Suppose $K\subseteq\Rd$ is a centered convex body, i.e. $\int_K
x\rd x=0.$
The \emph{covariance matrix of $K$}, denoted $M_K$ is the $d\times d$ matrix
with entry $ij$ equal to
%\begin{equation}
$M_{ij} = \frac1{\vol(K)}\int_{K} x_ix_j\rd x.$ That is, $M_K$ is the
covariance matrix of the uniform distribution over $K.$
%\end{equation}
%
%In other words, $M_K$ is the covariance matrix of the
%uniform distribution over $K.$

\paragraph{A recursive mechanism.}
Having defined the covariance matrix, we can
describe a recursive mechanism for the case when $K$ is not in
isotropic position. The idea of the mechanism is to act differently on
different eigenspaces of the covariance matrix. Specifically, the mechanism
will use a lower-dimensional version of ${\bf KM}(F,d',\eps)$ on subspaces
corresponding to few large eigenvalues.

Our mechanism, called ${\rm NIM}(F,d,\eps)$, is given a linear mapping
$F\colon\Rn\to\Rd,$ and parameters $d\in\mathbb{N},\eps>0.$
The mechanism proceeds recursively by partitioning the convex body $K$ into two
parts defined by the middle eigenvalue of $M_K.$ On one part it will act
according to the $K$-norm mechanism. On the other part, it will descend
recursively. The mechanism is described in Figure~\ref{Afig:nim}.

\begin{figure}
\begin{center}
\fbox{
\begin{minipage}{.9\textwidth}
${\bf NIM}(F,d,\eps)\colon$
\vspace{-3mm}
\begin{enumerate*}
\item Let $K=FB_1^n$. Let $\sigma_1\ge\sigma_2\ge\dots\ge\sigma_d$
denote the eigenvalues of the covariance matrix $M_K.$ Pick a corresponding
orthonormal eigenbasis $u_1,\dots,u_d$.
\item
\label{step:subspaces}
Let $d'=\lfloor d/2\rfloor$ and let $U={\rm span}\{u_1,\dots,u_{d'}\}$ and
$V={\rm span}\{u_{d'+1},\dots,v_d\}.$
\item \label{step:knorm}
Sample $a\sim{\bf KM}(F,d,\eps)\mper$
\item If $d=1$, output $P_V a$. Otherwise, output
${\bf NIM}(P_U F,d',\eps) + P_V a\mper$
\end{enumerate*}
\end{minipage}
}
\end{center}
\vspace{-5mm}
\caption{Mechanism for non-isotropic bodies}
\label{Afig:nim}
\end{figure}

\begin{theorem}\label{Athm:nim}
Let $\eps>0$. Suppose $F\colon\Rn\to\Rd$ is a linear map. Further,
assume the hyperplane conjecture.  Then, the mechanism ${\bf NIM}(F,d,\eps)$ is 
$\eps$-differentially private mechanism~$M$ and achieves error at most
$O(\log(d)^{3/2}\cdot \Gvollb(F,\eps)).$
\end{theorem}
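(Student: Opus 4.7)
The mechanism unrolls into $T=\lceil\log_2 d\rceil$ levels; at level $\ell$ it draws $a_\ell\sim\textbf{KM}(P_{U_{<\ell}}F,d_{\ell-1},\eps/T)$ and releases $P_{V_\ell}a_\ell$, where $V_\ell,U_\ell$ are the bottom- and top-half eigenspaces of the covariance matrix $M_{K_\ell}$ with $K_\ell=(P_{U_{<\ell}}F)B_1^n$, and $d_{\ell-1}$ is the ambient dimension at that level. I would use the per-level budget $\eps/T$ (which contributes one factor $\log d$ to the final error). Each $P_{V_\ell}a_\ell$ is then post-processing of an $\eps/T$-differentially private draw, so the sequential composition theorem applied to the $T$ releases gives that the joint output of NIM is $\eps$-differentially private.

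\textbf{Per-level error.} Let $\sigma_1\ge\cdots\ge\sigma_{d_{\ell-1}}$ be the covariance eigenvalues of $K_\ell$ in the basis $u_1,\ldots,u_{d_{\ell-1}}$. By Theorem~\ref{Athm:knorm} (and the description in Figure~\ref{Afig:km}) the noise $a_\ell-P_{U_{<\ell}}Fx$ is distributed as $rz$ with $z$ uniform on $K_\ell$ and $r\sim\GammaDist(d_{\ell-1}+1,T/\eps)$ independent, so
\[
\E\|P_{V_\ell}(a_\ell-P_{U_{<\ell}}Fx)\|_2
=\tfrac{(d_{\ell-1}+1)T}{\eps}\,\E_{z\in K_\ell}\|P_{V_\ell}z\|_2
\le\tfrac{(d_{\ell-1}+1)T}{\eps}\Bigl(\textstyle\sum_{i>d_{\ell-1}/2}\sigma_i\Bigr)^{1/2},
\]
using Jensen and the identity $\E_{z\in K_\ell}\|P_Vz\|_2^2=\sum_{i\in V}\sigma_i$. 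The crucial geometric step is to argue, via the Hyperplane Conjecture, that the ``thin half'' projection $P_{V_\ell}K_\ell$ is in approximately isotropic position on $V_\ell$ with $L=O(1)$, so that $\sum_{i>d_{\ell-1}/2}\sigma_i\le O(d_{\ell-1})\,\vol(P_{V_\ell}K_\ell)^{2/\dim V_\ell}$. Because $\vol(P_{V_\ell}K_\ell)$ is one of the subspace-projected volumes appearing in the definition of $\Gvollb(F,\eps)$, this gives a per-level $\ell_2$-error of $O(T)\cdot\Gvollb(F,\eps)$.

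\textbf{Combining levels and main obstacle.} By construction $V_\ell\subseteq U_{\ell-1}=V_{\ell-1}^\perp$ inside the ambient space at level $\ell-1$, and inductively the $V_\ell$ are pairwise orthogonal in $\Rd$. Hence the total squared $\ell_2$-error of NIM decomposes as the sum of per-level squared errors and is at most $T\cdot(O(T)\cdot\Gvollb(F,\eps))^2=O(T^3)\cdot\Gvollb(F,\eps)^2$; taking square roots gives $\err(\textbf{NIM},F)=O(T^{3/2})\cdot\Gvollb(F,\eps)=O(\log^{3/2}d)\cdot\Gvollb(F,\eps)$, as claimed. The main obstacle is the geometric bound in the middle paragraph: one has to use the Hyperplane Conjecture to conclude $L_{P_{V_\ell}K_\ell}=O(1)$, deal with the fact that the projection of the uniform measure on $K_\ell$ is not the uniform measure on $P_{V_\ell}K_\ell$, and verify that the resulting $\vol(P_{V_\ell}K_\ell)$ is indeed absorbed into the definition of $\Gvollb(F,\eps)$. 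The remaining ingredients (privacy composition, first-moment computation for the Gamma distribution, orthogonality of the $V_\ell$) are routine bookkeeping.
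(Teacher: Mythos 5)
Your privacy accounting and per-level error decomposition are essentially the same bookkeeping the paper uses, and the final $T^{3/2}$ arithmetic is right. But the geometric step you flag as ``the main obstacle'' is not a loose end you can expect to tighten by the same route — the route itself does not work, and the paper takes a genuinely different one.

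You aim for an \emph{upper} bound $\sum_{i>d'}\sigma_i \le O(d)\,\vol(P_{V}K)^{2/\dim V}$ by arguing that the ``thin'' projection $P_V K$ is in approximately isotropic position with $L=O(1)$ via the Hyperplane Conjecture. This conflates two different things. The Hyperplane Conjecture bounds the isotropic \emph{constant} $L_{P_VK}$, which is an affine invariant defined after bringing $P_VK$ to isotropic position; it says nothing about whether $P_VK$ \emph{is} in (approximately) isotropic position. If $P_VK$ is itself far from isotropic, knowing $L_{P_VK}=O(1)$ gives you no control over the eigenvalues of its covariance matrix in the original coordinates. Moreover, even if one could argue that $P_VK$ is round, the $\sigma_i$ you are trying to bound are eigenvalues of the marginal of the uniform measure on $K$ restricted to $V$, not eigenvalues of the covariance of the \emph{uniform} measure on the set $P_VK$; these are genuinely different measures and you acknowledge this but give no mechanism to bridge them. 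The inequality you want would also run in the wrong direction from what volume–eigenvalue relations naturally give: spread (large eigenvalues) yields a \emph{lower} bound on volume, not an upper bound on eigenvalues from volume.

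The paper avoids both issues by never arguing about the ``thin'' projection. Instead it projects onto the \emph{top} eigenspace $U$: Proposition~\ref{prop:nivol} gives a \emph{lower} bound $\vol_{d'}(P_UK)^{2/d'} \gtrsim \alpha_K^2\bigl(\prod_{i\le d'}\sigma_i\bigr)^{1/d'} \ge \alpha_K^2\,\sigma_{d'}$, where $\alpha_K$ is controlled by the Hyperplane Conjecture through the isotropic \emph{transformation} $T=M_K^{-1/2}$ (not through any isotropy of $K$ or of a projection), and then $\Gvollb$ is compared against $\E\|P_Va\|_2^2 \lesssim (d^3/\eps^2)\sigma_{d'+1}$ using only the elementary ordering $\sigma_{d'}\ge\sigma_{d'+1}$. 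This sidesteps both the ``which body is isotropic'' question and the marginal-vs-uniform mismatch entirely. So the gap in your proposal is not a routine verification; it is the place where you would need to replace your step with the paper's different mechanism (top-eigenspace volume lower bound plus eigenvalue monotonicity) or an equivalent.
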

While it is easy to argue privacy (up to losing a $\log(d)$ factor in $\eps$
due to the recursion), the error analysis of our mechanism
requires more work. In particular, it is crucial to understand how the volume
of~$P_U K$ compares to the norm of~$P_Va.$ In Section~\ref{sec:eigenvol} we
give a formula for the volume of $K$ in eigenspaces of the covariance matrix
such as~$U$.
This is then used in Section~\ref{sec:nim} to conclude Theorem~\ref{Athm:nim}.

\section{Efficient implementation}
\label{Asec:efficient}

The costly step in our mechanism is sampling uniformly from high-dimensional
convex bodies such as $K=FB_1^n$. 
To implement the sampling step efficiently, we will use
geometric random walks. It can be shown that these random walks 
approach the uniform distribution over $K$ 
in polynomial time. We will actually need convergence bounds in the relative
$\ell_\infty$-metric, a metric strong enough to entail guarantees about exact
differential privacy. 

 Some complications arise, since we need to repeat the 
privacy and optimality analysis of our mechanisms in the presence of
approximation errors (such as an approximate covariance matrix and an approximate
separation oracle for $K$). The details are omitted from the short abstract
but can be found in Section~\ref{sec:efficient}.

%\Mnote{Rest of the 10-page version should go here. I would suggest after the
%bibliography we simply have the full paper from section 2 to the end. This
%will cause some redundancy, but I can imagine that some reviewers would
%prefer to read the full paper. In order to avoid double references, I suggest
%we prefix every label used in this part with, say, the letter 'A'. For
%instance, this section is labelled 'Asec:lower'. The references still need to
%be changed. I suggest holding this off until the content of the 10-page
%version is fixed.}
}

\section{Preliminaries}
\label{sec:prelim}
\paragraph{Notation.} We will write $B_p^d$ to denote the unit ball of the $p$-norm in $\Rd$. When
$K\subseteq\Rd$ is a centrally symmetric convex set, we write $\|\cdot\|_K$ for the (Minkowski) norm
defined by $K$ (i.e. $\|x\|_K = \inf \{r \colon x \in rK\}$). The
$\ell_p$-norms are denoted by $\|\cdot\|_p$, but we use $\|\cdot\|$
as a shorthand for the Euclidean norm $\|\cdot\|_2$. Given a function $F : \Re^{d_1} \rightarrow \Re^{d_2}$ and a set $K \in \Re^{d_1}$, $FK$ denotes the
set~$\{F(x): x\in K\}$.

\subsection{Differential Privacy}
\begin{definition}
A \emph{mechanism} $M$ is a family of probability measures
$M=\{\mu_x\colon x\in\Rn\}$
where each measure~$\mu_x$ is defined on $\Rd$.
A mechanism is called \emph{$\eps$-differentially private}, if for all $x,y\in\Rn$ such that
$\|x-y\|_1\le1$, we have
%\[
$\sup_{S\sse\Rd}\frac{\mu_x(S)}{\mu_y(S)}\le\exp(\eps),$
%\]
where the supremum runs over all measurable subsets $S\sse\Rd$.
\end{definition}

A common weakening of $\eps$-differential privacy is the following notion of \emph{approximate} privacy.
\begin{definition}\label{def:epsdelta}
A mechanism is called $\delta$-approximate \emph{$\eps$-differentially private},
if for all $x,y\in\Rn$ such that
%\[
$\mu_x(S)\le\exp(\eps)\mu_y(S)+\delta$
%\]
for all measurable subsets $S\sse\Rn$ whenever$\|x-y\|_1\le1$,
\end{definition}

The definition of privacy is transitive in the following sense.

\begin{fact}
\label{fact:trans}
If $M$ is an $\eps$-differentially private mechanism and $x,y\in\Rn$ satisfy $\|x-y\|_1\le
k$, then for measurable $S\sse\Rd$ we have
%\begin{equation}
$\frac{\mu_x(S)}{\mu_y(S)}\le\exp(\eps k).$
%\end{equation}
\end{fact}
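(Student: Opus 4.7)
The plan is to prove this by a standard hybrid/telescoping argument along a polygonal path of intermediate databases joining $x$ and $y$, applying the one-step privacy guarantee to each small hop.

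First I would fix an integer $m \ge k$ (taking $m = \lceil k \rceil$ suffices) and interpolate linearly between $x$ and $y$: define $z_i = x + \tfrac{i}{m}(y-x)$ for $i = 0, 1, \dots, m$, so that $z_0 = x$ and $z_m = y$. By linearity of the $\ell_1$-norm, each consecutive pair satisfies
\[
\|z_i - z_{i+1}\|_1 \;=\; \tfrac{1}{m}\|x-y\|_1 \;\le\; \tfrac{k}{m} \;\le\; 1 \mper
\]
Thus the hypothesis of $\eps$-differential privacy applies to every consecutive pair, giving $\mu_{z_i}(S) \le \exp(\eps)\,\mu_{z_{i+1}}(S)$ for every measurable $S \sse \Rd$ and every $i$.

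Next I would chain these $m$ inequalities: telescoping yields
\[
\mu_x(S) \;=\; \mu_{z_0}(S) \;\le\; \exp(\eps)\,\mu_{z_1}(S) \;\le\; \cdots \;\le\; \exp(\eps m)\,\mu_{z_m}(S) \;=\; \exp(\eps m)\,\mu_y(S) \mper
\]
Taking $m = k$ when $k$ is an integer delivers the claimed bound $\exp(\eps k)$; for general real $k$ one obtains $\exp(\eps \lceil k \rceil)$, which is the form actually needed in applications such as the lower bound proof of Theorem~\ref{thm:volume-lower}, where $k$ will be an integer (or any convenient rounding).

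There is no real obstacle here: the only mild subtlety is that the one-step definition of $\eps$-differential privacy is stated for $\ell_1$-distance at most $1$, so one must choose $m$ large enough that each step of the interpolation fits under this threshold, and all measure-ratio manipulations are justified because the ratios are applied to the same measurable set $S$ at every stage.
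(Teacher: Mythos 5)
Your telescoping argument along a polygonal path of $m=\lceil k\rceil$ intermediate databases is the standard proof of this transitivity fact, which the paper leaves implicit; it is correct. You are also right that for non-integer $k$ the chain literally yields $\exp(\eps\lceil k\rceil)$ rather than $\exp(\eps k)$; since the paper invokes the Fact with $k=\lambda=d/(2\eps)$ (generally non-integer) in the proof of Theorem~\ref{thm:volume}, this costs at most an extra $\exp(\eps)$ factor, harmlessly absorbed into the $\Omega(\cdot)$ there.
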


\begin{definition}[Error]
Let $F\colon\Rn\to\Rd$ and $\ell\colon\Rd\times\Rd\to\R^+$.
We define the \emph{$\ell$-error} of a mechanism $M$ as
%\[
$\err_\ell(M,F)=\sup_{x\in\Rn}\E_{a\sim\mu_x}\ell(a,Fx).$
%\]
Unless otherwise specified, we take $\ell$ to be the Euclidean norm~$\ell_2$.
\end{definition}

\begin{definition}[Sensitivity]
We will consider mappings $F$ which possess the
Lipschitz property,
%\[
$\sup_{x\in B_1^n}\|Fx\|_1\le d.$
%\]
In this case, we will say that $F$ has \emph{sensitivity}~$d$.
\end{definition}

Our goal is to show trade-offs between privacy and error. The following
standard upper bound, usually called the Laplacian mechanism, is known.

\begin{theorem}[\cite{DworkMNS06}]
\label{thm:laplace}
For any mapping $F\colon\Rn\to\Rd$ of sensitivity~$d$ and any~$\eps>0$,
there exists an $\eps$-differentially private mechanism $M$
with $\err(M,F)=O(d\sqrt{d}/\eps).$
\end{theorem}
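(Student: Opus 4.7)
The plan is to instantiate $M$ as $M(x) = Fx + Z$, where $Z = (Z_1,\dots,Z_d)$ has coordinates drawn independently from the Laplace distribution with scale parameter $d/\eps$, so that each $Z_i$ has density proportional to $\exp(-\eps|z_i|/d)$. Then the joint density of $Z$ is proportional to $\exp(-\eps \|z\|_1/d)$, and correspondingly $\mu_x$ is the translate of this distribution by $Fx$.

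For privacy, I would fix any $x,y\in\Rn$ with $\|x-y\|_1\le 1$ and any point $a\in\Rd$, and compare the densities of $\mu_x$ and $\mu_y$ at $a$. The log-ratio equals $(\eps/d)(\|a-Fy\|_1 - \|a-Fx\|_1)$, which by the triangle inequality is at most $(\eps/d)\|Fx-Fy\|_1$. Because $\|x-y\|_1\le 1$ and $F$ has sensitivity $d$, linearity gives $\|Fx-Fy\|_1 = \|F(x-y)\|_1 \le d$, so the log-ratio is bounded by $\eps$. Integrating this pointwise bound over any measurable set $S$ yields $\mu_x(S)\le e^{\eps}\mu_y(S)$, establishing $\eps$-differential privacy.

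For the error, note that $\ex{\|M(x)-Fx\|_2} = \ex{\|Z\|_2}$ does not depend on $x$. Each $Z_i$ has variance $2(d/\eps)^2$, so $\ex{\|Z\|_2^2} = 2d(d/\eps)^2 = 2d^3/\eps^2$. Applying Jensen's inequality to the concave function $\sqrt{\,\cdot\,}$ gives $\ex{\|Z\|_2}\le\sqrt{\ex{\|Z\|_2^2}} = O(d\sqrt d/\eps)$, which is the claimed error bound.

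The proof is essentially routine once one has the right noise distribution, and no single step is a real obstacle; the only substantive observation is that the product of one-dimensional Laplace densities factors through the $\ell_1$-norm, which is precisely what matches the $\ell_1$-sensitivity assumption on $F$ to give the clean $\exp(\eps)$ bound on the density ratio.
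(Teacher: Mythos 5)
Your proof is correct, and it is exactly the standard Laplace-mechanism argument from Dwork et al.\ that the paper cites for this theorem without reproducing a proof. The privacy step (factoring the joint Laplace density through the $\ell_1$-norm and invoking the triangle inequality together with the sensitivity bound $\|F(x-y)\|_1\le d$) and the error step (bounding $\E\|Z\|_2$ via Jensen and the Laplace variance $2(d/\eps)^2$ per coordinate) are precisely what the cited source does, so there is no meaningful divergence to report.
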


When it comes to approximate privacy, the so-called Gaussian mechanism
provides the following guarantee.
\begin{theorem}[\cite{DworkKMMN06}]
\label{thm:gaussian}
Let $\eps,\delta>0.$ Then, for any mapping $F\colon\Rn\to\Rd$ of sensitivity~$d$ there exists a $\delta$-approximate $\eps$-differentially private mechanism  $M$ with
$\err(M,F)=O(d\sqrt{\log(1/\delta)}/\eps).$
\end{theorem}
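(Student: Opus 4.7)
The plan is to instantiate the Gaussian mechanism, which adds spherical Gaussian noise to $Fx$. First I would translate the stated sensitivity hypothesis into an $\ell_2$-sensitivity bound. Since $F$ has coefficients in $[-1,1]$, for any $x$ with $\|x\|_1\le 1$ H\"older's inequality gives $\|Fx\|_\infty\le 1$, and hence $\|Fx\|_2\le \sqrt{d}$. So the $\ell_2$-sensitivity of $F$ is at most $\sqrt{d}$, even though its ``$\ell_1$-sensitivity'' (as defined earlier) is $d$.

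Next, define the mechanism $M(x)=Fx+\sigma g$, where $g\sim N(0,I_d)$ and $\sigma=c\sqrt{d\log(1/\delta)}/\eps$ for an absolute constant $c$ to be fixed. For the privacy analysis, fix $x,y$ with $\|x-y\|_1\le 1$ and set $u=F(x-y)$, so $\|u\|\le\sqrt{d}$. A direct expansion of the log-ratio of the two Gaussian densities at a point $a=Fx+\sigma g$ gives the privacy loss
\[
\log\frac{\mu_x(a)}{\mu_y(a)} \;=\; \frac{\iprod{g,u}}{\sigma}+\frac{\|u\|^2}{2\sigma^2}.
\]
Since $\iprod{g,u}\sim N(0,\|u\|^2)$, the privacy loss is a Gaussian random variable (in $g$) with mean $\|u\|^2/(2\sigma^2)$ and standard deviation at most $\sqrt{d}/\sigma$. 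A standard Gaussian tail estimate then shows that, for $c$ chosen large enough, this loss exceeds $\eps$ with probability at most $\delta$ over $g$. Converting this high-probability bound on the privacy loss into $\delta$-approximate $\eps$-differential privacy is routine: decompose the output measure into a ``good'' event, where the pointwise multiplicative ratio is within $e^\eps$, and a ``bad'' event of total mass at most $\delta$ which is absorbed into the additive slack.

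Finally, the error bound is immediate: $\E\|M(x)-Fx\| = \sigma\cdot\E\|g\|\le\sigma\sqrt{d} = O(d\sqrt{\log(1/\delta)}/\eps)$, using $\E\|g\|\le\sqrt{\E\|g\|^2}=\sqrt{d}$. The main obstacle is the careful privacy accounting: a pointwise bound on the density ratio does not hold because on the Gaussian tail the ratio is unbounded, and the additive $\delta$ is exactly what is needed to absorb these tail contributions. The key technical fact is that the privacy loss random variable is itself subgaussian with scale $\|u\|/\sigma$, so that controlling its tail at level $\eps$ requires $\sigma = \Omega(\|u\|\sqrt{\log(1/\delta)}/\eps)$; plugging in $\|u\|\le\sqrt{d}$ yields the claimed noise magnitude.
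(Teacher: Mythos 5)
The theorem you are asked to prove is not proved in the paper: it is stated with a citation to Dwork et al.\ (the ``Gaussian mechanism''), and no argument is given. Your sketch reproduces the standard proof of that cited result, and it is essentially correct. In particular, you correctly observe that the error bound claimed requires $\ell_2$-sensitivity $\sqrt d$ rather than $d$; the paper's definition of ``sensitivity~$d$'' is an $\ell_1$-condition ($\sup_{\|x\|_1\le 1}\|Fx\|_1\le d$), which alone only gives $\|Fx\|_2\le d$, so your appeal to the paper's standing restriction $F\in[-1,1]^{d\times n}$ (giving $\|Fx\|_\infty\le 1$ and hence $\|Fx\|_2\le\sqrt d$) is necessary and the right observation. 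The privacy-loss computation, the subgaussian tail bound, the decomposition into a ``good'' event with pointwise ratio $\le e^\eps$ plus a ``bad'' event of mass $\le\delta$, and the utility bound $\sigma\sqrt d$ are all the standard steps.

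One minor caveat worth recording: the tail-bound step, where you pick $c$ so that the mean $\|u\|^2/(2\sigma^2)$ plus the $\sqrt{2\log(1/\delta)}$ tail width stays below $\eps$, implicitly requires $\eps\lesssim 1$ (or at least $\eps$ bounded relative to $\log(1/\delta)$). That is the usual regime for the Gaussian mechanism and is the regime the paper uses this theorem in (its separation result takes $\delta$ constant), so nothing is wrong, but if you want the statement literally for all $\eps>0$ you should say a word about the large-$\eps$ case (where the result is easy anyway, e.g.\ via Laplace noise or a simple rescaling).
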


\subsection{Isotropic Position}

\begin{definition}[Isotropic Position]
We say a convex body $K \sse \Re^d$ is in {\em isotropic position} with
isotropic constant $L_K$ if for every unit vector $v \in \Re^d$,
\begin{equation}
\frac1{\vol(K)}\int_K |\langle z,v\rangle|^2 dz = L_K^2\vol(K)^{2/d}\mper
\end{equation}
\end{definition}

\begin{fact}
For every convex body $K \sse \Re^d$, there is a volume-preserving
linear transformation $T$ such that $TK$ is in isotropic position.
\end{fact}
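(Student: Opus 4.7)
The plan is to whiten the uniform distribution on $K$ using a suitable linear map derived from its second moment matrix. First I would define $M_K \in \R^{d \times d}$ by $(M_K)_{ij} = \frac{1}{\vol(K)}\int_K z_i z_j\,dz$. Because $K$ is a full-dimensional convex body, $v^T M_K v = \frac{1}{\vol(K)}\int_K \langle v,z\rangle^2\,dz > 0$ for every nonzero $v$ (otherwise $K$ would be contained in a hyperplane), so $M_K$ is symmetric positive definite and has a well-defined symmetric inverse square root $M_K^{-1/2}$.

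Next I would set $T = \det(M_K)^{1/(2d)} \cdot M_K^{-1/2}$. The scalar prefactor is chosen precisely so that $\det(T) = \det(M_K)^{1/2}\cdot\det(M_K)^{-1/2} = 1$, making $T$ volume-preserving and hence $\vol(TK)=\vol(K)$. To check that $TK$ is in isotropic position, I would apply the change of variables $z' = Tz$ (Jacobian $1$) to obtain
\begin{equation*}
\frac{1}{\vol(TK)}\int_{TK} z'(z')^T\,dz' \;=\; T M_K T^T \;=\; \det(M_K)^{1/d}\cdot I.
\end{equation*}
Contracting with any unit vector $v$ gives $\frac{1}{\vol(TK)}\int_{TK}|\langle z',v\rangle|^2\,dz' = \det(M_K)^{1/d}$, a value independent of $v$. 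Setting $L_{TK}^2 = \det(M_K)^{1/d}/\vol(K)^{2/d}$ then matches the definition of isotropic position given in the paper.

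There is no substantive obstacle here; the argument is a standard diagonalization of a positive-definite quadratic form combined with a rescaling to preserve volume. The only point deserving care is the positive-definiteness of $M_K$, which is what permits taking the inverse square root, and this follows from $K$ having nonempty interior.
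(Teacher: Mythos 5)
Your proof is correct and self-contained. The paper does not actually prove this fact but cites it to Milman--Pajor and the survey of Giannopoulos, so there is no in-paper argument to compare against; what you give is the standard whitening construction. Each step checks out: $M_K$ is symmetric positive definite because a direction $v$ with $v^{\mathsf T}M_K v = 0$ would force $K$ into the hyperplane $v^\perp$, contradicting that $K$ has positive volume; the scalar $\det(M_K)^{1/(2d)}$ is chosen so that $\det T = 1$, making $T$ volume-preserving; and the change of variables gives $M_{TK} = T M_K T^{\mathsf T} = \det(M_K)^{1/d}\,I$, so the quadratic form $\frac{1}{\vol(TK)}\int_{TK}\langle z',v\rangle^2\,dz'$ is constant over unit vectors $v$, which is exactly the paper's definition of isotropic position. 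Your derived value $L_{TK}^2 = \det(M_K)^{1/d}/\vol(K)^{2/d}$ is also consistent with the formula $L_K^2\vol(K)^{2/d} = \det(M_K)^{1/d}$ that the paper later records for unit-volume bodies (note $\det(M_{TK}) = \det(M_K)$ since $\det T = 1$).

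One point worth flagging, though it does not affect this paper: the usual definition of isotropic position in the literature also requires the centroid of $K$ to be at the origin, and no \emph{linear} map can move the centroid to the origin, so for a general (non-centered) convex body one would first translate to the barycenter. The paper's Definition~2.6 omits the centering requirement, and every $K=FB_1^n$ that the paper considers is origin-symmetric anyway, so your argument goes through as stated in context.
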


For an arbitrary convex body $K$, its isotropic constant $L_K$ can then be
defined to be $L_{TK}$ where $T$ brings $L$ to isotropic position. It is known
(e.g.~\cite{MilmanPa89}) that $T$ is unique up to an orthogonal transformation
and thus this is well-defined.

We refer the reader to the paper of Milman and Pajor~\cite{MilmanPa89}, as
well as the extensive survey of Giannopoulos~\cite{Giannopoulos03}
for a proof of this fact and other facts regarding the isotropic constant.

\subsection{Gamma Distribution}

The \emph{Gamma distribution} with shape parameter $k>0$ and scale $\theta>0$,
denoted $\GammaDist(k,\theta)$, is given by the probability density function
\[
f(r;k,\theta)=r^{k-1}\frac{e^{-r/\theta}}{\Gamma(k)\theta^k}\mper
\]
Here, $\Gamma(k)=\int e^{-r}r^{k-1}\rd r$ denotes the Gamma function.
We will need an expression for the moments of the Gamma distribution.
\begin{fact}\label{fact:gamma}
Let $r\sim\GammaDist(k,\theta).$ Then,
\item
\begin{equation}
\Ex{r^m}
=\frac{\theta^m\Gamma(k+m)}{\Gamma(k)}\mper
\end{equation}
\end{fact}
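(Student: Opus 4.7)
The plan is to verify the moment formula by direct integration against the Gamma density, reducing everything to the defining integral $\Gamma(z) = \int_0^\infty u^{z-1} e^{-u}\,du$ of the Gamma function. No probabilistic machinery beyond the definition of expectation is needed.

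First I would write
\begin{equation*}
\E[r^m] = \int_0^\infty r^m \cdot r^{k-1} \frac{e^{-r/\theta}}{\Gamma(k)\theta^k}\,dr
       = \frac{1}{\Gamma(k)\theta^k} \int_0^\infty r^{k+m-1} e^{-r/\theta}\,dr,
\end{equation*}
pulling the $x$-independent normalizing constant out of the integral and combining the two powers of $r$. The remaining integral is of Gamma type but scaled by $\theta$, so the second step is to substitute $u = r/\theta$ (hence $r = \theta u$ and $dr = \theta\,du$), which transforms the integral into
\begin{equation*}
\int_0^\infty (\theta u)^{k+m-1} e^{-u} \theta\,du
   = \theta^{k+m} \int_0^\infty u^{k+m-1} e^{-u}\,du
   = \theta^{k+m}\, \Gamma(k+m),
\end{equation*}
where the last equality is just the definition of $\Gamma(k+m)$ (valid because $k,m>0$ makes the integral convergent at both endpoints). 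Plugging back yields $\E[r^m] = \theta^m \Gamma(k+m)/\Gamma(k)$, as claimed.

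There is essentially no obstacle: the only care needed is the change of variables and checking convergence of the integral, both of which are routine for $k,m>0$. The formula specializes correctly to familiar cases (e.g.\ $m=1$ gives mean $k\theta$ via $\Gamma(k+1)=k\Gamma(k)$), which is a useful sanity check before moving on.
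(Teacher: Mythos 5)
Your proof is correct and follows essentially the same route as the paper's: integrate $r^m$ against the Gamma density, substitute $u=r/\theta$, and recognize the resulting integral as $\Gamma(k+m)$. Your version is a bit more explicit about the change of variables than the paper's terse notation, but the argument is identical.
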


\begin{proof}
\begin{align*}
\Ex{r^m}
=\int_\R
r^{k+m-1}\frac{e^{-r/\theta}}{\Gamma(k)\theta^k}\rd r
&=\frac1{\Gamma(k)\theta^k}\int_\R
(\theta r)^{k+m-1}e^{-r}\rd \theta r\\
&=\frac{\Gamma(k+m)\theta^{k+m}}{\Gamma(k)\theta^k}
=\frac{\Gamma(k+m)\theta^{m}}{\Gamma(k)}
\end{align*}
\end{proof}

\subsection{Linear Programming Characterization}

Suppose that the set of databases is given by some set $\dee$, and let $\dist
: \dee \times \dee\rightarrow \Re_0$ be a distance function on $\dee$. A query
$q$ is specified by an error function $\err : \dee \times \Range \rightarrow
\Re$. For example $\dee$ could be the Hamming cube $\{0,1\}^N$ with $\dist$ being the Hamming distance. Given a query $F \colon \{0,1\}^N
\rightarrow \Re^d$, the error function could be $\err(x,a) =
\|a-F(x)\|_2$ if we wish to compute $F(x)$ up to a small $\ell_2$ error.

A mechanism is specified by a distribution $\mu_x$ on $\Range$ for every
$x\in\dee$. Assume for simplicity that $\dee$ and $\Range$ are both finite.
Thus a mechanism is fully defined by real numbers $\mu(x,a)$, where
$\mu(x,a)$ is the probability that the mechanism outputs answer
$a\in \Range$ on databases $x\in\dee$. The constraints on $\mu$ for an
$\epsilon$-differentially private mechanism are given by
\begin{align*}
\sum_{a\in \Range} \mu(x,a) &= 1 & \forall x \in \dee\\
\mu(x,a) & \geq 0 & \forall x \in \dee, a \in \Range\\
\mu(x,a) & \leq \exp(\epsilon \dist(x,x')) \mu(x',a) & \forall x,x' \in \dee, a \in \Range
\end{align*}

The expected error (under any given prior over databases) is then a linear
function of the variables $\mu(x,a)$ and can be optimized. Similarly, the
worse case (over databases) expected error can be minimized, and we will concentrate on this measure
for the rest of the paper. However these linear programs can be
prohibitive in size. Moreover, it is not a priori clear how one can use this
formulation to understand the asymptotic behavior of the error of the optimum
mechanism.

Our work leads to a constant approximation to the optimum of this linear
program when $F$ is a random in $\{-1,+1\}^{d\times n}$ and an
$O(\log^{3/2} d)$-approximation otherwise.

\section{Lower bounds via volume estimates}
\label{sec:lower}

In this section we show that lower bounds on the volume of the convex body
$FB_1^n\subseteq\Rd$ give rise to lower bounds on the error that any
private mechanism must have with respect to $F$. 

\begin{definition}
A set of points $Y \subseteq \Re^d$ is called a {\em $r$-packing} if $\|y-y'\|_2 \geq r$ for any $y,y' \in Y, y \neq y'$.
\end{definition}

\begin{fact}\label{fact:balls}
Let $K\subseteq\Rd$ such that $R=\vol(K)^{1/d}$.
Then, $K$ contains an $\Omega(R\sqrt{d})$-packing of size  at least $\exp(d)$.
\end{fact}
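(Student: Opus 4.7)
The plan is to use a standard volume comparison between $K$ and a Euclidean ball of the same volume. First I would recall the estimate $\vol(B_2^d)^{1/d} = \Theta(1/\sqrt d)$, which follows from Stirling's formula applied to $\vol(B_2^d) = \pi^{d/2}/\Gamma(d/2+1)$. Consequently, if I define $r$ so that $\vol(rB_2^d) = \vol(K)$, I get $r = c R \sqrt d$ for some absolute constant $c > 0$, so in particular $r = \Omega(R\sqrt d)$.

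Next I would take $Y \subseteq K$ to be a \emph{maximal} $(r/4)$-packing (which exists, e.g.\ by Zorn's lemma or a greedy construction on a dense subset since $K$ is bounded). Maximality means that for every $x \in K$ there exists $y \in Y$ with $\|x-y\|_2 < r/4$, for otherwise $Y \cup \{x\}$ would still be an $(r/4)$-packing in $K$. Hence the open balls of radius $r/4$ about the points of $Y$ cover $K$:
\begin{equation*}
K \;\subseteq\; \bigcup_{y \in Y} B(y, r/4).
\end{equation*}

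Taking volumes and using $\vol((r/4)B_2^d) = 4^{-d}\vol(rB_2^d) = 4^{-d}\vol(K)$, I obtain
\begin{equation*}
\vol(K) \;\leq\; |Y| \cdot \vol((r/4) B_2^d) \;=\; |Y|\cdot 4^{-d} \vol(K),
\end{equation*}
so $|Y| \geq 4^d \geq \exp(d)$. Since $Y$ is by construction an $(r/4)$-packing and $r/4 = \Omega(R\sqrt d)$, this $Y$ has the desired size and separation.

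There is essentially no obstacle here: the only ingredient beyond linear algebra is the asymptotic $\vol(B_2^d)^{1/d} = \Theta(1/\sqrt d)$, and the rest is the elementary maximal-packing/covering duality. The one mild subtlety is existence of a maximal packing, but this is immediate since $K$ is bounded (any greedy procedure on a countable dense subset of $K$ terminates or produces an infinite $(r/4)$-packing, which is impossible in a bounded set).
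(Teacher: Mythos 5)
Your proof is correct and is exactly the argument the paper has in mind: compare $\vol(K)$ to the Euclidean ball of the same volume to get $r=\Omega(R\sqrt d)$, then take a maximal $(r/4)$-packing and use the covering-by-$(r/4)$-balls volume bound to conclude $|Y|\ge 4^d\ge\exp(d)$. The paper's one-line proof simply elides the covering/volume-comparison details that you spell out.
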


\begin{proof}
Since $\vol(B_2^d)^{1/d}\sim\frac1{\sqrt{d}}$, the body $K$ has the volume of
a ball of radius $r \in \Omega(R\sqrt{d})$. Any maximal $\frac{r}{4}$-packing then has the desired property.
\end{proof}

\begin{theorem}\label{thm:volume}
Let $\eps>0$ and suppose $F\colon\Rn\to\Rd$ is a linear map and let $K=FB_1^n$.
Then, every
$\eps$-differentially private mechanism $M$
must have
%\begin{equation}
$\err(M,F) \ge \Omega(\eps^{-1}d\sqrt{d}\cdot\vol(K)^{1/d}).$
%\end{equation}
\end{theorem}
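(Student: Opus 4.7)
The plan is to prove a lower bound by a packing-plus-privacy-transitivity argument. The intuition is that if $K=FB_1^n$ has large $d$-dimensional volume, then the image of the $\ell_1$-ball contains many well-separated points, and a differentially private mechanism cannot simultaneously concentrate its output distribution near all of these images without spreading total probability mass above one.

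First, I would invoke Fact~\ref{fact:balls} with $R=\vol(K)^{1/d}$ to produce a packing $Y\subseteq K$ of cardinality at least $\exp(d)$ whose pairwise $\ell_2$-distances are at least some $cR\sqrt{d}$. Next, I would rescale: for a parameter $\lambda\geq 1$ to be chosen later, the set $\lambda Y$ lies inside $\lambda K=F(\lambda B_1^n)$, so for each $y\in Y$ we can pick an arbitrary preimage $x\in\Rn$ with $Fx=\lambda y$ and $\|x\|_1\le\lambda$, producing a collection $X\subseteq\lambda B_1^n$ with $|X|=|Y|\ge\exp(d)$ and with $\{Fx : x\in X\}$ an $\Omega(\lambda R\sqrt{d})$-packing.

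Now I would proceed by contradiction: assume there is an $\eps$-differentially private mechanism $M=\{\mu_x\}$ whose error is $c'\eps^{-1}d\sqrt{d}R$ for a sufficiently small constant $c'>0$. Markov's inequality gives $\mu_x(B_x)\ge \tfrac12$, where $B_x$ is the Euclidean ball of radius $2c'\eps^{-1}d\sqrt{d}R$ centered at $Fx$. The key parameter choice is $\lambda=d/(2\eps)$, which makes this radius equal to $4c'\lambda R\sqrt{d}$, hence strictly smaller than half the packing distance for $c'$ small enough. Therefore the balls $\{B_x : x\in X\}$ are pairwise disjoint.

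The final step uses transitivity of $\eps$-differential privacy (Fact~\ref{fact:trans}): since $\|x\|_1\le\lambda$, we have $\mu_0(B_x)\ge \exp(-\eps\lambda)\mu_x(B_x)\ge \tfrac12 e^{-d/2}$. Summing the disjoint contributions,
\[
1\ge \mu_0\Bigl(\bigcup_{x\in X} B_x\Bigr)=\sum_{x\in X}\mu_0(B_x)\ge \exp(d)\cdot \tfrac12 e^{-d/2}=\tfrac12 e^{d/2},
\]
which is a contradiction for all sufficiently large $d$ (and small $d$ can be handled by absorbing into the implicit constant). The main potential obstacle is simply calibrating $\lambda$: it must be large enough that the privacy-loss term $\exp(-\eps\lambda)$ is offset by the packing size $\exp(d)$, yet the hypothesized error radius must remain below half the packing distance at that scale. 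Setting $\lambda=\Theta(d/\eps)$ achieves both simultaneously, yielding the claimed $\Omega(\eps^{-1}d\sqrt{d}\cdot\vol(K)^{1/d})$ bound.
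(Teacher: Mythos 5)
Your proof is correct and follows essentially the same packing-plus-transitivity argument as the paper, with the same scaling parameter $\lambda = d/(2\eps)$ and the same final counting contradiction. The only cosmetic difference is that you apply Fact~\ref{fact:balls} to $K$ and then dilate the packing by $\lambda$, while the paper applies the fact directly to $\lambda K$; these are equivalent.
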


\begin{proof}
Let $\lambda\ge1$ be some scalar and put $R=\vol(K)^{1/d}$.
By Fact~\ref{fact:balls} and our assumption, $\lambda K=\lambda FB_1^n$
contains an $\Omega(\lambda R\sqrt{d})$-packing $Y$ of size at least $\exp(d)$. Let $X\subseteq \Rn$ be a set of arbitrarily chosen preimages of $y \in Y$ so that $|X|=|Y|$ and $FX = Y$.
By linearity, $\lambda FB_1^n=F(\lambda B_1^n)$ and hence we may assume that
every $x\in X$ satisfies $\|x\|_1\le \lambda$.

We will now assume that $M=\{\mu_x\colon x\in\Rn\}$ is an $\eps$-differentially private
mechanism with error $cd\sqrt{d}R/\eps$ and lead this to a
contradiction for small enough $c>0$. For this we set $\lambda=d/2\eps$.
By the assumption on the error, Markov's inequality implies that for all $x\in X$,
we have $\mu_x(B_x)\ge\tfrac12,$
where $B_x$ is a ball of radius $2cd\sqrt{d}R/\eps = 4c\lambda R\sqrt{d}$ centered at $Fx$. Since $Y=FX$ is an $\Omega(\lambda R\sqrt{d})$-packing, the balls $\{B_x: x\in X\}$ are disjoint for small enough constant $c>0$.

Since $\|x\|_1\le \lambda$, it follows from $\e$-differential privacy with
Fact~\ref{fact:trans} that
\[
\mu_0(B_x)
\ge\exp(-\eps \lambda)\mu_{x}(B_x)\ge\tfrac12\exp(-d/2).
\]
Since the balls $B_x$ are pairwise disjoint,
\begin{equation}
1\ge \mu_0(\cup_{x\in X}B_x)
=\sum_{x\in X}\mu_0(B_x)
\ge \exp(d)\tfrac12\exp(-d/2)
> 1
\end{equation}
for $d\geq 2$. We have thus obtained a contradiction.
\end{proof}

We denote by $\vollb(F,\eps))$ the lower bound resulting from the above theorem. In other words
$$ \vollb(F,\eps) = \eps^{-1} d\sqrt{d}\cdot\vol(FB_1^n)^{1/d}.$$
Thus any $\eps$-differentially private mechanism must add noise $\Omega(\vollb(K,\eps))$.
We will later need the following modification of the previous argument which
gives a lower bound in the case where $K$ is close to a lower dimensional
subspace and hence the volume inside this subspace may give a stronger
lower bound.

\begin{corollary}\label{cor:volumesubspace}
Let $\eps>0$ and suppose $F\colon\Rn\to\Rd$ is a linear map and let
$K=FB_1^n$. Furthermore, let $P$ denote the orthogonal projection operator of
a $k$-dimensional subspace of $\Rd$ for some $1\le k\le d.$
Then, every
$\eps$-differentially private mechanism $M$
must have
\begin{equation}
\err(M,F) \ge \Omega(\eps^{-1}k\sqrt{k}\cdot\vol_k(PK)^{1/k}).
\end{equation}
\end{corollary}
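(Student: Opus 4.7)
The plan is to reduce to Theorem~\ref{thm:volume} applied in dimension $k$ by a post-processing argument. Given any $\eps$-differentially private mechanism $M=\{\mu_x\}$ outputting in $\Rd$, I would consider the mechanism $M'=\{\mu'_x\}$ obtained by applying $P$ to the output of $M$, where $P$ is viewed as a map into the $k$-dimensional subspace (which we identify with $\Rk$ via an isometry). The first step is to observe that post-processing preserves differential privacy: for any measurable $S\sse\Rk$, we have $\mu'_x(S)=\mu_x(P^{-1}S)$, so the $\exp(\eps)$ pointwise density bound transfers directly from $M$ to $M'$. Hence $M'$ is $\eps$-differentially private as a mechanism on $\Rn$ with outputs in $\Rk$.

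Second, I would bound the error of $M'$ with respect to the query $PF\colon\Rn\to\Rk$. Since $P$ is an orthogonal projection, it is $1$-Lipschitz in the Euclidean norm, so
\[
\E_{a\sim\mu_x}\|Pa - PFx\|_2 \le \E_{a\sim\mu_x}\|a - Fx\|_2,
\]
which gives $\err(M',PF)\le\err(M,F)$. Third, I would observe that the image polytope associated with the query $PF$ is exactly $(PF)B_1^n = P(FB_1^n) = PK$, viewed as a convex body in the $k$-dimensional ambient subspace, with $k$-dimensional volume $\vol_k(PK)$.

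Now the final step is to invoke Theorem~\ref{thm:volume} on the $\eps$-differentially private mechanism $M'$ and the linear map $PF\colon\Rn\to\Rk$, obtaining
\[
\err(M',PF)\ge \Omega\bigl(\eps^{-1}k\sqrt{k}\cdot\vol_k(PK)^{1/k}\bigr),
\]
and combining with $\err(M,F)\ge\err(M',PF)$ yields the claim. The proof is essentially a clean reduction with no substantial obstacle; the only subtlety worth double-checking is the identification of $P$'s range with $\Rk$ so that Theorem~\ref{thm:volume} applies verbatim in dimension~$k$ (including the convention that $\vol_k$ is the intrinsic Lebesgue measure on the subspace), and confirming that post-processing by a deterministic measurable map preserves $\eps$-differential privacy exactly, not merely approximately.
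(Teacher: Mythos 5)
Your argument matches the paper's own (one-line) proof exactly: the paper observes that projecting the output of $M$ by $P$ yields an $\eps$-differentially private mechanism for $PF$, that $P$ has operator norm $1$ so the error can only decrease, and then applies Theorem~\ref{thm:volume} in dimension $k$ to $PK = (PF)B_1^n$. You have simply spelled out the same reduction in more detail, and it is correct.
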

\begin{proof}
Note that a differentially private answer $a$ to $F$ can be projected down to a (differentially private) answer $Pa$ to $PF$ and $P$ is norm $1$ operator.
\end{proof}

We will denote by $\Gvollb(F,\eps)$ the best lower bound
obtainable in this manner, i.e.,
\[
\Gvollb(F,\eps) = \sup_{k, P} \eps^{-1}k\sqrt{k}\cdot\vol_k(PFB_1^n)^{1/k}
\]
where the supremum is taken over all $k$ and all $k$-dimensional orthogonal projections $P$.

\paragraph{Lower bounds in the Hamming metric.}
Our lower bound used the fact that the mechanism is defined on all vectors
$x\in\Rd$. In Appendix~\ref{sec:lbhamming}, we show how the lower bound can
be extended when restricting the domain of the mechanism to integer vectors
$x\in[N]^n,$ where distance is measured in the Hamming metric.

\subsection{Lower bounds for small number of queries}
\label{sec:lower-small}
As shown previously, the task of proving lower bounds on the
error of private mechanisms reduces to analyzing the volume of $FB_1^n.$
When $d\le\log n$ this is a straightforward task.

\begin{fact}\label{lem:smalld}
Let $d\le\log n$. Then, for all matrices
$F\in[-1,1]^{d\times n}$,
$\vol(FB_1^n)^{1/d}\le O(1).$
Furthermore, there is an explicit matrix $F$ such that $FB_1^n$ has
maximum volume.
\end{fact}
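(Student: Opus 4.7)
The plan is to handle the upper bound and the explicit construction separately, both via the basic observation that $FB_1^n$ is the symmetric convex hull of the columns of $F$.

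First, for the upper bound, note that $B_1^n = \conv\{\pm e_1,\ldots,\pm e_n\}$, so if $v_1,\ldots,v_n \in \Rd$ denote the columns of $F$, then $FB_1^n = \conv\{\pm v_1,\ldots,\pm v_n\}$. Since $F \in [-1,1]^{d\times n}$, each $v_i$ lies in the cube $[-1,1]^d = B_\infty^d$, and hence $FB_1^n \sse B_\infty^d$. This gives $\vol(FB_1^n) \le \vol(B_\infty^d) = 2^d$, so $\vol(FB_1^n)^{1/d} \le 2$, which is $O(1)$ as required. Observe that this bound is actually valid for every $d$, not just $d\le\log n$; the hypothesis $d\le\log n$ is only needed for the matching construction.

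Second, for the explicit matrix achieving the upper bound up to constants, use the assumption $d\le\log n$ to deduce $n\ge 2^d$. Let $F$ be a $d\times n$ matrix whose columns include every vertex of $\{-1,1\}^d$ (repeating columns or padding with zeros if $n>2^d$). Then the columns $\{v_i\}$ contain all of $\{-1,1\}^d$, so
\[
FB_1^n = \conv\{\pm v_1,\ldots,\pm v_n\} \supseteq \conv\{-1,1\}^d = [-1,1]^d = B_\infty^d,
\]
where the last equality is standard since the cube equals the convex hull of its vertices. Combined with the containment $FB_1^n\sse B_\infty^d$ from the first step, we conclude $FB_1^n = B_\infty^d$, and hence $\vol(FB_1^n)^{1/d} = 2 = \Theta(1)$.

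There is no real obstacle here: the statement is essentially a sanity check that in the low-dimensional regime $d\le\log n$, the polytope $K = FB_1^n$ can fill up (a constant factor of) its bounding cube, so that the volume-based lower bound of Theorem~\ref{thm:volume} degenerates to the trivial $\Omega(d\sqrt d/\eps)$. The only mild care needed is in the phrase ``maximum volume,'' which should be read as ``maximum up to constants''; the construction above in fact attains the upper bound exactly.
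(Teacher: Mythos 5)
Your proposal is correct and matches the paper's own argument: both establish $FB_1^n\sse B_\infty^d$ from the entry bound, and both take $F$ to contain all vertices of $\{\pm1\}^d$ (possible since $n\ge 2^d$) so that $FB_1^n\supseteq B_\infty^d$. You simply spell out the "symmetric convex hull of columns" observation more explicitly and note the resulting equality $FB_1^n=B_\infty^d$, but the substance is identical.
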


\begin{proof}
Clearly, $FB_1^n$ is always contained in $B_\infty^d$ and
$\vol(B_\infty^d)^{1/d}=2$.
On the other hand, since $n\ge 2^d$,
we may take $F$ to contain all points of the hypercube
$H=\{\pm 1\}^d$ as its columns.
In this case, $FB_1^n\supseteq B_\infty^d$.
\end{proof}

This lower bound shows that the standard upper bound from
Theorem~\ref{thm:laplace} is, in fact, optimal when $d\le\log(n).$

\section{The $K$-norm mechanism}
\label{sec:knorm}

In this section we describe a new differentially private mechanism,
which we call the $K$-norm mechanism.

\begin{definition}[$K$-norm mechanism]
Given a linear map $F\colon\Rn\to\Rd$ and $\eps>0$, we let $K=FB_1^n$ and define
the mechanism ${\bf KM}(F,d,\eps)=\{\mu_x\colon x\in\Rn\}$ so that each measure
$\mu_x$ is given by the probability density function
\begin{equation}\label{eq:density}
f(a)=Z^{-1}\exp(-\eps\|Fx-a\|_K)
\end{equation}
defined over $\Rd.$ Here $Z$ denotes the normalization constant
\[
Z=\int_{\Rd}\exp(-\eps\|Fx-a\|_K)\rd a= \Gamma(d+1)\vol(\eps^{-1} K).
\]
\end{definition}
A more concrete view of the mechanism is provided by Figure~\ref{fig:km} and
justified in the next remark.
\begin{remark}\label{rem:gamma}
We can sample from the distribution $\mu_x$ as follows:
\begin{enumerate}
\item Sample $r$ from the Gamma distribution with parameter $d+1$ and scale
$\eps^{-1}$, denoted $\GammaDist(d+1,\eps^{-1})$. That is, $r$ is
distributed as
\[
\Pr(r>R)=\frac{1}{\eps^{-d}\Gamma(d+1)}\int_R^\infty e^{-\eps t}t^d\rd t.
\]
\item Sample $a$ uniformly from $Fx+rK$.
\end{enumerate}
Indeed, if $\|a-Fx\|_K=R$, then the distribution of $a$ as above follows the
probability density function
\begin{equation}
\label{eqn:densitycal}
g(a)= \frac{1}{\eps^{-d}\Gamma(d+1)}
\int_R^\infty \frac{e^{-\eps t}t^d}{\vol(tK)}\rd t
=\frac{\int_R^\infty e^{-\eps t}\rd t}{\Gamma(d+1)\vol(\eps^{-1}K)}
=\frac{e^{-\eps R}}{\Gamma(d+1)\vol(\eps^{-1}K)}\mcom
\end{equation}
which is in agreement with~(\ref{eq:density}). That is, $g(a)=f(a).$
\end{remark}

\begin{figure}
\begin{center}
\fbox{
\begin{minipage}{.9\textwidth}
\vspace{1mm}
${\bf KM}(F,d,\eps)\colon$
\begin{enumerate}
\item Sample $z$ uniformly at random from $K=FB_1^n$ and sample
$r\sim\GammaDist(d+1,\eps^{-1}).$
\item Output $Fx+rz.$
\end{enumerate}
\vspace{1mm}
\end{minipage}
}
\end{center}
\label{fig:km}
\caption{Description of the $d$-dimensional $K$-norm mechanism.}
\end{figure}

The next theorem shows that the $K$-norm mechanism is indeed differentially
private. Moreover, we can express its error in terms of the \emph{expected
distance from the origin} of a random point in $K.$

\begin{theorem}\label{thm:knorm}
Let $\eps>0$. Suppose $F\colon\Rn\to\Rd$ is a linear map and put $K=FB_1^n.$
Then, the mechanism ${\rm KM}(F,d,\eps)$ is
$\eps$-differentially private, and for every
$p>0$ achieves the error bound $\E_{a\sim \mu_x} \|Fx-a\|^p \leq \frac{\Gamma(d+1+p)}{\eps^p\Gamma(d)}\E_{z\in
K}\|z\|_2^p.$ In particular, the $\ell_2$-error is at most
%\begin{equation}
%$\err(M_F^\eps,F)=
$\frac{d+1}{\eps}\E_{z\in K}\|z\|_2.$
%\end{equation}
\end{theorem}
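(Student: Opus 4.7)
The plan is to split the argument into a privacy claim and a moment bound, both anchored on the two-step sampling description of $\mu_x$ given in Figure~\ref{fig:km}. The equivalence of that procedure with the density~(\ref{eq:density}) is already handled by Remark~\ref{rem:gamma}, so I would freely switch between the two views: the density form for privacy, and the decomposition $a=Fx+rz$ with $r\sim\GammaDist(d+1,\eps^{-1})$ independent of $z$ uniform on $K$ for the error analysis.

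For privacy, I would fix $x,x'\in\Rn$ with $\|x-x'\|_1\le 1$ and bound the pointwise density ratio. Since the normalization constant $Z=\Gamma(d+1)\vol(\eps^{-1}K)$ does not depend on the database, the ratio reduces to
\[
\frac{f_x(a)}{f_{x'}(a)}=\exp\bigl(\eps(\|Fx'-a\|_K-\|Fx-a\|_K)\bigr).
\]
The reverse triangle inequality for the Minkowski norm $\|\cdot\|_K$ bounds the exponent by $\eps\|F(x-x')\|_K$, and since $x-x'\in B_1^n$ we have $F(x-x')\in FB_1^n=K$, which by definition of $\|\cdot\|_K$ means $\|F(x-x')\|_K\le 1$. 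Integrating the pointwise bound $e^\eps$ over any measurable $S\sse\Rd$ yields $\mu_x(S)\le e^\eps\mu_{x'}(S)$, establishing $\eps$-differential privacy.

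For the error, I would exploit the structural fact that the noise $a-Fx=rz$ is independent of $x$, with $r$ and $z$ themselves independent. The expectation therefore factors as
\[
\E_{a\sim\mu_x}\|Fx-a\|_2^p=\E[r^p]\cdot\E_{z\in K}\|z\|_2^p,
\]
and Fact~\ref{fact:gamma} applied with $k=d+1$ and $\theta=\eps^{-1}$ gives $\E[r^p]=\eps^{-p}\Gamma(d+1+p)/\Gamma(d+1)$, which yields the stated moment bound. Specializing to $p=1$ collapses the prefactor to $(d+1)/\eps$, delivering the $\ell_2$-error estimate.

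There is no serious obstacle: the argument is essentially two one-line calculations glued by the observation that the noise distribution is a translation-invariant, database-independent scaled copy of the uniform measure on $K$. The only mildly non-routine ingredient is the equivalence between the density definition and the Gamma-based sampling procedure, but that is exactly what the change-of-variables computation in~(\ref{eqn:densitycal}) of Remark~\ref{rem:gamma} provides, and I would simply invoke it.
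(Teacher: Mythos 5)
Your proposal is correct and follows essentially the same route as the paper: a pointwise density-ratio bound via the triangle inequality for $\|\cdot\|_K$ for privacy, and the factorization $\E\|Fx-a\|^p=\E[r^p]\cdot\E_{z\in K}\|z\|^p$ with the Gamma moment formula for the error. Note that you correctly obtain $\Gamma(d+1)$ (not $\Gamma(d)$) in the denominator, which is what the paper's own proof derives and what the $p=1$ specialization $(d+1)/\eps$ requires; the $\Gamma(d)$ in the theorem statement is evidently a typo.
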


\begin{proof}
To argue the error bound,
we will follow Remark~\ref{rem:gamma}. Let $D=\GammaDist(d+1,1/\eps).$
For all~$x\in\Rn$,
\begin{align*}
\E_{a\sim \mu_x}\|Fx-a\|^p
= \E_{a\sim \mu_0} \|a\|^p
= \E_{r\sim D} \E_{a\in rK} \|a\|^p
&= \left[\E_{r\sim D} r^p\right] \E_{z\in K} \|z\|^p\\
&= \frac{\Gamma(d+1+p)}{\eps^p\Gamma(d+1)}\E_{z\in K}\|z\|^p \tag{by
Fact~(\ref{fact:gamma})}.
\end{align*}
When $p=1$,
$\frac{\Gamma(d+1+p)}{\Gamma(d+1)}=d+1.$

Privacy follows from the fact that the mechanism is a special case of the
exponential mechanism~\cite{McSherryTa07}. For completeness, we repeat
the argument.

Suppose that $\|x\|_1\le1$. It suffices to show that for all $a\in\Rd$, the
densities of $\mu_0$ and $\mu_x$ are within multiplicative $\exp(\eps)$, i.e.,
\begin{align*}
\frac{Z^{-1}e^{-\eps\|a\|_K}}
{Z^{-1}e^{-\eps\|Fx-a\|_K}}
 = e^{\eps(\|Fx-a\|_K - \|a\|_K)}
 \le e^{\eps\|Fx\|_K}
 \le e^\eps.
\end{align*}
where in the first inequality we used the triangle inequality for
$\|\cdot\|_K$. In the second step we used
that $x\in B_1^n$ and hence $Fx\in FB_1^n=K$ which means $\|Fx\|_K\le1.$

Hence, the mechanism satisfies $\eps$-differential privacy.
\end{proof}

\section{Matching bounds for random queries}
\label{sec:random}

In this section, we will show that our upper bound matches our lower bound
when~$F$ is a random query.
A key observation is that $FB_1^n$ is the \emph{symmetric} convex hull
of $n$ (random) points $\{v_1,\dots,v_n\}\sse\Rd$, i.e., the convex hull
of $\{\pm v_1,\dots,\pm v_n\}$, where $v_i \in \Rd$ is the $i$th column of $F$.
The symmetric convex hull of random points has been
studied extensively in the theory of random polytopes.
A recent result of Litvak, Pajor, Rudelson and
Tomczak-Jaegermann~\cite{LitvakPaRuJa05} gives
the following lower bound on the volume of the convex hull.

\begin{theorem}[\cite{LitvakPaRuJa05}]
Let $2d\le n\le 2^d$ and let $F$ denote a random $d\times n$ Bernoulli matrix.
Then,
\begin{equation}\label{eq:vol}
\textstyle
\vol(FB_1^n)^{1/d}\ge \Omega(1)\sqrt{\log(n/d)/d}\mcom
\end{equation}
with probability
$1-\exp(-\Omega(d^\beta n^{1-\beta}))$ for any $\beta\in(0,\frac12).$
Furthermore, there is an explicit construction of $n$ points in $\{-1,1\}^d$
whose convex hull achieves the same volume.
\end{theorem}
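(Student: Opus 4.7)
The plan is to derive the volume lower bound by showing, via polar duality, that $K = FB_1^n$ contains a Euclidean ball of radius $\Omega(\sqrt{\log(n/d)})$. Writing $v_1, \ldots, v_n \in \{\pm 1\}^d$ for the columns of $F$, we have $K = \conv\{\pm v_1,\ldots,\pm v_n\}$ and its polar is $K^\circ = \{\theta \in \Rd : \max_i |\langle v_i,\theta\rangle| \le 1\}$. Since $B_2^d$ is self-polar, the inclusion $K^\circ \subseteq (C/\sqrt{\log(n/d)})\,B_2^d$ is equivalent (by the bipolar theorem) to $K \supseteq (\sqrt{\log(n/d)}/C)\,B_2^d$, which would yield
\[
\vol(FB_1^n)^{1/d} \ge (\sqrt{\log(n/d)}/C) \cdot \vol(B_2^d)^{1/d} = \Omega\bigl(\sqrt{\log(n/d)/d}\bigr),
\]
using $\vol(B_2^d)^{1/d} = \Theta(1/\sqrt d)$. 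So everything reduces to a uniform lower bound on the width function $\theta \mapsto \max_i |\langle v_i,\theta\rangle|$ over $\theta \in S^{d-1}$.

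For a fixed direction $\theta \in S^{d-1}$, the variables $X_i := \langle v_i,\theta\rangle$ are i.i.d.\ Rademacher sums of variance one. Anti-concentration (by Paley--Zygmund, combined with a tight comparison to the Gaussian tail) yields $\Pr(|X_i| \ge c_\beta\sqrt{\log(n/d)}) \ge (d/n)^\beta$ for a suitable constant $c_\beta$ depending on the desired exponent $\beta \in (0,\tfrac12)$. Independence across $i$ then gives
\[
\Pr\bigl(\max_i |X_i| \le c_\beta \sqrt{\log(n/d)}\bigr) \le \bigl(1 - (d/n)^\beta\bigr)^n \le \exp\bigl(-\Omega(d^\beta n^{1-\beta})\bigr),
\]
matching the claimed probability bound pointwise.

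The main obstacle is upgrading this pointwise estimate to a uniform one over $S^{d-1}$. The natural approach is to discretize $S^{d-1}$ by a $\rho$-net $\mathcal N$ and take a union bound. Since $\|v_i\|_2 = \sqrt d$ trivially, the net-to-sphere extension loses an additive $\rho\sqrt d$ in each inner product, forcing $\rho \sim \sqrt{\log(n/d)/d}$ and hence $|\mathcal N| \lesssim (cd/\log(n/d))^{d/2}$. The product of this cardinality with the pointwise failure probability is manageable whenever the net entropy $\tfrac{d}{2}\log(d/\log(n/d))$ is dominated by $d^\beta n^{1-\beta}$, which covers most of the range $2d \le n \le 2^d$; the delicate borderline case $n = \Theta(d)$ requires a chaining-type argument on the supremum of the Bernoulli process $\{\max_i |\langle v_i, \theta\rangle|\}_{\theta \in S^{d-1}}$, and this is the technical heart of the proof of LPRT.

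For the explicit construction, I would derandomize the above argument by replacing the random $F$ with a deterministic structured $\pm 1$ matrix (e.g.\ one built from an asymptotically good binary linear code, or from a Kashin-type decomposition of $B_1^n$). For such matrices the uniform width lower bound can be verified directly from the combinatorial distance properties, and the same volume conclusion follows by the polar duality step above.
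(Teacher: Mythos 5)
The paper cites this theorem from Litvak, Pajor, Rudelson and Tomczak-Jaegermann as a black box and gives no proof, so your sketch has to stand on its own; and there is a genuine obstruction to it as written. The ball-containment claim you reduce to---that $K \supseteq (\sqrt{\log(n/d)}/C)\,B_2^d$ with $C$ an absolute constant---is false on essentially the entire stated range of $n$. Since the columns $v_i$ lie in $\{\pm 1\}^d$, every convex combination of $\pm v_i$ has coordinates in $[-1,1]$, so $K = \conv\{\pm v_i\} \subseteq B_\infty^d$ deterministically. Taking $\theta = e_1$ gives $\max_i \abs{\iprod{v_i,e_1}} = 1$, so the inradius of $K$ is at most $1$ no matter what the matrix $F$ is. Hence $K$ can contain a ball of radius $\sqrt{\log(n/d)}/C$ only when $\log(n/d)\le C^2$, i.e.\ only when $n = O(d)$. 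Once $n$ is even polynomially larger than $d$ (still well within $2d \le n \le 2^d$), the containment you are trying to prove is impossible; correspondingly, your anti-concentration estimate $\Pr(\abs{\iprod{v_i,\theta}}\ge c_\beta\sqrt{\log(n/d)}) \ge (d/n)^\beta$ fails for $\theta$ near a coordinate direction, because there $\abs{\iprod{v_i,\theta}} \le 1$ deterministically and the Rademacher sum does not have a Gaussian-type lower tail.

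The volume lower bound therefore does not follow from $K$ containing a Euclidean ball; one has to work with the truncated body. The shape of the correct statement is a containment of the form $K \supseteq c\,(\sqrt{\log(n/d)}\,B_2^d \cap B_\infty^d)$, where the truncation by the cube is essential, not a technicality. From there one must separately estimate $\vol\bigl(\sqrt{\log(n/d)}\,B_2^d \cap B_\infty^d\bigr)^{1/d}$ and show it is $\Omega(\sqrt{\log(n/d)/d})$ throughout the range $2d\le n\le 2^d$; that is a genuine large-deviations computation for the uniform measure on the cube, not a one-line consequence of $\vol(B_2^d)^{1/d}\sim 1/\sqrt d$. Your polar-duality framing, the pointwise probability calculation away from coordinate directions, the net argument, and the observation that the $n=\Theta(d)$ endpoint needs a chaining/smallest-singular-value argument are all pointing at the right pieces of the LPRT proof, but the reduction to untruncated ball containment is the step that breaks, and it breaks hard rather than being a borderline issue. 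The derandomization paragraph is also just a pointer rather than an argument, but the ball-containment gap is the substantive one to fix first.
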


%\begin{remark}
We are mostly interested in the range where $n\gg d\log d$ in which case
the theorem was already proved by Giannopoulos and Hartzoulaki~\cite{GiannopoulosHa02}
(up to a weaker bound in the probability and without the explicit construction).
%\end{remark}

The bound in~(\ref{eq:vol}) is tight up to constant factors.
A well known result~\cite{BaranyFu88} shows that
the volume of the convex hull of any $n$ points on the sphere in $\Rd$
of radius $\sqrt{d}$ is bounded by
%
%\begin{theorem}[e.g., \cite{BaranyFu88}]
%\label{thm:uppervol}
%Let $K$ denote the convex hull of $n$ points on the surface
%$\sqrt{d}\mathbb{S}^{d-1}$. Then,
\begin{equation}
\vol(K)^{1/d}\le O(1)\sqrt{\log(n/d)/d}\mper
\end{equation}
%\end{theorem}
%
Notice, that in our case
%\[
$K=FB_1^n\sse B_\infty^d\sse\sqrt{d}B_2^d$
%\]
and in fact
the vertices of $K$ are points on the $(d-1)$-dimensional sphere of radius
$\sqrt{d}$. However,
equation~(\ref{eq:vol}) states that the normalized volume of the random
polytope $K$ will be proportional to the volume of the Euclidean ball
of radius $\sqrt{\log(n/d)}$ rather than $\sqrt{d}.$
When $d\gg\log n$, this means that the volume of $K$ will be tiny compared to
the volume of the infinity ball~$B_\infty^d$.
%
%The next corollary summarizes all previous bounds.
%
%\begin{corollary}\label{cor:volume}
%Let $0<d\le n/2$. Then,
%for all $F\in[-1,1]^{d\times n}$, we have
%\[\textstyle
%\vol(FB_1^n)^{1/d}\le O(1)\min\left\{1,\sqrt{\frac{\log(n/d)}d}\right\}.
%\]
%When $F\in\{-1,1\}^{d\times n}$ is chosen at random,
%then with high probability,
%\[\textstyle
%\vol(FB_1^n)^{1/d}\ge
%\Omega(1)\min\left\{1,\sqrt{\frac{\log(n/d)}d}\right\}.
%\]
%\end{corollary}
%
By combining the volume lower bound with Theorem~\ref{thm:volume},
we get the following lower bound on the error of private mechanisms.
\begin{theorem}\label{thm:lower}
Let $\eps>0$ and $0<d\le n/2$. Then, for almost all
matrices $F\in\{-1,1\}^{d\times n}$, every
$\eps$-differentially private mechanism $M$
must have
\begin{equation}
\err(M,F)\ge\Omega(d/\eps)\cdot\min\left\{\sqrt{d},\sqrt{\log(n/d)}\right\}.
\end{equation}
\end{theorem}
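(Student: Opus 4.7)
The plan is to combine Theorem~\ref{thm:volume} with the quoted volume estimate of Litvak--Pajor--Rudelson--Tomczak-Jaegermann for the random polytope $K = FB_1^n$. Since Theorem~\ref{thm:volume} converts any bound $\vol(K)^{1/d} \ge v$ into an error lower bound $\Omega(\eps^{-1} d\sqrt{d} \cdot v)$, the task reduces to showing that with probability $1 - o(1)$ over a random Bernoulli $F$, one has $\vol(K)^{1/d} \ge \Omega(\min\{1, \sqrt{\log(n/d)/d}\})$. This will then produce the claimed error lower bound of $\Omega(\eps^{-1} d) \cdot \min\{\sqrt{d}, \sqrt{\log(n/d)}\}$ after elementary simplification.

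I would split the argument on the range of $n$. In the regime $2d \le n \le 2^d$, I invoke the Litvak et al.\ theorem directly to obtain $\vol(K)^{1/d} \ge \Omega(\sqrt{\log(n/d)/d})$ with probability $1 - \exp(-\Omega(d^\beta n^{1-\beta}))$ for any $\beta \in (0, \tfrac12)$, which is $1 - o(1)$. Substituting into Theorem~\ref{thm:volume} yields an error lower bound of $\Omega(\eps^{-1} d \sqrt{\log(n/d)})$; since $\log(n/d) \le d$ throughout this range, this equals $\Omega(\eps^{-1} d) \cdot \min\{\sqrt{d}, \sqrt{\log(n/d)}\}$, matching the claim.

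For the remaining regime $n > 2^d$ I reduce to the previous case. Because $K$ is the symmetric convex hull of the columns of $F$, the subpolytope $K' \subseteq K$ obtained by using only the first $2^d$ columns of $F$ satisfies $\vol(K)^{1/d} \ge \vol(K')^{1/d}$. Applying the Litvak et al.\ bound to this $d \times 2^d$ submatrix (setting $n' = 2^d$) gives $\vol(K')^{1/d} \ge \Omega(\sqrt{(d - \log d)/d}) = \Omega(1)$ with probability $1 - o(1)$, and so $\vol(K)^{1/d} \ge \Omega(1)$ as well. Theorem~\ref{thm:volume} then yields error $\Omega(\eps^{-1} d\sqrt{d})$, and since $\min\{\sqrt{d}, \sqrt{\log(n/d)}\} \le \sqrt{d}$ always, this matches the claim in this regime too.

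I do not anticipate a serious obstacle: the entire content sits inside the two cited theorems, and only elementary bookkeeping is needed to combine them across the two regimes and to confirm that the exponentially small failure probability in the Litvak et al.\ bound remains $o(1)$ uniformly in the relevant range of $n$ and $d$. The mildest care is in the reduction for $n > 2^d$, where I must observe that adding columns only enlarges the symmetric convex hull; this gives monotonicity of $\vol(K)^{1/d}$ in $n$ (for a fixed prefix of columns), which is what lets the single application of Litvak et al.\ at $n' = 2^d$ carry the bound to arbitrarily large $n$.
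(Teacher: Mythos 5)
Your proof is correct and follows the paper's intended route: combine the packing-based volume lower bound (Theorem~\ref{thm:volume}) with the Litvak--Pajor--Rudelson--Tomczak-Jaegermann volume estimate for random symmetric polytopes. The one place you had to supply your own idea is the regime $n > 2^d$, which the cited volume theorem does not directly cover; the monotonicity observation (adding columns to $F$ only enlarges $K = FB_1^n$, so one may apply the volume bound to the submatrix formed by the first $2^d$ columns) cleanly reduces this to the in-range case and gives $\vol(K)^{1/d} \ge \Omega(1)$ with overwhelming probability, hence error $\Omega(\eps^{-1} d\sqrt{d})$. The paper, by contrast, states Theorem~\ref{thm:lower} as an immediate consequence of the volume bound without spelling out this reduction; the nearby Fact~\ref{lem:smalld} handles $d \le \log n$ only by exhibiting one explicit matrix (the full hypercube), which does not by itself justify the \emph{almost all $F$} quantifier in the theorem statement. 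Your submatrix-plus-monotonicity argument closes that small gap. The only loose thread, shared with the paper, is the corner case $d = 1$, where the packing argument in Theorem~\ref{thm:volume} formally requires $d \ge 2$; that case requires a separate (and elementary, e.g.\ one-dimensional Laplace optimality) argument, but it is tangential to the point of the theorem.
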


\subsection{A separation result.}
We use this paragraph to point out that our lower bound immediately implies a
separation between approximate and exact differential privacy.

Theorem~\ref{thm:gaussian} gives a mechanism providing $\delta$-approximate
$\eps$-differential privacy 
with error $o(\eps^{-1}\sqrt{\log (n/d)})$ as long
as $\delta\ge 1/n^{o(1)}.$ Our lower bound  in Theorem~\ref{thm:lower}
on the other hand states that the error of any $\eps$-differentially private
mechanism must be $\Omega(\eps^{-1}\sqrt{\log(n/d)})$
(assuming $d\gg\log(n)$). We get the strongest separation when $d\le\log(n)$ and
$\delta$ is constant. In this case, our lower bound is a factor $\sqrt{d}$
larger than the upper bound for approximate differential privacy.

\subsection{Upper bound on average Euclidean norm}

Klartag and Kozma~\cite{KlartagKo09}
recently gave a bound on the quantity $\E_{z\sim K}\|z\|$ when
$K=FB_1^n$ for random~$F.$

\begin{theorem}[\cite{KlartagKo09}]
Let $F$ be a random $d\times n$ Bernoulli matrix and put $K=FB_1^n$. Then,
there is a constant $C>0$ so that with probability greater than $1-Ce^{-O(n)}$,
\begin{equation}
\frac1{\vol(K)}\int_{z \in K}\|z\|^2\rd z \le C\log(n/d).
\end{equation}
\end{theorem}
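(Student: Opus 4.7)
The plan is to write $\frac{1}{\vol(K)}\int_K \|z\|^2 \rd z = \mathrm{tr}(M_K)$, where $M_K$ is the covariance matrix of the uniform distribution on $K = FB_1^n$, and then exploit distributional symmetries of $F$ to reduce the problem to bounding a single scalar.

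The first step is a symmetrization argument. The law of the random Bernoulli matrix $F$ is invariant under permutations of the rows and under independent sign flips of individual rows. Acting on $K$, both of these operations are orthogonal transformations: a row permutation $\pi$ sends $K$ to $P_\pi K$ and a sign flip $D$ sends $K$ to $DK$, whence $M_K$ transforms by $M_K \mapsto P_\pi M_K P_\pi^{T}$ and $M_K \mapsto DM_K D^{T}$ respectively. Consequently $\E_F[M_K]$ commutes with the full group of signed coordinate permutations, which forces $\E_F[M_K] = \lambda I_d$ for a single scalar $\lambda$, and $\E_F[\mathrm{tr}(M_K)] = d\lambda$. A concentration estimate for smooth functionals of the iid columns $v_1,\dots,v_n$, combined with an $\eps$-net on the sphere and a union bound, should then show that $M_K$ stays close to $\lambda I_d$ in operator norm with probability $1 - Ce^{-\Omega(n)}$, so that $\mathrm{tr}(M_K) \le C d\lambda$ with the same probability.

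It then suffices to show $\lambda \le C\log(n/d)/d$, or equivalently to bound $\E_{z \in K}[z_1^2]$. Given the approximate isotropy established above, this second moment is (up to constants) $L_K^2 \vol(K)^{2/d}$, where $L_K$ denotes the isotropic constant of $K$. Combined with the B\'ar\'any--F\"uredi volume estimate stated earlier in the paper, $\vol(K)^{1/d} = O(\sqrt{\log(n/d)/d})$, this would yield $\lambda = O(L_K^2 \log(n/d)/d)$, so the proof reduces to the assertion that $L_K = O(1)$ for the random polytope $K$ with high probability.

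The main obstacle is precisely this unconditional bound on the isotropic constant. The hyperplane conjecture, which posits $L_K = O(1)$ for all symmetric convex bodies, is a major open problem; here one must establish it for the specific random body $K = FB_1^n$. The strategy is to exploit the sub-Gaussian structure of the columns: for each unit $u$, $\langle u, v_i\rangle$ is a Rademacher sum with sub-Gaussian tail, so the support function $h_K(u) = \max_i|\langle u, v_i\rangle|$ is tightly concentrated around $O(\sqrt{\log n})$ uniformly in $u$ (modulo an $\eps$-net argument). Combining this control on $h_K$ with the lower bound on $\vol(K)^{1/d}$ from Litvak--Pajor--Rudelson--Tomczak-Jaegermann and a Paouris-type inequality relating the isotropic constant of a symmetric body to its support function and volume yields $L_K = O(1)$; this is the technical heart of Klartag and Kozma's argument. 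The theorem then follows by chaining the estimates together.
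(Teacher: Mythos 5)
This theorem is not proved in the paper at all; it is quoted verbatim as a black-box result of Klartag and Kozma~\cite{KlartagKo09}, so there is no ``paper's own proof'' to compare against. Your job here was really to reprove (or at least carefully outline) an external result from the random-polytope literature.

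Your high-level outline does capture the right structure of an argument, but it is a plan, not a proof, and the plan pushes all of the genuine difficulty into two steps that you leave unjustified. First, the claim that $M_K$ concentrates around $\lambda I_d$ in operator norm with probability $1-Ce^{-\Omega(n)}$ is far from automatic: the entries of $M_K$ are highly nonlinear, non-Lipschitz functionals of the $v_i$ (they involve volume-normalized integrals over a random polytope), and the usual bounded-differences or $\eps$-net machinery does not apply off the shelf. You would need to explain which concentration inequality you are invoking and why $M_K$, as a function of the columns, has the regularity it requires. Second---and this is the crux---you reduce the whole statement to ``$L_K=O(1)$ for this particular random $K$'' and then wave at ``a Paouris-type inequality relating the isotropic constant of a symmetric body to its support function and volume.'' No such general inequality gives $L_K=O(1)$ from a support-function bound plus a volume lower bound; if it did, controlling the mean width would settle the hyperplane conjecture for a much wider class of bodies than is currently known. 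The actual content of Klartag--Kozma is precisely a careful, body-specific argument establishing both approximate isotropy and a bounded isotropic constant for $FB_1^n$; your sketch names the destination but not the road. There is also a smaller circularity to watch: you use ``the approximate isotropy established above'' to write $\lambda\approx L_K^2\vol(K)^{2/d}$, but that identity only becomes quantitative once you already have operator-norm control of $M_K-\lambda I_d$ uniformly, which is what you were trying to establish. As written, the argument is not wrong in spirit, but it is incomplete where it matters most.
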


An application of Jensen's inequality thus gives us the following corollary.
\begin{corollary}
Let $\eps>0$ and $0<d\le n/2$. Then, for almost all
matrices $F\in\{-1,1\}^{d\times n}$, the mechanism ${\rm KM}(F,d,\eps)$ is
$\eps$-differentially private with error at most
\begin{equation}
 O(d/\eps)\cdot\min\left\{\sqrt{d},\sqrt{\log(n/d)}\right\}.
\end{equation}
\end{corollary}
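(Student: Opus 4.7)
The plan is to combine Theorem~\ref{thm:knorm} with the Klartag--Kozma second moment bound stated immediately before the corollary, and to use Jensen's inequality to pass from the second moment to the first moment that actually appears in the error bound for the $K$-norm mechanism. Privacy of ${\rm KM}(F,d,\eps)$ for every $F\in\{-1,1\}^{d\times n}$ is immediate from Theorem~\ref{thm:knorm}, so only the error estimate needs to be argued.

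First I would fix $F\in\{-1,1\}^{d\times n}$ in the high-probability event of the Klartag--Kozma theorem, so that $\frac{1}{\vol(K)}\int_{z\in K}\|z\|^2\,\rd z\le C\log(n/d)$. Applying Jensen's inequality to the concave function $\sqrt{\cdot}$ yields
\[
\E_{z\in K}\|z\|_2 \;\le\; \sqrt{\E_{z\in K}\|z\|_2^2} \;\le\; O\!\bigl(\sqrt{\log(n/d)}\bigr).
\]
Plugging this into the error bound of Theorem~\ref{thm:knorm} gives an error of at most $\frac{d+1}{\eps}\cdot O(\sqrt{\log(n/d)}) = O(d/\eps)\sqrt{\log(n/d)}$, which is the second term inside the minimum.

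To obtain the $\sqrt{d}$ branch of the minimum, I would observe the deterministic containment $K=FB_1^n\subseteq B_\infty^d\subseteq\sqrt{d}\,B_2^d$, which holds for every matrix with entries in $[-1,1]$ and hence in particular for every Bernoulli matrix. This gives $\E_{z\in K}\|z\|_2\le\sqrt{d}$ unconditionally, so Theorem~\ref{thm:knorm} yields the competing bound $O(d\sqrt{d}/\eps)$. Taking the better of the two estimates proves the stated bound.

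There is essentially no hard step: the work is entirely done by Theorem~\ref{thm:knorm} and the Klartag--Kozma bound. The only minor point that needs attention is that Theorem~\ref{thm:knorm} bounds $\E\|z\|_2$ whereas Klartag--Kozma bounds $\E\|z\|_2^2$, which is why Jensen's inequality is invoked; this is harmless since we only need an upper bound. The phrase ``for almost all matrices'' in the corollary is exactly the $1-Ce^{-\Omega(n)}$ probability statement inherited from the Klartag--Kozma theorem.
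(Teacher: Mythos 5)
Your proof is correct and matches the paper's intended argument exactly: the paper simply remarks ``An application of Jensen's inequality thus gives us the following corollary,'' relying on Theorem~\ref{thm:knorm}, the Klartag--Kozma second-moment bound for the $\sqrt{\log(n/d)}$ branch, and the deterministic containment $K\subseteq B_\infty^d\subseteq\sqrt{d}\,B_2^d$ for the $\sqrt{d}$ branch. You have filled in precisely those steps, including the privacy claim from Theorem~\ref{thm:knorm} and the identification of ``almost all'' with the $1-Ce^{-O(n)}$ event.
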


\section{Approximately isotropic bodies}
\label{sec:isotropic}

The following definition is a relaxation of nearly isotropic position used
in literature (e.g.,~\cite{KannanLoSi97})
\begin{definition}[Approximately Isotropic Position]
We say a convex body $K \sse \Re^d$ is in $c$-\emph{approximately isotropic
position} if for every unit vector $v \in \Re^d$,
\begin{equation}
\frac1{\vol(K)}\int_K |\langle z,v\rangle|^2 \rd z
\le
c^2L_K^2 \vol(K)^{\frac{2}{d}} \mper
\end{equation}
\end{definition}
%\begin{definition}[Approximately Isotropic Position]
%We say a convex body $K \sse \Re^d$ is in $c$-\emph{nearly isotropic
%position} if for every unit vector $v \in \Re^d$,
%\begin{equation}
%\frac1{c^2}L_K^2\vol(K)^{\frac{2}{d}}
%\le
%\frac1{\vol(K)}\int_K |\langle z,v\rangle|^2 \rd z
%\le
%c^2L_K^2 \vol(K)^{\frac{2}{d}} \mper
%\end{equation}
%\end{definition}

The results of Klartag and Kozma~\cite{KlartagKo09} referred to in the
previous section show that the symmetric convex hull $n$ random points from
the $d$-dimensional hypercube are in $O(1)$-approximately isotropic position and have
$L_K = O(1)$. More generally, the $K$-norm mechanism can be shown to be
approximately optimal whenever $K$ is nearly isotropic.
\begin{theorem}[Theorem~\ref{thm:upper1} restated]\label{thm:upper}
Let $\eps>0$. Suppose $F\colon\Rn\to\Rd$ is a linear map such that $K=FB_1^n$
is in $c$-approximately isotropic position. Then, the $K$-norm mechanism is
$\eps$-differentially private and has error at most
%\[
$O(c L_K)\cdot \vollb(F,\eps).$
%\]
\end{theorem}

\begin{proof}
By Theorem~\ref{thm:knorm}, the $K$-norm mechanism is $\eps$-differentially
private and has error
%\[
$\frac{d+1}\eps\E_{z\sim K}\|z\|.$
%\]
By the definition of the approximately isotropic position, we have:
%\begin{equation}
$\E_{z\sim K} \|z\|^2
\leq d \cdot c^2 L_K^2 \vol(K)^{2/d}.$
%\end{equation}
%
By Jensen's inequality,
\[
\frac{d+1}\eps\E_{z \sim K}\|z\|
\le\frac{d+1}\eps \sqrt{\E_{z\sim K}\|z\|^2}
\le O(\eps^{-1}d\sqrt{d}\cdot\vol(K)^{1/d}cL_K).
\]
Plugging in the definition of $\vollb$ proves the result.
\end{proof}

We can see that the previous upper bound is tight up to a factor of~$cL_K$.
Estimating $L_K$ for general convex bodies is a well-known open problem in
convex geometry.
The best known upper bound for a general convex body
$K\subseteq \Rd$ is $L_K\le O(d^{1/4})$ due to Klartag~\cite{Klartag06},
improving over the estimate $L_K\le
O(d^{1/4}\log d)$ of Bourgain from '91.
The conjecture is that $L_K=O(1)$.

\begin{conjecture}[Hyperplane Conjecture]
\label{conj:hyperplane}
There exists $C>0$ such that for every $d$ and every convex set
$K\sse\Rd$, $L_K<C$.
\end{conjecture}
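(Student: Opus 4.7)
The statement in question is the Hyperplane (Slicing) Conjecture, which is a long-standing open problem in asymptotic convex geometry rather than a theorem admitting a self-contained proof. Consequently, a genuine proof plan is beyond current knowledge, and in this paper it is explicitly used as an unproven assumption. What I would instead sketch is the landscape of partial results and the natural line of attack, noting where the essential obstruction sits.

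The standard starting point is the reduction (going back to Ball and Milman--Pajor) of bounding $L_K$ to proving a uniform lower bound on the volume of central hyperplane sections of an isotropic convex body: showing that the $(d-1)$-dimensional volume of $K\cap H$ is bounded below by $c/L_K$ for every hyperplane $H$ through the centroid. Equivalently, one seeks to control $\|f\|_\infty$ where $f$ is the density of a marginal of the uniform measure on an isotropic $K$; a constant upper bound on this maximal marginal density yields $L_K=O(1)$. The plan would be to carry out this reduction, then attack the marginal bound via concentration of log-concave measures.

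The first route is \emph{Bourgain's approach}: decompose $K$ into a level-set structure of a $\psi_2$-seminorm, use $\psi_1$/$\psi_2$ tail estimates for linear functionals on a log-concave body, and bound the contribution of each level by an $\epsilon$-net / entropy argument. This yields $L_K=O(d^{1/4}\log d)$. The second route is \emph{Klartag's perturbation method}: replace $K$ by a small random convex perturbation $K_\omega$ whose isotropic constant is comparable, and exploit concentration of $\log\vol(K_\omega)$ to remove the logarithmic loss, giving $L_K=O(d^{1/4})$. A third, more recent route couples the problem to the Kannan--Lov\'asz--Simonovits conjecture via stochastic localization (Eldan, Lee--Vempala, Chen), turning the problem into proving a spectral-gap bound for log-concave measures, and this is the current avenue towards eliminating the $d^{1/4}$ factor altogether.

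The principal obstacle in all of these approaches is the same: none of the available tools for log-concave measures is quite strong enough to rule out an isotropic body whose marginals concentrate slightly more than Gaussianly, and passing from polynomial $\psi_\alpha$ control of one-dimensional marginals to a dimension-free bound on $L_K$ requires a genuinely new concentration inequality for high-dimensional log-concave distributions. Since no such inequality is known to this author at the level needed, I would not attempt a full proof here; I would instead, as the paper does, adopt Conjecture~\ref{conj:hyperplane} as a hypothesis and transfer all of its consequences to the error bounds of the $K$-norm mechanism via Theorem~\ref{thm:upper}.
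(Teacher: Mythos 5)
You are correct to observe that this is a \emph{conjecture}, not a theorem: the paper offers no proof, and neither should you. Your summary of the known partial results (Bourgain's $O(d^{1/4}\log d)$, Klartag's $O(d^{1/4})$, the reduction to central sections and marginal densities, and the KLS/stochastic-localization connection) is accurate, and your decision to adopt the conjecture as a hypothesis and track its consequences through Theorem~\ref{thm:upper} is exactly how the paper proceeds.
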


Assuming this conjecture we get matching bounds for approximately isotropic
convex bodies.
\begin{theorem}
Let $\eps>0.$ Assuming the hyperplane conjecture, for every
$F\in[-1,1]^{d\times n}$ such that $K=FB_1^n$ is $c$-approximately isotropic,
the $K$-norm mechanism ${\rm KM}(F,d,\eps)$ is  $\eps$-differentially private with
error at most
\begin{equation}
 O(c)\cdot \vollb(F,\eps) \leq
O(cd/\eps)\cdot\min\left\{\sqrt{d},\sqrt{\log(n/d)}\right\}\mper
\end{equation}
\end{theorem}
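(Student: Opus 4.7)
The proof is essentially an immediate corollary of Theorem~\ref{thm:upper} combined with the Hyperplane Conjecture and standard volume estimates for symmetric convex hulls. My plan is to verify the two inequalities separately.

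For the first inequality, I would invoke Theorem~\ref{thm:upper}, which states that under the $c$-approximate isotropy assumption, the $K$-norm mechanism is $\eps$-differentially private with error at most $O(cL_K)\cdot \vollb(F,\eps)$. Conjecture~\ref{conj:hyperplane} asserts that $L_K=O(1)$ uniformly in $d$, so substituting this bound immediately yields error $O(c)\cdot \vollb(F,\eps)$, proving the first inequality.

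For the second inequality, it suffices to show $\vol(K)^{1/d}\le O(\min\{1,\sqrt{\log(n/d)/d}\})$, since $\vollb(F,\eps)=\eps^{-1}d\sqrt{d}\cdot\vol(K)^{1/d}$. For the bound $O(1)$, I would use that $F\in[-1,1]^{d\times n}$ forces $K=FB_1^n\subseteq B_\infty^d$, and hence $\vol(K)^{1/d}\le \vol(B_\infty^d)^{1/d}=2$, giving the $O(d\sqrt d/\eps)$ branch. For the second branch, I would recall (as noted in Section~\ref{sec:random}) that $K$ is precisely the symmetric convex hull of the columns $v_1,\dots,v_n$ of $F$, each of which satisfies $\|v_i\|_2\le\sqrt{d}$ because the entries lie in $[-1,1]$. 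The Bárány--Füredi estimate~\cite{BaranyFu88} then gives $\vol(K)^{1/d}\le O(1)\sqrt{\log(n/d)/d}$, producing the $O(d\sqrt{\log(n/d)}/\eps)$ branch. Taking the minimum of the two bounds completes the argument.

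There is no real obstacle here: the entire content of the theorem is already packaged in Theorem~\ref{thm:upper}, Conjecture~\ref{conj:hyperplane}, and the (standard) upper bounds on the volume of $K$. The only very minor subtlety is that the Bárány--Füredi bound is usually stated for points lying exactly on the sphere of radius $\sqrt{d}$, whereas our columns merely lie in the ball of radius $\sqrt{d}$; this is resolved by a trivial monotonicity argument (replacing $K$ by the convex hull of an arbitrary set of $n$ points on the sphere containing the columns can only increase the volume), so no new ideas are required.
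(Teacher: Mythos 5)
Your proposal is correct and takes the same route the paper intends (the paper leaves this theorem unproved precisely because it is a direct combination of Theorem~\ref{thm:upper}, Conjecture~\ref{conj:hyperplane}, and the volume bounds recorded in Section~\ref{sec:random} and Fact~\ref{lem:smalld}). The one subtlety you flag — that the B\'ar\'any--F\"uredi bound is stated for points \emph{on} the sphere of radius $\sqrt d$ while the columns of $F$ merely lie in that ball — is real and your monotonicity fix (radially project each nonzero column to the sphere, choose an arbitrary sphere point for any zero column, and note that the symmetric convex hull can only grow) disposes of it cleanly.
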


\section{Non-isotropic bodies}
\label{sec:nonisotropic}

While the mechanism of the previous sections is near-optimal for
near-isotropic queries, it can be far from optimal if $K$ is far from
isotropic. For example, suppose the matrix $F$ has random entries from
$\{+1,-1\}$ in the first row, and (say) from $\{\frac{1}{d^2},-\frac{1}{d^2}\}$ in the
remaining rows. While the Laplacian mechanism will add $O(\frac{1}{\epsilon})$
noise to the first co-ordinate of $Fx$, the $K$-norm mechanism will add noise
$O(d/\epsilon)$ to the first co-ordinate. Moreover, the volume lower
bound $\vollb$ is at most $O(\eps^{-1}\sqrt{d})$. Rotating $F$ by a random
rotation gives, w.h.p., a query for which the Laplacian mechanism adds
$\ell_2$ error $O(d/\eps)$. For such a body, the Laplacian and the $K$-norm
mechanisms, as well as the $\vollb$ are far from optimal.

In this section, we will design a recursive mechanism that can handle
such non-isotropic convex bodies. To this end, we will need to introduce a few more
notions from convex geometry.

Suppose $K\subseteq\Rd$ is a centered convex body, i.e. $\int_K
x\rd x=0.$
The \emph{covariance matrix of $K$}, denoted $M_K$ is the $d\times d$ matrix
with entry $ij$ equal to
%\begin{equation}
$M_{ij} = \frac1{\vol(K)}\int_{K} x_ix_j\rd x.$ That is, $M_K$ is the
covariance matrix of the uniform distribution over $K.$
%\end{equation}
%
%In other words, $M_K$ is the covariance matrix of the
%uniform distribution over $K.$

\subsection{A recursive mechanism}
Having defined the covariance matrix, we can
describe a recursive mechanism for the case when $K$ is not in
isotropic position. The idea of the mechanism is to act differently on
different eigenspaces of the covariance matrix. Specifically, the mechanism
will use a lower-dimensional version of ${\bf KM}(F,d',\eps)$ on subspaces
corresponding to few large eigenvalues.

Our mechanism, called ${\rm NIM}(F,d,\eps)$, is given a linear mapping
$F\colon\Rn\to\Rd,$ and parameters $d\in\mathbb{N},\eps>0.$
The mechanism proceeds recursively by partitioning the convex body $K$ into two
parts defined by the middle eigenvalue of $M_K.$ On one part it will act
according to the $K$-norm mechanism. On the other part, it will descend
recursively. The mechanism is described in Figure~\ref{fig:nim}

\begin{figure}
\label{fig:nim}
\begin{center}
\fbox{
\begin{minipage}{.9\textwidth}
\vspace{1mm}
${\bf NIM}(F,d,\eps)\colon$
\begin{enumerate}
\item Let $K=FB_1^n$. Let $\sigma_1\ge\sigma_2\ge\dots\ge\sigma_d$
denote the eigenvalues of the covariance matrix $M_K.$ Pick a corresponding
orthonormal eigenbasis $u_1,\dots,u_d$.
\item
\label{step:subspaces}
Let $d'=\lfloor d/2\rfloor$ and let $U={\rm span}\{u_1,\dots,u_{d'}\}$ and
$V={\rm span}\{u_{d'+1},\dots,v_d\}.$
\item \label{step:knorm}
Sample $a\sim{\bf KM}(F,d,\eps)\mper$
%Let $a$ denote the output of the $d$-dimensional $K$-norm mechanism
%with parameter $\eps.$
\item If $d=1$, output $P_V a$. Otherwise, output
%\[
${\bf NIM}(P_U F,d',\eps) + P_V a\mper$
%\]
\end{enumerate}
\vspace{1mm}
\end{minipage}
}
\end{center}
\caption{Mechanism for non-isotropic bodies}
\end{figure}

\begin{remark}
The image of $P_UF$ above is a $d'$-dimensional subspace of $\Rd.$ We assume
that in the recursive call ${\rm NIM}(P_UF,d',\eps)$, the $K$-norm mechanism
is applied to a basis of this subspace. However, formally the output is a
$d$-dimensional vector.
\end{remark}

%\Knote{Technically, ${\rm NIM}(P_U F,d',\eps)$ is in $\Re^{d'}$. So need someway to fix that. Maybe redefine $P_V$ as a map into $R^{d-d'}$ and use concatenation instead of '+'. But then one has to rotate back. Maybe define $U$ that rotates the eigenbasis to the standard basis, use concatenation, and output $U^{-1}(concatenation)$.}
%
%\Mnote{With projection defined as usual this shouldn't be a problem. Maybe we
%should say what we mean by $d$-dimensional $K$-norm mechanism. I.e., $K$-norm
%mechanism applied with respect to some basis of length $d$.}

To analyze our mechanism, first observe that the recursive calls
terminate after at most $\log d$ steps.
For each recursive step $m\in\{0,\dots,\log d\}$, let $a_m$ denote the
distribution over the output of the $K_m$-norm mechanism in step 3. Here, $K_m$
denotes the $d_m$-dimensional body given in step $m.$

\begin{lemma}\label{lem:nim-privacy}
The mechanism ${\rm NIM}(F,d,\eps)$ satisfies $(\eps \log d)$-differential privacy.
\end{lemma}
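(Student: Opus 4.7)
The plan is to prove the lemma by induction on $d$, viewing the recursive mechanism as a composition of $O(\log d)$ applications of the $K$-norm mechanism, each of which is $\eps$-differentially private by Theorem~\ref{thm:knorm}.

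First I would unroll the recursion. Starting from dimension $d$, each recursive call passes from dimension $d$ to $d' = \lfloor d/2 \rfloor$, so the recursion terminates after at most $\lceil \log_2 d \rceil$ levels. At the $m$-th level of the recursion, the mechanism picks an eigenbasis of the covariance matrix of some body $K_m = F_m B_1^n$ (where $F_m$ is the projection of $F$ onto the top $d_m$ eigenspace of the previous level), draws a sample $a_m \sim {\bf KM}(F_m, d_m, \eps)$, and then uses the projection of $a_m$ onto the $V$-part as one summand of the final answer while recursing on the $U$-part. Crucially, each sample $a_m$ is a fresh draw from a $K_m$-norm mechanism applied to the same underlying database $x$, and Theorem~\ref{thm:knorm} guarantees that each such mechanism is $\eps$-differentially private in $x$ (since the query $F_m$ is a linear map of $x$ and ${\bf KM}$ depends on the body $F_m B_1^n$, not on $x$).

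Next I would invoke the basic composition theorem for differential privacy: the joint distribution of an adaptively chosen sequence of $k$ mechanisms, each $\eps$-differentially private, is $k\eps$-differentially private. Note that the choice of $F_m$ at level $m$ is a deterministic function of $F_{m-1}$ (via its covariance eigenbasis) and is independent of the database $x$, so no additional privacy cost is paid in choosing which subspace to recurse into. Applying composition across the at most $\lceil \log_2 d \rceil$ levels yields $(\eps \lceil \log_2 d \rceil)$-differential privacy for the tuple $(a_0, a_1, \ldots)$.

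Finally, since the output of ${\rm NIM}(F,d,\eps)$ is a fixed linear combination (sum of the projections $P_{V_m} a_m$) of the samples, it is a data-independent post-processing of the tuple, and post-processing preserves differential privacy. This gives the claimed $(\eps \log d)$-differential privacy bound. The only real subtlety is verifying that the subspace decomposition at each level does not depend on $x$ (so that composition applies in the non-adaptive form and no privacy is leaked through the choice of $U$ and $V$); this is immediate from the mechanism description, since $M_{K_m}$ is a function of $F_m$ alone.
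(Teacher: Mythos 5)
Your proposal is correct and follows essentially the same argument as the paper: each recursive level's $K_m$-norm sample $a_m$ is $\eps$-differentially private by Theorem~\ref{thm:knorm}, the joint distribution over the at most $\log d$ samples is $(\eps \log d)$-differentially private by composition, and the final output is data-independent post-processing of that tuple. Your explicit note that the subspace decomposition $(U,V)$ at each level depends only on $F$ (through $M_{K_m}$) and not on $x$ is a useful clarification that the paper leaves implicit, but it does not change the structure of the argument.
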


\begin{proof}
We claim that for every step $m\in\{0,\dots,\log d\}$, the distribution over
$a_m$ is $\eps$-differentially private. Notice that this claim implies the
lemma, since the joint distribution of $a_0,a_1,\dots,a_m$ is
$\eps\log(d)$-differentially private. In particular, this is true for the
final output of the mechanism as it is a function of $a_0,\dots,a_m.$

To see why the claim is true, observe that each $K_m$ is the $d_m$-dimensional
image of the $\ell_1$-ball under a linear mapping. Hence, the $K_m$-norm
mechanism guarantees $\eps$-differential privacy by Theorem~\ref{thm:knorm}.
\end{proof}

The error analysis of our mechanism requires more work. In particular, we need
to understand how the volume of~$P_U K$ compares to the norm of~$P_Va.$ As a
first step we will analyze the volume of~$P_U K.$

\subsection{Volume in eigenspaces of the covariance matrix}
\label{sec:eigenvol}
Our goal in this section is to express the volume of $K$ in eigenspaces of the
covariance matrix in terms of the eigenvalues of the covariance matrix. This
will be needed in the analysis of our mechanism for non-isotropic bodies.

We start with a formula for the volume of central sections of
isotropic bodies. This result can be found in \cite{MilmanPa89}.

\begin{proposition}\label{prop:volumeisoproj}
Let $K\subseteq\Rd$ be an isotropic body of unit volume. Let $E$ denote a
$k$-dimensional subspace for $1\le k\le d$. Then,
\[
\vol_{k}(E\cap K)^{1/(d-k)} = \Theta\left(\frac{L_{B_K}}{L_K}\right).
\]
Here, $B_K$ is an explicitly defined isotropic convex body.
\end{proposition}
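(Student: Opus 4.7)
My approach follows the Milman--Pajor argument, which associates to the section-volume function of $K$ a convex body whose isotropic constant captures the desired slice volume.

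The first step is to set up the relevant log-concave density. Define $f\colon E^\perp \to [0,\infty)$ by $f(y) = \vol_k((y+E) \cap K)$, the $k$-dimensional volume of the slice of $K$ through $y$ parallel to $E$. By Brunn's theorem (equivalently Pr\'ekopa--Leindler applied to $\mathbf{1}_K$), $f$ is log-concave on $E^\perp$; since $\vol(K) = 1$ it is a probability density, and crucially $f(0) = \vol_k(E \cap K)$ is exactly the quantity we want to estimate.

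Next, introduce the $(d-k)$-dimensional body $B_K \subseteq E^\perp$ via K.~Ball's construction applied to $f$. The relevant properties, which Ball's theorem provides when the construction parameter is chosen to equal the ambient dimension $d-k$, are: (i) $B_K$ is convex (using log-concavity of $f$); (ii) $\vol_{d-k}(B_K) = \Theta(1/f(0))$; and (iii) the second moments of the uniform measure on $B_K$ agree, up to universal constants, with those of $f$. Balancing (ii) and (iii) is the standard normalization step in the Milman--Pajor argument and pins down the precise form of $B_K$.

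To close the loop, observe that for any unit $v \in E^\perp$, Fubini gives $\int_K \langle z,v\rangle^2\,dz = \int_{E^\perp} \langle y, v\rangle^2 f(y)\,dy$, and by isotropy of the unit-volume body $K$ this equals $L_K^2$. Combining with (iii), the covariance of the uniform measure on $B_K$ is $\Theta(L_K^2)\, I$, so $B_K$ is approximately isotropic. Unpacking the definition of the isotropic constant applied to $B_K$ then gives
\[
\Theta(L_K^2) \;=\; L_{B_K}^2 \cdot \vol_{d-k}(B_K)^{2/(d-k)} \;=\; \Theta\big(L_{B_K}^2 / f(0)^{2/(d-k)}\big),
\]
and rearranging yields $f(0)^{1/(d-k)} = \Theta(L_{B_K}/L_K)$, which is the claim. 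The main technical input I would invoke as a black box is Ball's moment-matching property for the construction $f \mapsto K_p(f)$; the rest is log-concavity of marginals, Fubini, and careful bookkeeping of the normalization constants in (ii)--(iii), which is the place I expect the bulk of the work to lie.
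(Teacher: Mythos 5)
The paper does not actually prove Proposition~\ref{prop:volumeisoproj}; it defers to Milman and Pajor~\cite{MilmanPa89}, and your reconstruction correctly follows that argument: Brunn/Pr\'ekopa--Leindler gives log-concavity of the section function $f$, Fubini transfers the isotropy of $K$ into a covariance identity for $f$ on $E^\perp$, and Ball's bodies $K_p(f)$ convert the log-concave density into a convex body whose volume and second moments track $f(0)$ and the covariance of $f$, after which Fact~\ref{fact:isodet} closes the loop. One remark on the bookkeeping you flag as the bulk of the work: the volume identity $\vol_{d-k}(B_K)=1/f(0)$ is exact for $p=d-k$, the covariance identity is exact for $p=d-k+2$, and the $\Theta$'s in (ii)--(iii) come from the standard comparison of $K_p(f)$ and $K_{p'}(f)$ when $|p-p'|=O(1)$; moreover, taking $p=d-k+2$ makes the covariance matrix of the Ball body an exact scalar multiple of the identity on $E^\perp$, since Fubini plus isotropy of $K$ gives $\int_{E^\perp}\langle y,v\rangle\langle y,w\rangle f(y)\,\rd y = L_K^2\langle v,w\rangle$ for $v,w\in E^\perp$, so $B_K$ can be taken to be exactly (not merely approximately) isotropic, matching the proposition's wording.
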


From here on, for an isotropic body $K$, let $\alpha_K = \Omega (L_{B_K}/L_K)$
be a lower bound on $\vol_{k}(E\cap K)^{1/(d-k)}$ implied by the above
proposition. For a non-isotropic $K$, let $\alpha_K$ be $\alpha_{TK}$ when $T$
is the map the brings $K$ into isotropic position. Notice that if the
Hyperplane Conjecture is true, then $\alpha_K=\Omega(1).$ Moreover, $\alpha_K$
is $\Omega(d^{\frac{1}{4}})$ due to the results of~\cite{Klartag06}.

\begin{corollary}
\label{cor:slicevolume}
Let $K\subseteq\Rd$ be an isotropic
body with $\vol(K)=1.$ Let $E$ denote a $k$-dimensional subspace for $1\le k\le
d$ and let $P$ denote an orthogonal projection operator onto the
subspace $E$. Then,
\[
\vol_{k}(P K)^{1/(d-k)} \geq \alpha_K.
\]
\end{corollary}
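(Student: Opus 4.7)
The plan is to deduce the projection bound directly from the section bound in Proposition \ref{prop:volumeisoproj} via an elementary containment. The key observation is that for any $k$-dimensional subspace $E$ and the associated orthogonal projection $P$ onto $E$, one has $K \cap E \subseteq P K$ as subsets of $E$. Indeed, any point $x \in K \cap E$ lies in $K$, so $Px \in PK$; but $x \in E$ implies $Px = x$, so $x \in PK$. Thus $K \cap E \subseteq PK$.

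Taking $k$-dimensional volumes on both sides, I get $\vol_k(K \cap E) \leq \vol_k(PK)$. Now apply Proposition \ref{prop:volumeisoproj}: since $K$ is isotropic of unit volume, $\vol_k(K \cap E)^{1/(d-k)} \geq \alpha_K$ by the very definition of $\alpha_K$ introduced just before the corollary. Combining the two inequalities yields
\[
\vol_k(PK)^{1/(d-k)} \;\geq\; \vol_k(K \cap E)^{1/(d-k)} \;\geq\; \alpha_K,
\]
which is the claimed bound. There is no substantive obstacle here: the entire content of the corollary is the trivial inclusion of a central section in the corresponding projection, packaged together with the nontrivial section volume estimate supplied by the previous proposition.
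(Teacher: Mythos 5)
Your argument is exactly the paper's: the paper's one-line proof observes that $PK$ contains $E\cap K$ since $P$ is the identity on $E$, and then invokes Proposition~\ref{prop:volumeisoproj} through the definition of $\alpha_K$. You have simply spelled out the same containment and volume comparison in more detail.
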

\begin{proof}
Observe that the $PK$ contains $E \cap K$ since $P$ is the identity on $E.$
\end{proof}
%
%\Knote{Not yet sure what the right notion of projection is. Having P be in
%$\Re^{n\times n}$ makes $PK$ a body in $n$ dimensions, and it is not
%immediately clear what the definition of $\vol_k (A)$ is for an arbitrary
%body $A \in \Re^n$. So we need the additional fact that PK lies in a
%$k$-dimensional subspace. This definition is better in this regard since now
%$PK$ is in $\Re^k$. But not sure what the right term is for such an
%operator... projection operator onto $E$ seems somewhat ambiguous.} \Mnote{I
%don't see a problem with making it $\Rd\to\Rd.$ This seems to be standard in
%the literature. E.g., MilmanPajor or Pisier simply write $\vol$ even for
%lower-dimensional bodies. This seems to be a convention for the Lebesgue
%measure w.r.t. the some subspace that is clear from the context. I think also
%the Hausdorff measure would give the same result.}

We cannot immediately use these results since they only apply to isotropic
bodies and we are specifically dealing with non-isotropic bodies. The trick is
to apply the previous results after transforming $K$ into an isotropic body
while keeping track how much this transformation changed the volume.

As a first step, the following lemma relates the volume of projections of an
arbitrary convex body $K$ to the volume of projections of $TK$ for some linear
mapping~$T$.

\begin{lemma}\label{lem:volumeproj}
Let $K\subseteq\Rd$ be a symmetric convex body. Let $T$ be a linear map which
has eigenvectors $u_1,\ldots,u_d$ with eigenvalues
$\lambda_1,\ldots,\lambda_d$. Let $1\le k\le d$ and suppose
$E={\rm span}\{u_1,u_2,\dots,u_k\},$ Denote by $P$ be the projection operator
onto the subspace $E.$ Then,
\[
\vol_{k}(PK)
\ge \vol_{k} (PTK)
\prod_{i=1}^k\lambda_i^{-1} \mper
\]
\end{lemma}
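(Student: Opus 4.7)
The plan is to exploit the $T$-invariance of the subspace $E$. Since $u_1,\ldots,u_k$ are eigenvectors of $T$ spanning $E$, we have $T(E)\subseteq E$, and in the natural setting where the eigenbasis $u_1,\ldots,u_d$ is orthonormal (as will be the case in the application, since $T$ is derived from the symmetric covariance matrix $M_K$), the complementary subspace $E^\perp=\mathrm{span}\{u_{k+1},\ldots,u_d\}$ is likewise $T$-invariant.

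Step one is to show that the orthogonal projection $P$ onto $E$ commutes with $T$. Given any $x\in\Rd$, decompose $x=x_E+x_{E^\perp}$ along the orthogonal sum $\Rd=E\oplus E^\perp$. By the $T$-invariance of each summand, $Tx_E\in E$ and $Tx_{E^\perp}\in E^\perp$, so $PTx=Tx_E=T(Px)$. It follows that $PTK=TPK$, and because $PK\subseteq E$, this set equals $(T|_E)(PK)$.

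Step two applies the change-of-variables formula inside the $k$-dimensional subspace $E$: $\vol_k((T|_E)(PK))=|\det T|_E|\cdot\vol_k(PK)$. Since the restriction $T|_E$ has eigenvalues exactly $\lambda_1,\ldots,\lambda_k$ (on the basis $u_1,\ldots,u_k$), its determinant is $\prod_{i=1}^k\lambda_i$, so $\vol_k(PTK)=\prod_{i=1}^k|\lambda_i|\cdot\vol_k(PK)$. Rearranging gives $\vol_k(PK)=\vol_k(PTK)\prod_{i=1}^k|\lambda_i|^{-1}$, which implies the claimed inequality (in fact with equality whenever the $\lambda_i$ are positive, as they are for the covariance-isotropization map used later).

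The main obstacle is conceptual rather than computational: one must ensure that the projection $P$ truly commutes with $T$, which relies on the eigenbasis being orthonormal so that $E^\perp$ coincides with the span of the remaining eigenvectors. With this in hand the proof collapses to the one-line identity $PT=TP$ together with the Jacobian computation; in a more general setting one would instead use the oblique projection along $\mathrm{span}\{u_{k+1},\ldots,u_d\}$, which commutes with $T$ by construction and yields the same volume identity.
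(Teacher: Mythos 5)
Your proof is correct and takes essentially the same approach as the paper's: both reduce to an orthonormal eigenbasis and then carry out the Jacobian computation within $E$, with the commutation $PT=TP$ on which you rely being exactly the content of the paper's identity $P = SPT$ with $S=\mathrm{diag}(\lambda_1^{-1},\dots,\lambda_k^{-1},0,\dots,0)$. Both arguments actually yield equality, and both implicitly use that the eigenbasis is orthonormal (the paper's rotation-to-standard-basis step needs this too), which is satisfied in the intended application where $T$ arises from the symmetric matrix $M_K$.
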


\begin{proof}
For simplicity, we assume that the eigenvectors of $T$ are the standard basis vectors $e_1,\ldots,e_d$; this is easily achieved by applying a rotation to $K$.
Now, it is easy to verify that
$P = PT^{-1} T  = S PT$
where
$S=\rm{diag}(\lambda_1^{-1},\lambda_2^{-1},\ldots, \lambda_k^{-1},0,\dots,0)$.
Thus we can write
\begin{displaymath}
\vol_k(PK)
= \det(S_{|E}) \vol_k (PTK) =\frac1{\prod_{i=1}^k \lambda_i} \vol_k(P
TK)\mper\qedhere
\end{displaymath}
\end{proof}
Before we can finish our discussion,
we will need the fact that the isotropic constant of $K$ can be
expressed in terms of the determinant of $M_K.$
\begin{fact}[\cite{Giannopoulos03,MilmanPa89}]
\label{fact:isodet}
Let $K\sse\Rd$ be a convex body of unit volume. Then,
\begin{equation}
L_K^2\vol(K)^{\frac{2}{d}} = \det(M_K)^{1/d}.
\end{equation}
Moreover, $K$ is in isotropic position iff $M_K =
L_K^2\vol(K)^{2/d}I$.
\end{fact}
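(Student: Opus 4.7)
The plan is to reduce both claims to a direct change-of-variables computation, using the fact that the isotropic constant $L_K$ is defined through the volume-preserving linear map $T$ bringing $K$ into isotropic position.

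First I would handle the ``moreover'' part, which is essentially unpacking the definition. For any unit vector $v\in\Rd$ one has $v^\top M_K v = \frac{1}{\vol(K)}\int_K \iprod{z,v}^2\rd z$. Hence the isotropic condition $\frac{1}{\vol(K)}\int_K \iprod{z,v}^2\rd z = L_K^2\vol(K)^{2/d}$ for all unit $v$ is equivalent to saying that the quadratic form $v\mapsto v^\top M_K v$ is constant on the unit sphere, which happens iff $M_K = L_K^2\vol(K)^{2/d}\,I$.

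Next, for the determinant identity, assume first that $K$ is in isotropic position and has unit volume. By the equivalence just established, $M_K = L_K^2 I$, so $\det(M_K) = L_K^{2d}$ and both sides of the claimed identity equal $L_K^2$. For a general convex body $K$ of unit volume, let $T$ be the volume-preserving linear transformation such that $TK$ is in isotropic position; then $L_K = L_{TK}$ by definition. A change of variables $y = Tx$ in $\int_{TK} yy^\top \rd y$ (using that $|\det T|=1$) gives
\begin{equation}
M_{TK} = \frac{1}{\vol(TK)} \int_{TK} yy^\top \rd y = T\left(\frac{1}{\vol(K)}\int_K xx^\top \rd x\right)T^\top = T M_K T^\top.
\end{equation}
Taking determinants and using $\det(T)=\pm 1$ yields $\det(M_K) = \det(M_{TK}) = L_K^{2d}$, which is exactly the claim for unit-volume bodies. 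The statement for general $\vol(K)$ then follows by rescaling $K$ by $\vol(K)^{-1/d}$, since both sides scale by $\vol(K)^{2/d}$ under this normalization.

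The only mild obstacle is keeping track of normalizations: one needs the canonical choice of $T$ that is simultaneously volume-preserving \emph{and} sends $K$ into isotropic position (which exists and is unique up to orthogonal transformations by the preceding fact). Once this is fixed, everything reduces to the transformation law $M_{TK} = TM_KT^\top$ and a determinant computation, so no serious analytic or combinatorial difficulty arises.
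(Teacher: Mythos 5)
The paper states this as a fact imported from \cite{Giannopoulos03,MilmanPa89} and does not include a proof of its own, so there is no internal argument to compare yours against. Your derivation is a correct, standard unwinding of the definitions. The ``moreover'' clause follows, as you note, from $\langle M_K v, v\rangle = \frac{1}{\vol(K)}\int_K\langle z,v\rangle^2\,\rd z$ together with the observation that a symmetric operator whose quadratic form is constant on the unit sphere is a scalar multiple of the identity, and the determinant identity follows from the transformation law $M_{TK}=TM_KT^{\mathsf{T}}$, the fact that the isotropizing map satisfies $\det T=\pm1$, and the base case $\det M_{TK}=L_{TK}^{2d}$ for an isotropic unit-volume body. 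The one spot you should make explicit is the identification of constants in the backward direction of the ``moreover'': if $M_K=cI$ then the defining equation shows $K$ is isotropic for \emph{some} constant $L$ with $L^2\vol(K)^{2/d}=c$, and one then needs $L=L_K$ where $L_K$ is the paper's isotropic constant defined through the isotropizing map $T$; this is immediate since $T$ may be taken to be orthogonal when $K$ is already isotropic (so $L_K=L_{TK}=L$), or it can be read off from the determinant identity once it is established. Your scaling remark for passing from unit volume to general volume is also fine, since $M_{\lambda K}=\lambda^2 M_K$, $\vol(\lambda K)^{2/d}=\lambda^2\vol(K)^{2/d}$, and $L_{\lambda K}=L_K$.
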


We conclude with the following~Proposition~\ref{prop:nivol}.
\begin{proposition}
\label{prop:nivol}
Let $K\subseteq\Rd$ be a symmetric convex body. Let $M_k$ have eigenvectors
$u_1,\ldots,u_d$ with eigenvalues $\sigma_1,\ldots,\sigma_d$.
Let $1\le k\le \lceil\frac{d}{2}\rceil$ with  and suppose
$E={\rm span}\{u_1,u_2,\dots,u_k\},$ Denote by $P$ be the
projection operator onto the subspace $E.$ Then,
\begin{equation}
\vol_{k}(PK)^{1/(d-k)}
\ge \Omega(1)\cdot\alpha_K
\left(\prod_{i=1}^k\sigma_i^{1/2}\right)^{1/(d-k)},
\end{equation}
where $\alpha_K$ is $\Omega(1/d^{\frac{1}{4}})$. Moreover,
assuming the Hyperplane conjecture, $\alpha_K\ge\Omega(1)$.
\end{proposition}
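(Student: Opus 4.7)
The natural strategy is to reduce to the isotropic case by a linear transformation and then appeal to Corollary~\ref{cor:slicevolume}. First, I would observe that both sides of the inequality scale identically under $K \mapsto \lambda K$: the left side scales as $\lambda^{k/(d-k)}$, while the right side also scales as $\lambda^{k/(d-k)}$ since $\sigma_i \mapsto \lambda^2 \sigma_i$ and $\alpha_K$ is invariant under linear transformations (as it is defined through the isotropic version of $K$). Hence I may assume $\vol(K) = 1$.

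Set $T = L_K M_K^{-1/2}$. Since $M_K^{-1/2}$ shares the eigenbasis $u_1,\dots,u_d$ of $M_K$, the map $T$ has the same eigenvectors with eigenvalues $\lambda_i = L_K / \sqrt{\sigma_i}$. A short computation using Fact~\ref{fact:isodet} shows that $TK$ has covariance $T M_K T^{\mathsf{T}} = L_K^2 I$ and volume $L_K^d \det(M_K)^{-1/2} = 1$, so $TK$ is in isotropic position of unit volume with isotropic constant $L_K$; consequently $\alpha_{TK} = \alpha_K$ by definition.

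Now I would invoke Lemma~\ref{lem:volumeproj} with this $T$ and the subspace $E = \mathrm{span}(u_1,\dots,u_k)$, obtaining
\[
\vol_k(PK) \;\ge\; \vol_k(PTK) \prod_{i=1}^k \lambda_i^{-1} \;=\; L_K^{-k} \vol_k(PTK) \prod_{i=1}^k \sqrt{\sigma_i}.
\]
Applying Corollary~\ref{cor:slicevolume} to the isotropic body $TK$ gives $\vol_k(PTK)^{1/(d-k)} \ge \alpha_K$, and substituting and taking the $(d-k)$-th root yields
\[
\vol_k(PK)^{1/(d-k)} \;\ge\; L_K^{-k/(d-k)} \, \alpha_K \left(\prod_{i=1}^k \sqrt{\sigma_i}\right)^{1/(d-k)}.
\]

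The remaining task is to control the prefactor $L_K^{-k/(d-k)}$, and this is where the constraint $k \le \lceil d/2 \rceil$ is essential: it forces $k/(d-k) = O(1)$, so the prefactor is $L_K^{-O(1)}$. Assuming the Hyperplane Conjecture, $L_K = O(1)$ and this is $\Omega(1)$, giving the stated bound directly. Unconditionally, Klartag's estimate $L_K \le O(d^{1/4})$ makes the prefactor at worst $d^{-O(1)}$, and absorbing it into the already-$\Omega(d^{-1/4})$ quantity $\alpha_K$ is the slight sleight of hand that yields the unconditional form. The main obstacle is thus not in the geometric reduction—which is essentially a one-line application of Lemma~\ref{lem:volumeproj} once $T$ is chosen correctly—but in tracking how the isotropic constant enters through the Jacobian of $T$ and in arguing that the resulting $L_K^{-k/(d-k)}$ factor does not degrade the final bound beyond what is claimed.
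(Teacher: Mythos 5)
Your proof is correct and follows the paper's route exactly: bring $K$ into isotropic position via a map built from $M_K^{-1/2}$ (normalized so $TK$ has unit volume), apply Lemma~\ref{lem:volumeproj} with the eigenvalues $\lambda_i = L_K/\sqrt{\sigma_i}$, and then invoke Corollary~\ref{cor:slicevolume} on the isotropic body; the only cosmetic difference is that you fold the volume-normalizing scalar $L_K$ directly into $T$ while the paper keeps $T = M_K^{-1/2}$ and rescales separately.

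Your suspicion about the unconditional form is well placed. Since $k/(d-k)$ can be as large as $2$, the prefactor $L_K^{-k/(d-k)}$ may be as small as $\Omega(d^{-1/2})$ under Klartag's bound $L_K = O(d^{1/4})$, which is weaker than the stated $\alpha_K = \Omega(d^{-1/4})$; absorbing it changes the exponent of $d$, so this is not a mere constant adjustment. The paper's writeup obscures this: it sets $\lambda = L_K^{1/d}$, which would make $\lambda^{-k/(d-k)}$ a constant, but since $\vol(TK)^{1/d} = 1/L_K$ the scaling needed to reach unit volume is $\lambda = L_K$, and with the correct $\lambda$ the paper's prefactor agrees with yours. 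None of this affects the conditional (Hyperplane-Conjecture) version, which is what the downstream Lemma~\ref{lem:nim-error} actually uses, and which both your argument and the paper's establish cleanly.
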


\begin{proof}
Consider the linear mapping $T=M_K^{-\half}\mper$ this is well defined since
$M_K$ is a positive symmetric matrix. It is easy to see that after applying
$T$, we have $M_{TK}=I.$ Hence, by Fact~\ref{fact:isodet}, $TK$ is in
isotropic position and has volume $\vol(TK)^{1/d}=1/L_{TK}=1/L_K,$ since
$\det(M_{TK})=1.$ Scaling $TK$ by $\lambda=L_K^{1/d}$
hence results in $\vol(\lambda TK)=1.$
Noting that $\lambda T$ has eigenvalues
$\lambda\sigma_1^{-\frac12},\lambda\sigma_2^{-\frac12},\ldots,
\lambda\sigma_d^{-\frac12}$,
we can apply Lemma~\ref{lem:volumeproj} and get
\[
\vol_{k}(PK)
\ge \vol_k(P\lambda TK)\prod_{i=1}^k\frac{\sqrt{\sigma_i}}{\lambda}
\]

Since $\lambda TK$ is in isotropic position and has unit volume,
Corollary~\ref{cor:slicevolume} implies that
\begin{equation}
\label{eq:isoprojvol}
\vol_{k}(P\lambda TK)^{1/(d-k)} \geq \alpha_K\mper
\end{equation}
Thus the required inequality holds with an additional $\lambda^{-\frac{k}{d-k}}$ term. By assumption on $k$, $\frac{k}{d-k}$ is at most $2$. Moreover, $\lambda = L_K^{1/d} \leq d^{1/d} \leq 2$, so that this additional term is a constant. As discussed above, $\alpha_K$ is $\Omega(d^{-\frac{1}{4}})$ by~\cite{Klartag06}, and $\Omega(1)$ assuming the Hyperplane Conjecture~\ref{conj:hyperplane}. Hence the claim.
\end{proof}

\subsection{Arguing near optimality of our mechanism}
\label{sec:nim}

Our next lemma shows that the expected squared Euclidean error
added by our algorithm in each step is bounded by the square of the optimum.
We will first need the following fact.
\begin{fact}\label{fact:eigenvalues}
Let $K\subseteq\Rd$ be a centered convex body.
Let $\sigma_1\ge\sigma_2\ge\dots\ge\sigma_d$
denote the eigenvalues of $M_K$ with a corresponding
orthonormal eigenbasis $u_1,\dots,u_d.$ Then, for all $1\le i\le d$,
\begin{equation}
\sigma_i
=\max_{\theta} \E_{x\in K} \langle \theta,x \rangle^2
\end{equation}
where the maximum runs over all $\theta\in\mathbb{S}^{d-1}$ such that $\theta$
is orthogonal to $u_1,u_2,\dots,u_{i-1}.$
\end{fact}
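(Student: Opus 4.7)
The plan is to recognize this as the standard Courant--Fischer variational characterization of eigenvalues applied to the (symmetric, positive semidefinite) covariance matrix $M_K$, so the proof should be short and essentially computational.

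First I would unpack the quadratic form. Using the definition $(M_K)_{ij} = \frac{1}{\vol(K)}\int_K x_i x_j \, \rd x$ and expanding the inner product, for any $\theta \in \mathbb{S}^{d-1}$ one has
\[
\E_{x\in K}\langle \theta, x\rangle^2 = \sum_{j,k} \theta_j \theta_k \E_{x\in K} x_j x_k = \sum_{j,k} \theta_j \theta_k (M_K)_{jk} = \theta^\top M_K \theta,
\]
so the quantity to be maximized is exactly the Rayleigh quotient of $M_K$ restricted to the unit sphere in the orthogonal complement of $\mathrm{span}\{u_1,\dots,u_{i-1}\}$.

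Next I would apply the spectral theorem. Since $M_K$ is real symmetric, writing $\theta = \sum_{j\ge i} c_j u_j$ (using orthogonality to $u_1,\dots,u_{i-1}$) and exploiting $M_K u_j = \sigma_j u_j$ gives
\[
\theta^\top M_K \theta = \sum_{j\ge i} \sigma_j c_j^2 \le \sigma_i \sum_{j\ge i} c_j^2 = \sigma_i,
\]
with equality attained at $\theta = u_i$. This establishes the identity and identifies the maximizer.

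There is essentially no obstacle here; the only minor point worth remarking on is that the expansion of $\langle \theta,x\rangle^2$ legitimately pulls the sum out of the expectation (fine since the sum is finite) and that $M_K$ is symmetric by construction so the spectral theorem applies. Thus the claim follows immediately from Courant--Fischer.
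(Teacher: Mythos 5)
The paper states this as a \emph{Fact} without supplying a proof, treating it as standard linear algebra. Your argument is correct and is the natural proof: the identity $\E_{x\in K}\langle\theta,x\rangle^2 = \theta^\top M_K \theta$ reduces the claim to the Courant--Fischer variational characterization of the eigenvalues of the real symmetric matrix $M_K$, which you carry out correctly via the spectral decomposition $\theta = \sum_{j\ge i} c_j u_j$.
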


\begin{lemma}
Let $a$ denote the random variable returned by the $K$-norm mechanism in
step~(\ref{step:knorm}) in the above description of ${\rm NIM}(F,d,\eps)$. Then,
\[
\Gvollb(F,\eps)^2
\ge \Omega(\alpha_K^2)\E\|P_Va\|_2^2\mper
\]
\end{lemma}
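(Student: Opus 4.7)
The plan is to prove the bound by exhibiting a specific subspace that witnesses a matching volume lower bound. The first step is to evaluate $\E\|P_V a\|^2$ explicitly. In step~\ref{step:knorm} the $K$-norm mechanism produces $a=Fx+rz$ with $r\sim\GammaDist(d+1,\eps^{-1})$ independent of $z$ uniform on $K$; reading $\E\|P_V a\|^2$ as a noise bound (so that one may set $x=0$), independence combined with $\E[r^2]=(d+1)(d+2)/\eps^{2}$ (Fact~\ref{fact:gamma}) and the fact that $u_{d'+1},\ldots,u_d$ diagonalize $M_K$ gives
\[
\E\|P_V a\|^2 \;=\; \E[r^2]\sum_{i=d'+1}^{d}\E\langle u_i,z\rangle^2 \;=\; \frac{(d+1)(d+2)}{\eps^2}\sum_{i=d'+1}^{d}\sigma_i.
\]

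Counterintuitively, I will lower-bound $\Gvollb$ not by projecting onto $V$, but onto the \emph{top} half $U=\mathrm{span}\{u_1,\ldots,u_{d'}\}$. Applying Proposition~\ref{prop:nivol} with $k=d'\le\lceil d/2\rceil$ and $E=U$ yields
\[
\vol_{d'}(P_U K)^{1/(d-d')} \;\ge\; \Omega(\alpha_K)\Paren{\prod_{i=1}^{d'}\sigma_i}^{1/(2(d-d'))}.
\]
Since $(d-d')/d'$ is an absolute constant, raising to this power converts $1/(d-d')$ into $1/d'$ while costing only a bounded power of $\alpha_K$ that may be absorbed into the $\alpha_K^2$ factor in the conclusion. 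By the definition of $\Gvollb$ applied to this projection,
\[
\Gvollb(F,\eps)^2 \;\ge\; \eps^{-2}(d')^{3}\vol_{d'}(P_U K)^{2/d'} \;\ge\; \Omega\Paren{\alpha_K^2\eps^{-2}(d')^3}\Paren{\prod_{i=1}^{d'}\sigma_i}^{1/d'}.
\]

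The final step is a short monotonicity argument exploiting the decreasing order of eigenvalues:
\[
\Paren{\prod_{i=1}^{d'}\sigma_i}^{1/d'} \;\ge\; \sigma_{d'} \;\ge\; \sigma_{d'+1} \;\ge\; \frac{1}{d-d'}\sum_{i=d'+1}^{d}\sigma_i,
\]
where the first inequality uses that each of the top $d'$ eigenvalues dominates $\sigma_{d'}$, and the last uses that the maximum of the bottom $d-d'$ eigenvalues majorizes their average. Substituting this bound into the previous display and using $(d')^{3}/(d-d')=\Theta(d^2)$ delivers $\Gvollb(F,\eps)^2 \ge \Omega(\alpha_K^2)\,\E\|P_V a\|^2$, as desired.

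The main obstacle is selecting the correct projection. The naive choice $P=P_V$ yields $\vol_{d-d'}(P_V K)^{1/(d-d')}\gtrsim (\prod_{i\in V}\sigma_i)^{1/(2(d-d'))}$ from Proposition~\ref{prop:nivol}, and finishing the proof would then require $(d-d')\cdot(\prod_{i\in V}\sigma_i)^{1/(d-d')} \gtrsim \sum_{i\in V}\sigma_i$, which is exactly the \emph{reverse} of AM--GM and can fail by arbitrary factors when the bottom eigenvalues are unevenly distributed. Projecting instead onto the complementary top eigenspace $U$ turns the monotonicity of the eigenvalue ordering into a working inequality chain via the median eigenvalue $\sigma_{d'+1}$, which is the conceptual key.
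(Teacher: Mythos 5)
Your proof is correct and follows essentially the same route as the paper: lower-bound $\Gvollb$ by projecting onto the top eigenspace $U$ and applying Proposition~\ref{prop:nivol}, then close the gap via the chain $\bigl(\prod_{i\le d'}\sigma_i\bigr)^{1/d'}\ge\sigma_{d'}\ge\sigma_{d'+1}\ge\frac{1}{d-d'}\sum_{i>d'}\sigma_i$. The only cosmetic difference is in handling odd $d$ (you absorb an exponent mismatch into the $\alpha_K$ factor, while the paper adjusts the projection dimension to $d'+1$), and your explanation of why projecting onto $V$ would require a reverse AM--GM is a helpful articulation of what the paper leaves implicit.
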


\begin{proof}
For simplicity, we will assume
that $d$ is even and hence $d-d'=d'.$
The analysis of the $K$-norm mechanism (Theorem~\ref{thm:knorm} with $p=2$)
shows that the random variable $a$ returned by the $K$-norm mechanism in
step~(\ref{step:knorm}) satisfies
\begin{align}
\E\|P_Va\|_2^2
 = \frac{\Gamma(d+3)}{\eps^2\Gamma(d+1)}
& = \frac{(d+2)(d+1)}{\eps^2}\E_{z\in K}\|P_V z\|_2^2 \notag \\
& = O\left(\frac {d^2}{\eps^2}\right)
\sum_{i=d'+1}^d \E_{z\in K} \langle z,u_i\rangle^2 \notag \\
& = O\left(\frac {d^2}{\eps^2}\right)
\sum_{i=d'+1}^d {\sigma_i}\tag{by
Fact~\ref{fact:eigenvalues}}\\
&\le O\left(\frac{d^3}{\eps^2}\right)\cdot \sigma_{d'+1}.
\label{eq:errorupper}
\end{align}
On the other hand, by the definition of $\Gvollb$,
\begin{align*}
\Gvollb(F,\eps)^2
& \ge \Omega\left(\frac{d^3}{\eps^2}\right)
\cdot\vol_{d'}(P_UK)^{2/d'} \\
& \ge \Omega\left(\frac{d^3}{\eps^2}\right)
\alpha_K^2 \left(\prod_{i=1}^{d'}\sigma_i\right)^{1/d'}
\tag{by Proposition~\ref{prop:nivol}}\\
& \ge \Omega\left(\frac{d^3}{\eps^2}\right)
\alpha_K^2\sigma_{d'}.
\end{align*}
Since $\sigma_{d'} \ge \sigma_{d'+1}$, it follows that
\begin{equation}
\Gvollb(F,\eps)^2
\ge \Omega(\alpha_K^2)\E\|P_Va\|^2\mper
\end{equation}
The case of odd $d$ is similar except that we define $K'$ to be the projection onto the first $d'+1$ eigenvectors.
\end{proof}

\begin{lemma}\label{lem:nim-error}
Assume the hyperplane conjecture. Then,  the $\ell_2$-error of the
mechanism ${\rm NIM}(F,d,\eps)$ satisfies
\[
\err({\rm NIM},F)\le O(\sqrt{\log(d)}\cdot\Gvollb(F,\eps)).
\]
\end{lemma}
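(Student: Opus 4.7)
The plan is to decompose the total squared $\ell_2$ error across the at most $O(\log d)$ levels of recursion using orthogonality of the subspaces carved out by each recursive call, bound each level's contribution via the previous lemma, and then pass from the second-moment bound to the first-moment statement claimed by the lemma via Jensen's inequality.

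Setting up notation, let $F_0 = F$ and, for $m\ge 1$, let $F_m = P_{U_{m-1}} F_{m-1}$ be the mapping fed into the $(m+1)$-st recursive call, with body $K_m = F_m B_1^n$, ambient dimension $d_m = \lfloor d/2^m \rfloor$, eigenspace decomposition $U_m \oplus V_m$, and noise sample $a_m \sim {\bf KM}(F_m, d_m, \eps)$ drawn in step~\ref{step:knorm} of that call. By construction $V_m \subseteq U_{m-1} \cap \cdots \cap U_0$, so the subspaces $\{V_m\}$ are mutually orthogonal and together partition $\Rd$. Unfolding the recursion, the mechanism's output is $\sum_m P_{V_m} a_m$, and the same orthogonality gives $Fx = \sum_m P_{V_m} Fx = \sum_m P_{V_m} F_m x$ (here I use that $P_{V_m}$ agrees with $P_{V_m} P_{U_{m-1}}\cdots P_{U_0}$ because $V_m$ lies inside the image of that composition). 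Pythagoras then yields
\[
\|\mathrm{output} - Fx\|_2^2 \;=\; \sum_m \|P_{V_m}(a_m - F_m x)\|_2^2.
\]

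The previous lemma, applied at level $m$ to $F_m$ in place of $F$, bounds each summand by $O(\alpha_{K_m}^{-2}) \cdot \Gvollb(F_m, \eps)^2$, and the Hyperplane Conjecture forces $\alpha_{K_m} = \Omega(1)$ uniformly across levels. I also need monotonicity $\Gvollb(F_m, \eps) \le \Gvollb(F, \eps)$: since $F_m$ arises from $F$ by composing with orthogonal projections, any $k$-dimensional orthogonal projection $P$ appearing in the supremum defining $\Gvollb(F_m, \eps)$ is itself an orthogonal projection applicable to $F$ (composing $P$ with the outer projections leaves it an orthogonal projection in $\Rd$, by the containment of subspaces). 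Hence every candidate in the sup for $\Gvollb(F_m, \eps)$ also appears in the sup for $\Gvollb(F, \eps)$, giving the desired inequality.

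Combining the per-level bound with monotonicity and summing over the $O(\log d)$ recursive levels gives $\E\|\mathrm{output} - Fx\|_2^2 \le O(\log d) \cdot \Gvollb(F, \eps)^2$, and Jensen's inequality then produces the claimed bound $\E\|\mathrm{output} - Fx\|_2 \le O(\sqrt{\log d}) \cdot \Gvollb(F, \eps)$. The conceptual heavy lifting is already done in the previous lemma; the only mildly delicate step in the present argument is the monotonicity of $\Gvollb$ under further projections, and that is really just a definitional unwinding, so I expect no serious obstacle.
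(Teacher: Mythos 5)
Your proof is correct and follows essentially the same route as the paper's: orthogonal decomposition of the squared error across the $O(\log d)$ recursive levels, bounding each level via the preceding lemma with $\alpha_{K_m}=\Omega(1)$ under the Hyperplane Conjecture, monotonicity of $\Gvollb$ under projection, and Jensen's inequality to pass to the $\ell_2$-error. You actually spell out two steps the paper asserts without argument (the Pythagorean decomposition $\|\mathrm{output}-Fx\|^2=\sum_m\|P_{V_m}(a_m-F_m x)\|^2$ and the monotonicity $\Gvollb(P_UF,\eps)\le\Gvollb(F,\eps)$), which is a welcome addition.
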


\begin{proof}
We have to sum up the error over all recursive calls of the mechanism. To
this end, let $P_{V_m}a_m$ denote the output of the $K$-norm
mechanism $a_m$ in step $m$ projected to the corresponding subspace $V_m$.
Also, let $a\in\Rd$ denote the final output of our
mechanism. We then have,
\begin{align*}
\E\|a\|_2
&
\le \sqrt{\E\|a\|_2^2}
\tag{Jensen's inequality}\\
& =\sqrt{\sum_{m=1}^{\log d}\E\|P_{V_m}a_m\|_2^2} \\
& \le\sqrt{\sum_{m=1}^{\log d} O(\alpha_{K_m}^{-2})\cdot\Gvollb(F,\eps)^2}
\tag{by Lemma~\ref{lem:nim-error}}\\
& \le O(\sqrt{\log d})\left(\max_m\alpha_{K_m}^{-1}\right)\Gvollb(F,\eps).
\end{align*}
Here we have used the fact that $\Gvollb(F,\eps) \geq \Gvollb(P_UF,\eps)$.
Finally, the hyperplane conjecture implies
%\[
$\max_m \alpha_{K_m}^{-1} = O(1).$
%\]
\end{proof}

\begin{corollary}
Let $\eps>0$. Suppose $F\colon\Rn\to\Rd$ is a linear map. Further,
assume the hyperplane conjecture.  Then, there is an
$\eps$-differentially private mechanism~$M$ with error
\[
\err(M,F)\le O(\log(d)^{3/2}\cdot \Gvollb(F,\eps)).
\]
\end{corollary}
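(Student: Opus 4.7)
The plan is to simply combine Lemma~\ref{lem:nim-privacy} and Lemma~\ref{lem:nim-error} via a privacy rescaling. The mechanism ${\rm NIM}(F,d,\eps)$ as analyzed gives $(\eps \log d)$-differential privacy together with error $O(\sqrt{\log d}\cdot \Gvollb(F,\eps))$, so it is off by a factor of $\log d$ in the privacy parameter. To fix this, I would invoke the mechanism at a scaled-down privacy level, namely take $M$ to be ${\rm NIM}(F,d,\eps/\log d)$.

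First, applying Lemma~\ref{lem:nim-privacy} with parameter $\eps/\log d$ in place of $\eps$, the resulting mechanism satisfies $(\eps/\log d)\cdot \log d = \eps$-differential privacy, which is what the corollary demands. Next, applying Lemma~\ref{lem:nim-error} with the same rescaled parameter gives an error bound of $O(\sqrt{\log d}\cdot \Gvollb(F,\eps/\log d))$. I would then observe from the definition
\[
\Gvollb(F,\eps) = \sup_{k,P}\, \eps^{-1} k\sqrt{k}\cdot \vol_k(PFB_1^n)^{1/k}
\]
that $\Gvollb(F,\cdot)$ scales inversely with the privacy parameter, so that $\Gvollb(F,\eps/\log d) = \log d \cdot \Gvollb(F,\eps)$. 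Plugging this in yields the desired error bound $O(\log^{3/2} d)\cdot \Gvollb(F,\eps)$.

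There is essentially no obstacle here, since all the work has already been done in the two preceding lemmas: Lemma~\ref{lem:nim-privacy} establishes the log-factor loss in privacy that arises from composing $\log d$ recursive calls of the $K$-norm mechanism, and Lemma~\ref{lem:nim-error} bounds the total error by a square-root-of-log-$d$ factor over $\Gvollb$ (using the Hyperplane Conjecture to ensure each $\alpha_{K_m}$ is $\Omega(1)$). The corollary is then just a bookkeeping step that converts the $(\eps \log d)$-privacy guarantee into a genuine $\eps$-privacy guarantee at the cost of an extra $\log d$ factor in the error, which combines with the $\sqrt{\log d}$ already present to give $\log^{3/2} d$.
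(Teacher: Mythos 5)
Your proof is correct and follows exactly the paper's own route: run ${\rm NIM}$ with privacy parameter $\eps/\log d$, invoke Lemma~\ref{lem:nim-privacy} to get $\eps$-differential privacy, invoke Lemma~\ref{lem:nim-error} for the $O(\sqrt{\log d})$ factor, and use the linear scaling $\Gvollb(F,\eps/\log d)=\log d\cdot\Gvollb(F,\eps)$ to collect the total $O(\log^{3/2}d)$. Nothing more is needed.
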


\begin{proof}
The mechanism ${\rm NIM}(F,d,\eps/\log(d))$ satisfies $\eps$-differential
privacy, by Lemma~\ref{lem:nim-privacy}. The error is at most
$\log(d)\sqrt{\log d}\cdot \Gvollb(F,\eps)$ as a direct consequence of
Lemma~\ref{lem:nim-error}.
\end{proof}

Thus our lower bound $\Gvollb$ and the mechanism ${\rm NIM}$ are both within $O(\log^{3/2} d)$ of the optimum.

\section{Efficient implementation of our mechanism}
\label{sec:efficient}

We will first describe how to implement our mechanism in the case where $K$ is
isotropic. Recall that we first sample $R\sim\GammaDist(d,\eps^{-1})$ and then
sample a point $a$ uniformly at random from $R K.$ The first step poses no
difficulty. Indeed when $U_1,\dots,U_d$ are independently distributed
uniformly over the interval $(0,1]$, then a standard fact tells us that
%\[
$\eps^{-1}\sum_{i=1}^d -\ln(U_i) \sim \GammaDist(d,\eps^{-1}).$
%\]
Sampling uniformly from $K$ on the other hand may be hard.
However, there are ways of sampling
nearly uniform points from $K$ using various types of rapidly mixing random
walks. In this section, we will use the \emph{Grid Walk} for
simplicity even though there are more efficient walks that will work for us. 
We refer the reader to the survey of
Vempala~\cite{Vempala05} or the original paper of Dyer, Frieze and
Kannan~\cite{DyerFrKa91} for a description of the Grid walk and
background information. Informally, the Grid walk samples nearly uniformly
from a grid inside $K$, i.e., ${\cal L}\cap K$ where we
take ${\cal L}=\frac1{d^2} \mathbb{Z}^d.$
The Grid Walk poses two requirements on $K$:
\begin{enumerate}
\item Membership in $K$ can be decided efficiently.
\item $K$ is bounded, in the sense that $B_2^d\sse K \sse d B_2^d$.
\end{enumerate}
Both conditions are naturally satisfied in our case where $K=FB_1^n$ for some
$F\in[-1,1]^{d\times n}$. Indeed, $K\sse B_\infty^d\sse\sqrt{d}B_2^d$ and we may
always assume that $B_2^d\sse K$, for instance, by considering
$K'=K + B_2^d$ rather than $K$. This will only increase the noise level
by $1$ in Euclidean distance. Notice that $K'$ is convex.
In order to implement the membership oracle for $K$, we need to be able to
decide for a given $a\in\Rd$, whether there exists an $x\in B_1^n$ such
that $Fx=a$. These constraints can be encoded using a linear program. In the
case of $K'$ this can be done using convex programming~\cite{GrotschelLS94}.

The mixing time of the Grid walk is usually quantified in terms of the total
variation (or $L_1$) distance between the random walk and its stationary
distribution. The stationary distribution of the grid Walk is the uniform
distribution over ${\cal L}\cap K$.
Standard arguments show that an $L_1$-bound gives us $\delta$-approximate
$\eps$-differential privacy where $\delta$ can be made exponentially small in
polynomial time.
In order to get exact privacy ($\delta=0$) we instead need a multiplicative
guarantee on the density of the random walk at each point in $K.$

In Appendix~\ref{sec:mixing}, we show that the Grid Walk actually
satisfies mixing bounds in the relative $L_\infty$-metric which gives us the
following theorem. We also need to take care of the fact that the stationary
distribution is a priori not uniform over $K.$ A solution to this problem is
shown in the appendix as well.
\begin{theorem}
\label{thm:ballwalk}
Let $P_t$ denote the Grid Walk over $K$ at time step $t$. Given a linear
mapping $F\colon\Rn \to\Rd$ and $x\in\Rn$, consider the mechanism $M'$ which
samples $R\sim\GammaDist(d+1,\eps^{-1})$ and then outputs $Ra$ where $a\sim P_t.$
Then, there is some $t\le\poly(d,\eps^{-1})$ such that
\begin{enumerate}
\item $M'$ is $O(\eps)$-differentially private,
\item $\err(M',F)=\err(M,F)+O(1)$, where $M$ denotes the $K$-norm mechanism.
\end{enumerate}
\end{theorem}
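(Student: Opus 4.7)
The plan is to prove both properties of $M'$ by comparing its output distribution multiplicatively to that of the ideal $K$-norm mechanism $M$, so that the analysis of $M$ (Theorem~\ref{thm:knorm}) transfers with only an $O(\eps)$ loss in the privacy parameter and $O(1)$ loss in error. The strategy has three main pieces: setting up the Grid Walk so it is applicable to $K$, obtaining a multiplicative ($\ell_\infty$) mixing bound, and transferring privacy and error bounds.

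First I would reduce to the regime in which the Grid Walk applies, namely $B_2^d \sse K \sse dB_2^d.$ The upper containment is automatic since $K = FB_1^n \sse B_\infty^d \sse \sqrt d\, B_2^d.$ For the lower one, replace $K$ by $K' = K + B_2^d,$ which is convex, symmetric, and contains $B_2^d.$ Sampling from $K'$ instead of $K$ inflates the error by at most the diameter of $B_2^d$, i.e.\ an additive $O(1).$ A membership/separation oracle for $K'$ is provided by solving the linear feasibility system $Fx=a, \|x\|_1\le 1,$ combined with convex programming from~\cite{GrotschelLS94}.

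Next I would invoke the key technical ingredient proved in Appendix~\ref{sec:mixing}: after $t\le\poly(d,\eps^{-1})$ steps, the Grid Walk distribution $P_t$ satisfies the relative $L_\infty$ bound
\[
(1-\eps)\cdot \pi(a) \;\le\; P_t(a) \;\le\; (1+\eps)\cdot \pi(a) \qquad \text{for all } a\in \mathcal{L}\cap K,
\]
where $\pi$ is the stationary distribution. A standard smoothing step (adding a uniform perturbation inside a grid cell of side $1/d^2$) turns this into a density on $K$ multiplicatively $(1\pm O(\eps))$-close to the uniform density on $K$; the additive error introduced by the perturbation is again $O(1)$. Given this, the density $g_x$ of $M'$ at any point $b$ satisfies $g_x(b) = (1\pm O(\eps))\, f_x(b),$ where $f_x$ is the density of the $K$-norm mechanism, since $M'$ and $M$ differ only by replacing the uniform sample on $K$ inside Remark~\ref{rem:gamma} with an $(1\pm O(\eps))$-approximately uniform sample, and the Gamma step is identical in both.

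From multiplicative density closeness, privacy follows immediately: for any measurable $S$ and any $x,x'$ with $\|x-x'\|_1\le 1,$
\[
\frac{g_x(S)}{g_{x'}(S)} \;\le\; \frac{(1+O(\eps))f_x(S)}{(1-O(\eps))f_{x'}(S)} \;\le\; (1+O(\eps))\, e^\eps \;\le\; e^{O(\eps)},
\]
using the privacy of $M$ from Theorem~\ref{thm:knorm}. Similarly, $\E_{b\sim g_x}\|b-Fx\| \le (1+O(\eps))\E_{b\sim f_x}\|b-Fx\|,$ and combining with the $O(1)$ inflation from steps above yields $\err(M',F)\le \err(M,F)+O(1).$

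The main obstacle is the $L_\infty$ mixing step. Standard random walk analyses (conductance/spectral gap) give TV mixing in $\poly(d,\log(1/\delta))$ steps, which only suffices for $\delta$-approximate differential privacy; here we insist on exact $\eps$-DP, so we genuinely need a multiplicative pointwise bound. The route I would take is to run the Grid Walk for enough additional steps after TV mixing so that the chain contracts in relative $L_\infty$ distance to its stationary distribution, exploiting laziness and a uniform lower bound on stationary mass to convert TV contraction into multiplicative contraction — this is the content deferred to Appendix~\ref{sec:mixing}. A secondary subtlety is that the stationary distribution of the Grid Walk is not exactly uniform on $\mathcal{L}\cap K$; this is handled by a rejection/reweighting step against the known cell volumes, which is standard but must be accounted for inside the multiplicative bound.
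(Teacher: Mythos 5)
Your overall strategy is the same as the paper's: (i) replace $K$ by $K'=K+B_2^d$ so the Grid Walk applies, (ii) obtain a relative $L_\infty$ mixing bound (cf.\ the paper's Lemma~\ref{lem:infeps} and Appendix~\ref{sec:mixing}), (iii) smooth the discrete grid sample to a continuous one, and (iv) transfer the privacy and error analysis multiplicatively. That is the right skeleton.

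However, there is a genuine gap at the step where you assert that ``the density $g_x$ of $M'$ at any point $b$ satisfies $g_x(b)=(1\pm O(\eps))f_x(b)$,'' where $f_x$ is the ideal $K$-norm density. This uniform multiplicative closeness to the \emph{ideal} density is false with a fixed grid of side $\beta\approx 1/d^2$. The smoothed grid sample is (nearly) uniform on a body $\hat K$ with $K\subseteq\hat K\subseteq(1+O(\beta))K$, not on $K$ itself, so after multiplying by the Gamma-distributed $r$ the effective norm changes from $\|\cdot\|_K$ to $\|\cdot\|_{\hat K}$. At a point $a$ with $\|a-Fx\|_K=R$, the ratio of the two exponential densities is $\exp(\eps R - \eps\|a-Fx\|_{\hat K})\approx\exp(\eps R\beta)$, which is not $O(\eps)$ uniformly: it diverges as $R\to\infty$. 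Hence the multiplicative $(1\pm O(\eps))$ bound, and with it the one-line privacy transfer, fails in the tail. The paper handles this by making $\beta$ \emph{adaptive to the Gamma sample}, $\beta=\min(\eps/d,1/r)$, which forces $R\beta\leq O(1)$ on the relevant range of $r$; the density calculation in Appendix~\ref{sec:mixing} then shows $g(a)$ is within $\exp(O(\eps))$ of the ideal for every $a$. An alternative fix you could have taken is to compare $M'$ not to the $K$-norm mechanism but to the $\hat K$-norm mechanism (which is itself $\eps$-differentially private since $FB_1^n\subseteq\hat K$ and $\hat K$ is convex), and only then invoke Lemma~\ref{lem:infeps}; this avoids the need for uniform closeness to the $K$-norm density. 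Without one of these fixes, the privacy claim as written does not go through. The rest of your argument (the $K'=K+B_2^d$ reduction, the $O(1)$ additive error, the $L_\infty$-from-TV conversion, the weak-separation-oracle remark) matches the paper and is sound.
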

We conclude that the Grid walk gives us an efficient implementation of our
mechanism which achieves the same error bound (up to constants) and
$\eps$-differential privacy.
\begin{remark}
The runtime stated in Theorem~\ref{thm:ballwalk} depends only upon $d$ and
$\eps^{-1}.$ The polynomial dependence on $n$ only comes in when implementing
the separation oracle for $K$ as described earlier. Since we think of $d$ as
small compared to $n$, the exact runtime of our algorithm heavily depends
upon how efficiently we can implement the separation oracle.
\end{remark}

\subsection{When $K$ is not isotropic}

In the non-isotropic case we additionally need to compute the subspaces $U$
and $V$ to project onto (Step~\ref{step:subspaces} of the algorithm).
Note that these subspaces themselves depend only on
the query $F$ and not on the database $x$. Thus these can be published and the
mechanism maintains its privacy for an arbitrary choice of subspaces $U$ and
$V$. The choice of $U,V$ in Section~\ref{sec:nonisotropic} depended on the
covariance matrix $M$, which we do not know how to compute exactly. We next
describe a method to choose $U$ and $V$ that is efficient such that the
resulting mechanism has essentially the same error. The sampling from $K$ can
then be replaced by approximate sampling as in the previous subsection,
resulting in a polynomial-time differentially private mechanism
with small error.

Without loss of generality, $K$ has the property that $B_2^d\sse K \sse d^2
B_2^d$. In this case, $x_ix_j \leq d^4$ so that with $O(d^4 \log d)$
(approximately uniform) samples from $K$, Chernoff bounds imply that the
sample covariance matrix approximates the covariance matrix well. In other
words, we can construct a matrix $\tilde M_K$ such that each entry of $\tilde
M_K$ is within $neg(d)$ of the corresponding entry in $M_K$. Here and in the
rest of the section, ${\it neg}(d)$ denotes an negligible function bounded above by
$\frac{1}{d^C}$ for a large enough constant $C$, where the constant may vary
from one use to the next. Let the eigenvalues of $\tilde M$ be $\tilde
\sigma_1,\ldots,\tilde \sigma_d$ with corresponding eigenvectors $\tilde
u_1,\ldots,\tilde u_d$. Let $\tilde T$ be the $\tilde M_K^{-\frac{1}{2}}$, and
let $\tilde P$ be the projection operator onto the span of the first $d'$
eigenvectors of $\tilde M_K$. This defines our subspaces $\tilde U$ and
$\tilde V$, and hence the mechanism. We next argue that
Lemma~\ref{lem:nim-error} continues to hold.

First note that for any $i \geq d'+1$
\begin{align*}
\E_{a \in K} \langle a, \tilde u_i\rangle^2
&= \|\tilde u_i^T M_K \tilde u_i\| \\
&= \|\tilde u_i^T \tilde M_K \tilde u_i\|
  + \|\tilde u_i^T (M_K - \tilde M_K) \tilde u_i\|\\
&= \tilde \sigma_i + {\it neg}(d).
\end{align*}
Thus, Equation~\ref{eq:errorupper} continues to hold with $\tilde
\sigma_{d'+1}$ replacing $\sigma_{d'+1}$.

To prove that Proposition~\ref{prop:nivol} continues to hold (with $\tilde
M,\tilde T,\tilde P$ replacing $M,T,P$), we note that the only place in the
proof that we used that $M_K$ is in fact the covariance matrix of $K$ is
(\ref{eq:isoprojvol}), when we require $TK$ to be isotropic. We next argue
that (\ref{eq:isoprojvol}) holds for $\tilde{T}K$ if $\tilde{M_K}$ is a good
enough approximation to $M_K$. This would imply Proposition~\ref{prop:nivol} and
hence the result.

First recall that Wedin's theorem~\cite{Wedin72} states that for non-singular
matrices $R$, $\tilde R$,
\[
\|R^{-1}-\tilde R^{-1}\|_2 \leq
\frac{1+\sqrt{5}}{2}
\|R-\tilde R\|_2 \cdot \max\{\|R^{-1}\|_2^2,\|\tilde R^{-1}\|_2^2\}\mper
\]
Using this for the matrices $M^{\frac{1}{2}},\tilde M^{\frac{1}{2}}$ and using
standard perturbation bounds gives (see e.g.~\cite{KempeM08}):
\begin{equation}
\|\tilde T - T\|_2 \leq O(1)
\|T\|_2^2 \cdot \|\tilde M_K^{\frac{1}{2}} - M_K^{\frac{1}{2}}\|_2\mper
\end{equation}
Since
$\|T\|_2$ is at most $poly(d)$ and the second term is $neg(d)$, we conclude
that $\|\tilde T - T\|_2$ is $neg(d)$. It follows that $TK \subseteq \tilde T
K + neg(d) B_2^d$. Moreover, since $TK$ is in isotropic position, it contains
a ball $\frac{1}{d}B_2^d$. It follows from Lemma~\ref{lem:ballcontainment} in
the appendix that $\frac{1}{2d} B_2^d$ is contained in $\tilde T K$. Thus,
\begin{align*}
\left(1-\tfrac{1}{d}\right) TK
&\subseteq \left(1-\tfrac{1}{d}\right) \tilde T K + neg(d) B_2^d\\
&\subseteq \left(1-\tfrac{1}{d}\right) \tilde T K + neg(d) \tilde T K\\
&\subseteq \tilde T K\mcom
\end{align*}
where the last containment follows from
the fact that $\tilde T K$ is convex and contains the origin. Thus
$(1-\frac{1}{d})\tilde P TK \subseteq \tilde P \tilde T K$. Since
Corollary~\ref{cor:volumesubspace} still lower bounds the volume of $\tilde  P
TK$, we conclude that
\[
\vol_{k}(\tilde P \tilde TK)^{1/k}
\geq \frac{1}{e}\vol_{k}(\tilde P TK)^{1/k}
\geq \frac{\alpha_K^{\frac{d-k}{k}}}{e}\mcom
\]
where we have used the fact that $k
\leq d$ so that $(1-\frac{1}{d})^k \geq \frac{1}{e}$. For $k=d'$,
$\frac{d-k}{k}$ is $\Theta(1)$ so that $\vol_{k}(\tilde P \tilde T
K)^{1/(d-k)} \geq \Omega(\alpha_K)$.  Thus we have shown that up to
constants, (\ref{eq:isoprojvol}) holds for $\vol_{k}(\tilde P \tilde T K)^{1/(d-k)}$
which completes the proof.

\section{Generalizations of our mechanism}
\label{sec:generalize}

Previously, we studied linear mappings $F\colon\Rn\to\Rd$ where $\Rn$ was
endowed with the $\ell_1$-metric. However, the $K$-norm mechanism is
well-defined in a much more general context. The only property of $K$
used here is its convexity. In general, let $\dee$ be an arbitrary domain of
databases with a distance function $\dist$. Given a function
$F : \dee \rightarrow \Re^d$, we could define
%\[
$K_0 = \{(F(x) - F(x'))/{\it dist}(x,x'): x,x' \in \dee\}$
%\]
and let $K$ be the convex closure of $K_0$. Then the $K$-norm mechanism can be
seen to be differentially private with respect to ${\it dist}$. Indeed note
that that $|q(d,a)-q(d',a)| = |F(d)-a|_K - |F(d')-a|_K \leq |F(d)-F(d')|_K
\leq {\it dist}(d,d')$, and thus privacy follows from the exponential
mechanism.

Moreover, in cases when one does not have a good handle on $K$ itself, one can
use any convex body $K'$ containing $K$.

\paragraph{Databases close in $\ell_2$-norm.}
For example,
McSherry and Mironov~\cite{McSherryM09} can transform their input data set so
that neighboring databases map to points within Euclidean distance at most $R$
for a suitable parameter $R$. Thus ${\it dist}$ here is the $\ell_2$ norm and for
any linear query, $K$ is an ellipsoid. 
%\Mnote{was: The fact that $K$ is an ellipsoid has
%another interesting consequence. In this case, we can sample uniformly from
%$K$ very efficiently. Indeed, $K=AB_2^d$ for some linear mapping $A.$ A
%standard fact is that when $z$ is distributed uniformly in $B_2^d$, then $Az$
%is distributed uniformly on $K.$ Hence, the problem of sampling from $K$
%reduces to sampling from $B_2^d.$}
%\Mnote{I think this is incorrect. As soon as the dimension $d$ is less than $n$, you cannot sample from $K$ like that. On the other hand, when $d>n$ we can always sample from $K$ like this even if it is the image of $B_1$ since the mapping is linear.}
%
\paragraph{Local Sensitivity.}
Nissim, Raskhodnikova and Smith~\cite{NissimRS07} define {\em smooth
sensitivity} and show that one can design approximately differentially private
mechanism that add noise proportional to the smooth sensitivity of the query.
This can be significant improvement when the local sensitivity is much smaller
than the global sensitivity. Notice that such queries are necessarily
non-linear. We point out that one can define a local sensitivity analogue of
the $K$-norm mechanism by considering the polytopes
$K_x = {\rm conv}\left\{\frac{F(x')-F(x)}{\dist(x,x')}\colon
x'\in\dee\right\}$
and adapting the techniques of \cite{NissimRS07} accordingly.

\bibliographystyle{alpha}
\bibliography{noise}

\appendix

\section{Mixing times of the Grid Walk in $L_\infty$}
\label{sec:mixing}

In this section, we sketch the proof of Theorem~\ref{thm:ballwalk}.
We will be interested in the mixing properties of Markov
chains over some measured state space $\Omega.$ We
will need to compare probability measures $\mu,\nu$ over the space~$\Omega.$

The relative $L_\infty$-distance is defined as
\begin{equation}
\|\mu/\nu-1\|_{\infty}=\sup_{u\in
\Omega}\left|\frac{d\mu(u)}{d\nu(u)}-1\right|\mper
\end{equation}

For a Markov chain $P$,
we will be interested in the mixing time in the $\infty$-metric.
That is the smallest number $t$ such that $\|P_t/\pi-1\|_\infty\le\eps.$
Here, $P_t$ is the distribution of $P$ at step $t$ and $\pi$ denotes
the stationary distribution of $P.$
The relevance of the $\infty$-norm for our purposes is given by the following
fact.
\begin{lemma}
\label{lem:infeps}
Suppose $M=\{\mu_x\}_{x\in\Rn}$ is an $\eps$-differentially private mechanism
$M$ and suppose $M'=\{\mu_x'\}_{x\in\Rn}$ satisfies
$\max\{\|\mu_x/\mu_x'-1\|_{\infty},\|\mu_x'/\mu_x-1\|_\infty\}\le\eps$
for some $0\le\eps\le1$ and all $x\in\Rn$. Then, $M'$ is
$3\eps$-differentially private.
\end{lemma}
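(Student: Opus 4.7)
The plan is to use the relative $L_\infty$ hypothesis to obtain a two-sided multiplicative sandwich between $\mu_x'$ and $\mu_x$ (pointwise, hence for every measurable set), and then chain this sandwich with the differential privacy guarantee of $M$. This will buy us $(1+\eps)$ factors on both the $x$ and $y$ ends, and the only thing to check is that these factors fit into the allotted budget of $e^{3\eps}$.

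More concretely, I would first observe that the bound $\|\mu_x'/\mu_x - 1\|_\infty \le \eps$ gives, for every $u$ in the state space, $(1-\eps)\mu_x(u) \le \mu_x'(u) \le (1+\eps)\mu_x(u)$, and integrating over a measurable $S\sse\Rd$ yields $(1-\eps)\mu_x(S) \le \mu_x'(S) \le (1+\eps)\mu_x(S)$ (the other direction $\|\mu_x/\mu_x'-1\|_\infty \le \eps$ is not needed for the upper bound, but would provide the analogous sandwich with roles reversed if one wanted it). The same applies to $\mu_y$ and $\mu_y'$.

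Next, for any $x,y$ with $\|x-y\|_1 \le 1$ and any measurable $S$, I would chain three inequalities: the pointwise bound for $x$ gives $\mu_x'(S) \le (1+\eps)\mu_x(S)$; the $\eps$-differential privacy of $M$ gives $\mu_x(S) \le e^\eps \mu_y(S)$; and the pointwise bound for $y$ gives $\mu_y(S) \le (1+\eps)\mu_y'(S)$. Multiplying these yields
\[
\mu_x'(S) \le (1+\eps)^2 e^\eps \,\mu_y'(S).
\]

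The last step is to absorb the $(1+\eps)^2$ factor into the exponential. This follows from the elementary inequality $1+t \le e^t$ for all $t\in\R$, which gives $(1+\eps)^2 \le e^{2\eps}$, hence $(1+\eps)^2 e^\eps \le e^{3\eps}$, and so $\mu_x'(S) \le e^{3\eps}\mu_y'(S)$, establishing $3\eps$-differential privacy of $M'$. I do not anticipate any real obstacle here; the proof is essentially a one-line calculation once the pointwise multiplicative bound is unpacked, and the restriction $0 \le \eps \le 1$ in the hypothesis is only used implicitly to guarantee that $(1\pm\eps)$ are nonnegative so that the sandwich makes sense.
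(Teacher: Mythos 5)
Your proof is correct and follows essentially the same approach as the paper: sandwich $\mu_x'$ by $\mu_x$ (and $\mu_y$ by $\mu_y'$) using the relative $L_\infty$ bound, chain these with the differential privacy of $M$, and absorb the resulting $(1+\eps)^2$ factor via $1+\eps \le e^\eps$. The only cosmetic difference is that the paper converts $1+\eps$ to $e^\eps$ up front and works with density ratios, while you keep the $(1+\eps)$ factors until the end and phrase the argument at the level of measures of sets; the content is identical. One small clarification to your parenthetical: both directions of the hypothesis are in fact used, just for different arguments — $\|\mu_x'/\mu_x - 1\|_\infty \le \eps$ on the $x$ side and $\|\mu_y/\mu_y' - 1\|_\infty \le \eps$ on the $y$ side — which your chain correctly invokes.
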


\begin{proof}
By our second assumption,
\[
\max\left\{\frac{\rd\mu_x(u)}{\rd\mu_x'(u)},
\frac{\rd\mu_x'(u)}{\rd\mu_x(u)}\right\}\le
1+\eps\le e^\eps.
\]
where we used that $1+\eps\le e^\eps$ for $0\le\eps\le1.$

Now, let $x,x'$ satisfy $\|x-x'\|_1\le1$. By the previous inequality, we have
\[
\sup_{u\in\Omega} \frac{\rd \mu_x'(u)}{\rd \mu_{x'}'(u)}
\le\sup_{u\in\Omega} \frac{\rd \mu_x(u)e^\eps}{\rd \mu_{x'}(u)e^{-\eps}}
\le
e^{2\eps}
\sup_{u\in\Omega} \frac{\rd \mu_x(u)}{\rd \mu_{x'}(u)}
\le e^{3\eps}.
\]
In the last inequality, we used the assumption that $M$ is
$\eps$-differentially private. Hence, we have shown that $M'$ is
$3\eps$-differentially private.
\end{proof}

Now consider the grid walk with a fine enough grid (say side length $\beta$).
It is known that a random walk on a grid gets within statistical distance at
most $\Delta$ of the uniform distribution in time that is polynomial in
$d,\beta^{-1}$ and $\log \Delta^{-1}$. Setting $\Delta$ to be smaller than the
$\eps(\beta/d)^d$, we end up with a distribution that is within
$\ell_{\infty}$ distance at most $\eps$ from the uniform distribution on the
grid points in $K$. Let $\hat{z}$ be a sample from the grid walk, and let $z$
be a random point from an $\ell_{\infty}$ ball of radius half the side length
of the grid, centered at $\hat{z}$. Then $z$ is a (nearly) uniform sample from
a body $\tilde{K}$ which has the property that $(1-\beta)K \sse \tilde{K} \sse
(1+\beta)K$.

\subsection{Weak separation oracle}
An \emph{$\eta$-weak} separation oracle for $K'$ is a
blackbox that says `YES' when given $u\in\Rd$ with $(u+\eta B_2^d)\sse K'$ and
outputs `NO' when $u\not\in K'+\eta B_2^d.$ Here, $\eta>0$ is some parameter
that we can typically make arbitrarily small, with the running time depending
on $\eta^{-1}$. Our previous discussion assumed an oracle for which $\eta=0.$
Taking $\eta =\beta$ ensures that the sample above is (nearly) uniform from a
body $\hat{K}$ such that $(1-2\beta)K \sse \tilde{K} \sse (1+2\beta)K$. By
rescaling, we get the following lemma.

\begin{lemma}
Let $K$ be a convex body such that $B_2^d \sse K \sse dB_2^d$,
and let $\beta> 0$. Suppose $K$ is represented by a $\beta$-weak separation
oracle. Then, there is a randomized algorithm $Sample(K,\beta)$
running in time $poly(d,\beta^{-1})$ whose output distribution is within
$\ell_{\infty}$-distance at most $\beta$ from the uniform distribution over a
body $\hat{K}$ such that $K\sse \hat{K} \sse (1+\beta)K$.
\end{lemma}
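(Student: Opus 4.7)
The plan is to formalize the three-sentence sketch that immediately precedes the lemma. I will choose a lattice side length $\gamma := c\beta/d$ for a small absolute constant $c$ (to be tuned at the end), consider the lattice $L := \gamma\Z^d$, and run the Grid Walk of \cite{DyerFrKa91} on the lattice points of a slight enlargement $K^* := (1+\beta/4)K$, using a $\gamma$-weak separation oracle for $K^*$ (obtained by trivially rescaling the one given for $K$). Standard analysis shows that the walk mixes to within total variation distance $\Delta$ of the uniform distribution on these lattice points in time $\poly(d,\gamma^{-1},\log\Delta^{-1})$.

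Since $K^*\sse O(d)B_2^d$, the number of lattice points visited is bounded by $N = (O(d)/\gamma)^d$. Setting $\Delta := \beta/N$, which costs only a $\poly(d,\beta^{-1})$ factor inside the running time because mixing time depends on $\log\Delta^{-1}$, converts the TV guarantee into a relative $\ell_\infty$ guarantee of $\beta$ by the pigeonhole bound: if any single lattice point had relative density deviation exceeding $\beta/N$ from $1/N$, the total variation would exceed $\Delta$. I then smooth the discrete sample $\hat z$ into a continuous one by outputting $z := \hat z + w$ with $w$ uniform on $[-\gamma/2,\gamma/2]^d$. Pointwise convolution with a fixed kernel preserves the relative $\ell_\infty$ distance, so $z$ is within relative $\ell_\infty$ distance $\beta$ of the uniform distribution on
\[
\hat K := (L\cap K^*) + [-\gamma/2,\gamma/2]^d.
\]

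To finish it remains to check that $K\sse\hat K\sse(1+\beta)K$. By the definition of the $\gamma$-weak oracle and the lattice construction, every point of $K^*$ lies within $\ell_\infty$ distance $\gamma/2$ of some lattice point of $K^*$, giving $K^*\sse\hat K$; conversely, each lattice point lies in $K^*+\gamma B_2^d$ (the weak-oracle slack) and each $\gamma/2$-cube contributes another $\gamma\sqrt d/2$ in Euclidean distance, giving $\hat K\sse K^* + O(\gamma\sqrt d)B_2^d$. Since $B_2^d\sse K$, the additive slack is absorbed multiplicatively: $\hat K\sse (1+\beta/4 + O(\gamma\sqrt d))K\sse(1+\beta)K$ once $c$ is chosen so that $\gamma\sqrt d$ is a small enough fraction of $\beta$. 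The inner containment $K\sse K^*\sse \hat K$ is immediate.

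The main obstacle is not conceptual but bookkeeping: composing three independent sources of error — lattice discretization of size $\gamma$, the $\gamma$-slack of the weak oracle, and the TV-to-relative-$\ell_\infty$ blowup by a factor $N$ — so that they all land within the single parameter $\beta$ while keeping the running time $\poly(d,\beta^{-1})$. The use of $B_2^d\sse K$ to convert additive Euclidean error into a multiplicative factor, and the choice $\gamma=\Theta(\beta/d)$ so that $\gamma\sqrt d\ll\beta$, are both essential and follow the template already used informally in the paragraph preceding the lemma.
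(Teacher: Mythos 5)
Your approach matches the paper's: run the grid walk with a fine grid, convert the total-variation guarantee into a relative $\ell_\infty$ guarantee by a pigeonhole argument over the $\poly(d/\gamma)^d$ lattice points (paying only $\log N$ in the mixing time), and smooth by convolving with the indicator of a grid cell, which preserves relative $\ell_\infty$ distance. You also track the $\sqrt d$ geometry of cubes versus balls more carefully than the informal text does, by taking $\gamma=\Theta(\beta/d)$.

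There are two slips. The more substantive one is the claim that a $\gamma$-weak separation oracle for $K^*=(1+\beta/4)K$ with $\gamma=c\beta/d\ll\beta$ is ``obtained by trivially rescaling'' the given $\beta$-weak oracle for $K$. Rescaling a $\beta$-weak oracle for $K$ (querying at $u/(1+\beta/4)$) yields a $\beta(1+\beta/4)$-weak oracle for $K^*$ --- the slack gets slightly \emph{larger}, not smaller. There is no generic way to sharpen the slack of a weak oracle. The paper instead uses $\eta=\beta$ throughout and absorbs the resulting $O(\beta)$ slack by a final rescaling of the body; alternatively one can read the lemma as allowing the slack to be tuned down at time cost $\poly(\eta^{-1})$, which is true for the specific $K=FB_1^n$ used later, but then you should invoke that tunability rather than the rescaling. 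The lesser slip is the intermediate claim $K^*\sse\hat K$: lattice points near the boundary of $K^*$ need not pass the weak membership test, so their cubes are not included in $\hat K$. What does hold --- and what you actually need --- is $K\sse\hat K$: for $z\in K$ the nearest lattice point $\ell$ satisfies $\|\ell-z\|_2\le\gamma\sqrt d/2$, and since $B_2^d\sse K$ one has $z+(\beta/4)B_2^d\sse K^*$, so $\ell+\gamma B_2^d\sse K^*$ once $\gamma(\sqrt d/2+1)\le\beta/4$, and $\ell$ is accepted. Replace the incorrect $K^*\sse\hat K$ step with this direct argument.
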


We now argue that such a (nearly) uniform sample from a body close enough to
$K$ suffices for the privacy guarantee. Our algorithm first samples
$r\sim\GammaDist(d+1,\eps^{-1})$, and then outputs $Fx+rz$ where $z$ is the
output of $Sample(K,\beta)$ for $\beta = \min(\eps/d,1/r)$.

We can repeat the calculation for the density at a point $a$ in
equation~(\ref{eqn:densitycal}). Indeed for a point $a$ with $\|a-Fx\|_K=R$, the
density at $a$ conditioned on a sample $r$ from the Gamma distribution, is $
(1\pm \beta)/\vol(r\hat{K})$ whenever $(a/r) \in \hat{K}$, and zero otherwise.
By our choice of $\beta$, $\vol(\hat{K}) = (1\pm \eps)\vol(K)$. Moreover $(a/r)
\in \hat{K}$ for $r \geq R$ and $(a/r) \not\in \hat{K}$ for $r < R/(1+\beta)$.
Thus the density at $a$ is
\[ g(a) \geq \frac{1\pm
(\eps+\beta)}{\eps^{-d}\Gamma(d+1)} \int_{R}^\infty \frac{e^{-\eps
t}t^d}{\vol(tK)}\rd t =\frac{(1\pm (\eps+\beta))\int_{R}^\infty e^{-\eps t}\rd
t}{\Gamma(d+1)\vol(\eps^{-1}K)} =\frac{(1\pm (\eps+\beta))e^{-\eps
{R}}}{\Gamma(d+1)\vol(\eps^{-1}K)}.
\]
Similarly, $(a/r) \not\in \hat{K}$ for
$r < R/(1+\beta)$ implies that $g(a) \leq \frac{(1\pm (\eps+\beta))e^{-\eps
{R/(1+\beta)}}}{\Gamma(d+1)\vol(\eps^{-1}K)}$. It follows that $g(a)$ is
within an $\exp(O(\eps))$ factor of the ideal density.

Finally, the bound on the moments of the Gamma distribution from
Fact~\ref{fact:gamma} implies that the expected running time of this algorithm
is polynomial in $d,\eps^{-1}$.

\section{Lower bounds for Differential Privacy with respect to Hamming Distance}
\label{sec:lbhamming}
While our lower bounds were proved for differential privacy in the
$\ell_1$-metric, the usual notion of differential privacy
uses Hamming distance instead. In this
section we argue that for small enough~$\eps$, our lower bounds can be
extended to the usual definition. Let the database be a vector $w \in [n]^N$
where each individual has a private value in $[n]$. Such a database can be
transformed to its histogram $x= x(w) \in \Z_+^n$ where $x_i(w)$ denotes the
number of inputs that take value $i$, i.e. $x_i(w) = |\{j: w_j = i\}|$. A
linear query $F$ on the histogram is a sensitivity $1$ query on the database
$w$, and a mechanism $M$ is $\eps$-differentially private with respect to the
Hamming distance on $w$, if and only if it is differentially private with
respect to the $\ell_1$ norm, when restricted to non-negative integer vectors
$x$.

We can then repeat the proof of theorem~\ref{thm:volume}, with minor modifications to handle the non-negative integer constraint.

\begin{theorem}
Let $\eps>0$ and suppose $F\in \{-1,1\}^{d\times n}$ is a linear map and let $K=FB_1^n$.
Then, every $\eps$-differentially private mechanism $M$ for computing $G(w)=Fx(w)$
must have
\begin{equation}
\err(M,G) \ge \Omega(\vollb(F,\eps)),
\end{equation}
whenever $\eps < cd\vol(K)^{1/d}/\sqrt{n}$, for a universal constant $c$.
\end{theorem}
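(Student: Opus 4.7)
The plan is to retrace the proof of Theorem~\ref{thm:volume} almost verbatim, replacing the arbitrary real preimages $x\in\lambda B_1^n$ by non-negative integer histograms of a fixed database size $N$ and replacing the $\ell_1$-transitivity of Fact~\ref{fact:trans} by Hamming-transitivity (which costs a factor of two). Specifically, I would take the same packing $Y\subseteq \lambda K$ of size $e^d$ with pairwise distance $\Omega(\lambda R\sqrt d)$ where $\lambda=d/(2\eps)$ and $R=\vol(K)^{1/d}$, together with real preimages $X$ satisfying $\|x\|_1\le \lambda$, exactly as before. To turn each $x\in X$ into a histogram, I would fix an integer reference $x_0=c\,\mathbf 1$ with $c\ge \lambda$, so that $x_0+x\ge 0$ coordinatewise, and then round $x_0+x$ to an integer histogram $\hat x$.

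The key technical step is controlling the image of the rounding error $v=\hat x-(x_0+x)$ under $F$. If I used worst-case rounding, $\|Fv\|_2$ could be as large as $\Theta(n\sqrt d)$, which would obliterate the packing. Instead I would use randomized rounding, so each $v_j$ has mean zero, lies in $[-1,1]$, and has variance at most $1/4$. Since the entries of $F$ are $\pm 1$, each coordinate of $Fv$ is a sum of $n$ independent centered bounded random variables, giving $\E\|Fv\|_2^2\le dn/4$. Hence some rounding achieves $\|Fv\|_2=O(\sqrt{dn})$. Under the hypothesis $\eps<cdR/\sqrt n$ we have $\lambda R\sqrt d\gg \sqrt{dn}$, so the perturbed set $\{F\hat x-Fx_0 : x\in X\}$ is still an $\Omega(\lambda R\sqrt d)$-packing of $e^d$ points. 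The small constraint that all histograms have a common total $N$ can be enforced by using a sum-preserving dependent rounding scheme (or, alternatively, by working with the add/remove-one-person variant of DP); either adjustment changes $\|Fv\|_2$ by at most $O(\sqrt d)$.

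From here the argument closes as in Theorem~\ref{thm:volume}. Assuming for contradiction that $M$ has error at most $c'\,\vollb(F,\eps)$, Markov's inequality yields $\mu_{\hat x}(B_{\hat x})\ge \tfrac12$ for the Euclidean ball $B_{\hat x}$ of radius $O(\lambda R\sqrt d)$ around $F\hat x$, and these balls are pairwise disjoint since the perturbed packing survives. The Hamming distance between the databases realizing $\hat x$ and $x_0$ is $\|\hat x-x_0\|_1/2 = \|x+v\|_1/2 = O(\lambda+n)$, so Hamming-transitivity gives $\mu_{x_0}(B_{\hat x})\ge \tfrac12\exp(-O(\eps(\lambda+n)))$. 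Summing $e^d$ disjoint copies and using $\eps\lambda=d/2$ together with $\eps n = O(d)$ (which follows from the hypothesis $\eps<cdR/\sqrt n$ in the regime of interest) produces $\sum_{x\in X}\mu_{x_0}(B_{\hat x})>1$, the desired contradiction.

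The principal obstacle is the rounding step: a naive analysis would require $\eps\ll dR/n$, costing an extra factor of $\sqrt n$. The randomized-rounding bound $\E\|Fv\|_2^2\le dn/4$, which exploits the sign structure of $F$, is exactly what saves the missing $\sqrt n$ and gives the sharper hypothesis $\eps<cdR/\sqrt n$ stated in the theorem.
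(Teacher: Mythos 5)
Your overall strategy is the right one and matches the paper's: take the volume packing from Theorem~\ref{thm:volume}, round the real preimages to non-negative integer histograms by randomized rounding, and use the $\pm1$ structure of $F$ to show that the image of the rounding error has $\E\|Fv\|_2^2 \le dn/4 = O(dn)$, so that the hypothesis $\eps < cd\vol(K)^{1/d}/\sqrt n$ guarantees the perturbed points still form an $\Omega(\lambda R\sqrt d)$-packing. That part of your plan is correct and is essentially the paper's argument.

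There is, however, a genuine gap in your transitivity step. You bound the $\ell_1$ (equivalently, Hamming) distance between $\hat x$ and the reference $x_0$ by the crude estimate $\|x+v\|_1 \le \|x\|_1 + \|v\|_1 = O(\lambda + n)$, and then try to finish by asserting $\eps n = O(d)$, claiming this follows from the hypothesis $\eps < cd R/\sqrt n$. It does not: that hypothesis gives only $\eps n < c d R\sqrt n$, and $R\sqrt n$ is generally $\gg 1$ (for instance $R=\Theta(1)$ when $d\le\log n$, making $R\sqrt n=\Theta(\sqrt n)$). So the contribution $\exp(-O(\eps n))$ would wipe out the $\exp(d)$ gain from the packing. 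The fix — which is what the paper actually does — is to observe that randomized rounding preserves the $\ell_1$ norm \emph{in expectation}: since $\hat x - x_0 = \mathrm{round}(x)$ (when $x_0$ has integer entries) and $\E|\mathrm{round}(x)_j| = |x_j|$ coordinatewise, $\E\|\hat x - x_0\|_1 = \|x\|_1 \le \lambda$, so Markov gives $\|\hat x - x_0\|_1 = O(\lambda)$ with probability at least $2/3$. Combining this with your $\ell_2$ concentration bound for $\|Fv\|_2$ by a union bound, there is a single rounding that simultaneously keeps the $\ell_1$ distance $O(\lambda)$ and keeps the perturbed packing intact. With $\|\hat x - x_0\|_1 = O(\lambda)$ the transitivity cost is $\exp(-O(\eps\lambda)) = \exp(-O(d))$, and the contradiction goes through with appropriately tuned constants, with no constraint on $\eps n$ at all. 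In short: you should bound $\|\hat x - x_0\|_1$ by the $\ell_1$ norm of the \emph{rounded} vector rather than by the triangle inequality $\|x\|_1+\|v\|_1$; the latter loses the additive $n$ term that your false claim $\eps n = O(d)$ was meant to absorb.
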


\begin{proof}
Let $R=\vol(K)^{1/d}$.
By Fact~\ref{fact:balls} and our assumption, $(d/4\eps) K = F ((d/4\eps)B_1^n)$
contains an $CRd\sqrt{d}/4\eps$-packing $Y\sse \Re^d$ of size at least $\exp(d)$, for some constant $C$. Let $X\subseteq (d/4\eps)B_1^n$ be a set of arbitrarily chosen preimages of $y \in Y$ so that $|X|=|Y|$ and $FX = Y$.

Now we come up with a similar set $X' \in Z_+^n$. For each $x \in X$, we round each $x_i$ randomly up or down, i.e. $\hat{x}_i = \lceil x_i \rceil$, with probability $(x_i-\lfloor x_i \rfloor)$, and $\lfloor x_i \rfloor$ otherwise. It is easy to check that $E[\hat{x}] = x$. so that with probability $2/3$, $|\hat{x}|_1 \leq 3 |x|_1$. Moreover, $E[F\hat{x}]=Fx$ and each random choice can change $\|F\hat{x}\|$ by at most $\sqrt{d}$. Thus martingale concentration results imply that with probability $2/3$, $\|F\hat{x}-Fx\| \leq  2\sqrt{dn}$. Thus there exists a choice of $\hat{x}$ so that both these events happen. Let $v$ denote the vector $(\lceil d/2\eps\rceil,\lceil d/2\eps\rceil,\ldots,\lceil d/2\eps\rceil)$ and set $x' = \hat{x} + v$. This defines our set $X'$ which is easily seen to be in $Z_+^n$. In fact, $X' \sse v + (\lceil d/2\eps\rceil)B_1^n$. Moreover, for $\eps < CRd/32\sqrt{n}$, $FX'$ is a $CRd\sqrt{d}/8\eps$-packing.

Now assume that $M=\{\mu_x\colon x\in\Rn\}$ is an $\eps$-differentially private
mechanism with error $CRd\sqrt{d}/32\eps$ and lead this to a
contradiction.
By the assumption on the error, Markov's inequality implies that for all $x\in X'$,
%\[
$\mu_x(B_x)\ge\tfrac12,$
%\]
where $B_x$ is a ball of radius $CRd\sqrt{d}/16\eps$ centered at $Fx$. By the packing property above, the balls $\{B_x: x\in X\}$ are disjoint.

Since $\|x'-v\|_1\le (d/2\eps)$, it follows from $\e$-differential privacy with
Fact~\ref{fact:trans} that
\[
\mu_v(B_x)
\ge\exp(-\eps (d/2\eps))\mu_{x}(B_x)\ge\tfrac12\exp(-d/2).
\]
Since the balls $B_x$ are pairwise disjoint,
\begin{equation}
1\ge \mu_0(\cup_{x\in X}B_x)
=\sum_{x\in X}\mu_0(B_x)
\ge \exp(d)\tfrac12\exp(-d/2)
> 1
\end{equation}
for $d\geq 2$. We have thus obtained a contradiction.
\end{proof}

Translating the lower bound from Theorem~\ref{thm:lower} to this setting, we get

\begin{theorem}
Let $\eps\in(0,(c\sqrt{(d/n)})\cdot \min\{\sqrt{d},\sqrt{\log (n/d)}\})$ for a universal constant $c$ and let $d \leq \log n$. Then there exists a linear map $F\in \{-1,1\}^{d\times n}$ such that every $\eps$-differentially private mechanism $M$ for computing $G(w)=Fx(w)$
must have
\begin{equation}
\err(M,G) \ge \Omega(d/\eps)\cdot \min\{\sqrt{d},\sqrt{\log (n/d)} \}.
\end{equation}
\end{theorem}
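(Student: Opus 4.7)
The plan is to instantiate the previous theorem (the Hamming-metric analogue of Theorem~\ref{thm:volume}) with an explicit $\pm 1$ matrix whose image of the $\ell_1$-ball has large normalized volume. Concretely, since we are in the regime $d\le \log n$, Fact~\ref{lem:smalld} supplies an explicit $F\in\{-1,1\}^{d\times n}$ whose columns include all $2^d$ vertices of $\{\pm 1\}^d$ (filling in any remaining columns arbitrarily), so that $K=FB_1^n\supseteq B_\infty^d$ and hence $\vol(K)^{1/d}\ge 2=\Omega(1)$.

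For this $F$, the volume lower bound gives $\vollb(F,\eps)=\eps^{-1}d\sqrt{d}\cdot\vol(K)^{1/d}=\Omega(d\sqrt{d}/\eps)$. The hypothesis of the previous theorem requires $\eps<cd\vol(K)^{1/d}/\sqrt{n}$, which becomes $\eps<\Omega(d/\sqrt{n})$ for our choice of $F$. Since $d\le \log n$ implies $\min\{\sqrt d,\sqrt{\log(n/d)}\}=\sqrt d$, the hypothesis on $\eps$ in the theorem statement, namely $\eps<c\sqrt{d/n}\cdot\min\{\sqrt d,\sqrt{\log(n/d)}\}=cd/\sqrt{n}$, exactly matches what we need. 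Applying the previous theorem therefore yields
\[
\err(M,G)\ge \Omega(\vollb(F,\eps))=\Omega(d\sqrt d/\eps)=\Omega(d/\eps)\cdot\min\{\sqrt d,\sqrt{\log(n/d)}\},
\]
which is precisely the claimed bound.

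The only step requiring any care is checking that the explicit matrix $F$ indeed lies in $\{-1,1\}^{d\times n}$ as demanded by the theorem statement (the hypercube construction does, since every column is a vertex of $\{\pm 1\}^d$), and verifying that the quantitative constant in the hypothesis $\eps<c\sqrt{d/n}\cdot\min\{\sqrt d,\sqrt{\log(n/d)}\}$ is consistent with the constant $c$ appearing in the preceding theorem's hypothesis $\eps<cd\vol(K)^{1/d}/\sqrt n$; both reduce to $\eps<\Omega(d/\sqrt n)$ in the $d\le\log n$ regime so the two constants can be unified by absorbing into~$c$. No further work is needed because the main combinatorial argument (packing plus transitivity of differential privacy, with integer rounding controlled by a martingale concentration) has already been carried out in the proof of the preceding theorem.

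The only conceptual obstacle worth flagging is that the statement as written restricts to $d\le\log n$, which is exactly the regime where the explicit hypercube construction yields $\vol(K)^{1/d}=\Theta(1)$; the complementary regime $d>\log n$ would instead require invoking the Litvak--Pajor--Rudelson--Tomczak-Jaegermann random-polytope volume bound (and the explicit construction it provides), but this is outside the stated range.
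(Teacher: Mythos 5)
Your proposal is correct and follows essentially the same route the paper takes (the paper merely remarks ``Translating the lower bound from Theorem~\ref{thm:lower} to this setting'' and leaves the details implicit): instantiate the Hamming-metric lower bound with the explicit hypercube construction of Fact~\ref{lem:smalld}, for which $K=B_\infty^d$ and $\vol(K)^{1/d}=2$, so that $\vollb(F,\eps)=\Omega(d\sqrt{d}/\eps)$, and observe that the $\eps$-range in the statement feeds the hypothesis $\eps<c'd\vol(K)^{1/d}/\sqrt n$ of the preceding theorem. One small imprecision worth flagging: your assertion that $d\le\log n$ forces $\min\{\sqrt d,\sqrt{\log(n/d)}\}=\sqrt d$ exactly is not literally true when $d$ is near $\log n$ (e.g. $d=\log n$ gives $\log(n/d)=\log n-\log\log n<d$), but it is correct up to a constant factor ($\log(n/d)=\Theta(d)$ throughout the regime $d\le\log n$ for $n$ large), and in any case the direction you actually need --- that $\min\{\sqrt d,\sqrt{\log(n/d)}\}\le\sqrt d$, so that the stated $\eps$-hypothesis implies the one required and $\Omega(d\sqrt d/\eps)$ dominates the claimed bound --- holds unconditionally, so the argument is sound.
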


We remark that this lower bound holds for $N=\Omega(nd/\eps)$.

\section{Dilated Ball containment}
\begin{lemma}
\label{lem:ballcontainment}
Let $A$ be a convex body in $\Re^d$ such that $B_2^d \subseteq A + r B_2^d$ for some $r < 1$. Then a dilation $(1-r)B_2^d$ is contained in $A$.
\end{lemma}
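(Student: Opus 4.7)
The plan is to reduce the statement to a short support function calculation. Recall that for a nonempty closed convex set $C\sse\Rd$, the support function $h_C(u)=\sup_{x\in C}\iprod{x,u}$ satisfies $h_{C+D}=h_C+h_D$ and $h_{\lambda C}=\lambda h_C$ for $\lambda\ge 0$, and that for two such sets $C\sse C'$ if and only if $h_C(u)\le h_{C'}(u)$ for every $u\in\mathbb{S}^{d-1}$. Note also that $h_{B_2^d}(u)=1$ for every unit $u$.

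First, I would translate the hypothesis $B_2^d\sse A+rB_2^d$ into support function form: for every unit vector $u$,
\[
1 \;=\; h_{B_2^d}(u) \;\le\; h_{A+rB_2^d}(u) \;=\; h_A(u)+r,
\]
so $h_A(u)\ge 1-r$ on $\mathbb{S}^{d-1}$. Second, since $h_{(1-r)B_2^d}(u)=1-r$ for unit $u$, this immediately gives $h_{(1-r)B_2^d}(u)\le h_A(u)$ on $\mathbb{S}^{d-1}$. Third, applying the support function characterization of inclusion to the closed convex sets $(1-r)B_2^d$ and $A$ yields $(1-r)B_2^d\sse A$.

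There is no real obstacle here; the only thing to keep in mind is the standing convention that a \emph{convex body} is nonempty, closed, and convex, which is exactly what licenses the support function machinery. If one prefers a direct geometric argument that avoids citing the support function characterization, one can equivalently prove the lemma by iteration: for $y$ with $\|y\|\le 1-r$, the scaled point $y/(1-r)\in B_2^d\sse A+rB_2^d$ can be written as $a_0+rb_0$ with $a_0\in A$ and $b_0\in B_2^d$; iterating the same decomposition on $b_0,b_1,\ldots$ yields
\[
y \;=\; (1-r)\sum_{i=0}^{N-1} r^i a_i \;+\; (1-r)r^N b_{N-1},
\]
and since $r<1$ the residual vanishes as $N\to\infty$. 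The limit expresses $y$ as $\sum_{i\ge 0}(1-r)r^i a_i$ with $a_i\in A$ and weights summing to $1$, so closedness and convexity of $A$ place $y$ in $A$.
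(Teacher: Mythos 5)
Your proof is correct and takes a different route from the paper's. The paper argues by contradiction: it assumes $(1-r)z\notin A$ for some unit $z$, invokes the separating hyperplane theorem to get a unit normal $w$ and offset $b$ with $\langle (1-r)z,w\rangle=b$ and $h_A(w)\le b$, and then exhibits the explicit point $v=(1-r)z+rw\in B_2^d$ whose distance to $A$ is at least $r$, contradicting $B_2^d\sse A+rB_2^d$. Your first argument packages the same underlying duality more cleanly: the support-function characterization of inclusion between closed convex sets is itself an incarnation of the separating hyperplane theorem, and once you use $h_{A+rB_2^d}=h_A+r$, the inequality $h_A\ge 1-r$ on the sphere falls out in one line, with no need to construct a witness point. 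The paper essentially re-derives that characterization in a concrete special case; your phrasing is shorter and avoids the distance computation. Your second (iterative) argument is genuinely more elementary and independent of separation: it uses only convexity, closedness, boundedness of $B_2^d$, and geometric-series convergence, and would be preferable in a setting where one wants to avoid Hahn--Banach-type tools. One small point worth making explicit in the iterative version: the partial sums $(1-r)\sum_{i<N}r^i a_i$ have weights summing to $1-r^N<1$, so you should either renormalize by $1-r^N$ (which lies in $A$ by convexity and $0\in A$, or by averaging with $a_0$) before taking the limit, or note directly that $(1-r)\sum_{i<N}r^i a_i=(1-r^N)\cdot\frac{1}{1-r^N}(1-r)\sum_{i<N}r^i a_i$ is a point of the segment from $0$ to a point of $A$ only if $0\in A$, which is not given. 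The cleanest fix is to observe that $\frac{(1-r)\sum_{i<N}r^i a_i}{1-r^N}\in A$ by convexity, that this sequence converges to $y$ since the residual vanishes and $1-r^N\to 1$, and then invoke closedness of $A$.
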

\begin{proof}
Let $z \in \Re^d$ be a unit vector. Suppose that $z' = (1-r)z \not\in A$. Then
by the Separating Hyperplane theorem~(see, e.g., \cite{BoydV04}), there is a
hyperplane $H$ separating $z'$ from $A$. Thus there is a unit vector $w$ and a
scalar $b$ such that $\langle z',w\rangle-b=0$ and $\langle u,w\rangle-b \leq
0$ for all $u \in A$.  Let $v= z'+ rw$. Then by triangle inequality, $\|v\|
\leq 1$. Moreover,
\[
d(v,A) = \inf_{u\in A} \|u-v\|
\geq \inf_{u\in A}  \langle v-u,w\rangle
\geq b+r - \sup_{u\in A} \langle u,w\rangle
\geq r.
\]
This however contradicts the assumption that that $v \in B_2^d \subseteq A + rB_2^d$. Since $z$ was arbitrary, the lemma is proved.
\end{proof}

\end{document}